\newcommand{\lipics}[1]{}
\newcommand{\arxiv}[1]{#1}
\title{Recognizing 2-Layer and Outer $k$-Planar Graphs}
\author{Yasuaki Kobayashi}{Hokkaido University, Sapporo, Japan}{koba@ist.hokudai.ac.jp}{https://orcid.org/0000-0003-3244-6915}{Supported by JSPS KAKENHI Grant Numbers JP23K28034, JP24H00686, and JP24H00697.}
\author{Yuto Okada}{Nagoya University, Japan \and \url{https://yutookada.com/en}}{pv.20h.3324@s.thers.ac.jp}{https://orcid.org/0000-0002-1156-0383}{Supported by JST SPRING, Grant Number JPMJSP2125 and JSPS KAKENHI, Grant Number JP22H00513 (Hirotaka Ono).}
\author{Alexander Wolff}{Universit\"at W\"urzburg, Germany \and \url{https://www.informatik.uni-wuerzburg.de/en/algo/team/wolff-alexander}}{}{https://orcid.org/0000-0001-5872-718X}{}
\authorrunning{Y.~Kobayashi, Y.~Okada, and A.~Wolff}
\keywords{2-layer $k$-planar graphs, outer $k$-planar graphs, recognition algorithms, local crossing number, bandwidth, \FPT, \XNLP, \XP, \W[$t$]}
\newenvironment{proofsketch}{\proof}{\endproof}
\newcommand{\defproblem}[3]{
  \begin{tcolorbox}%
    \nolinenumbers\vspace*{-1ex}\hspace*{-2ex}
    \begin{minipage}{0.98\textwidth}
      \begin{tabular}{@{}>{\normalsize}l@{~~}>{\normalsize}p{0.92\textwidth}@{}}
        {\sf\bfseries\color{lipicsGray} Problem:} & #1\\[.1ex]
        {\sf\bfseries\color{lipicsGray} Input:} & #2\\[.1ex]
        {\sf\bfseries\color{lipicsGray} Question:} & #3
      \end{tabular}
    \end{minipage}\vspace*{-1ex}
  \end{tcolorbox}
}
\definecolor{defblue}{rgb}{0.121,0.47,0.705}
\definecolor{linkblue}{rgb}{0.098,0.098,0.4392}
\let\emph\relax
\DeclareTextFontCommand{\emph}{\color{defblue}\em}
\newcommand{\YES}{\rm{YES}\xspace}
\newcommand{\vtrue}{\ensuremath{\mathtt{true}}\xspace}
\newcommand{\vfalse}{\ensuremath{\mathtt{false}}\xspace}
\newcommand{\Dp}{\mathtt{draw}}
\newcommand{\XNLP}{{\sf XNLP}\xspace}
\newcommand{\XALP}{{\sf XALP}\xspace}
\newcommand{\DDDD}{\ensuremath{D_{\le i}}\xspace}
\newcommand{\DDD}{\ensuremath{D_{\le i-1}}\xspace}
\newcommand{\DD}{\ensuremath{D_{i-1}}\xspace}
\newcommand{\xx}{\ensuremath{\chi_{i-1}}\xspace}
\newcommand{\OuterkPlanarity}{\textsc{Outer $k$-Planarity}\xspace}
\newcommand{\TwoSidedkPlanarity}{\textsc{Two-Sided $k$-Planarity}\xspace}
\newcommand{\OneSidedkPlanarity}{\textsc{One-Sided $k$-Planarity}\xspace}
\newcommand{\Bandwidth}{\textsc{Bandwidth}\xspace}
\DeclareMathOperator{\bw}{bw}
\DeclareMathOperator{\cross}{cr}
\DeclareMathOperator{\Sum}{sum}
\begin{document}

\maketitle

\begin{abstract}
    The \emph{crossing number} of a graph is the least number of
    crossings over all drawings of the graph in the plane. 
    Computing the crossing number of a given graph is \NP-hard, but
    fixed-parameter tractable (\FPT) with respect to the natural parameter.
    Two well-known variants of the problem are
    \emph{2-layer crossing minimization} and \emph{circular
    crossing minimization}, 
    where every vertex must lie on one of two \emph{layers},
    namely two parallel lines, or a circle, respectively.
    In both cases, edges are drawn as straight-line segments.
    Both variants are \NP-hard,
    but admit \FPT-algorithms with respect to the natural parameter.

    In recent years, in the context of beyond-planar graphs, a local
    version of the crossing number has also received considerable attention.
    A graph is \emph{$k$-planar} if it admits a drawing with at most
    $k$ crossings per edge.
    In contrast to the crossing number, recognizing $k$-planar graphs is \NP-hard
    even if $k=1$ and hence not likely to be \FPT\
    with respect to the natural parameter~$k$.

    In this paper, we consider the two above variants in the local setting.
    The $k$-planar graphs that admit a straight-line drawing with vertices on
    two layers or on a circle are called \emph{2-layer $k$-planar} and
    \emph{outer $k$-planar} graphs, respectively.
    We study the parameterized complexity of the two recognition
    problems with respect to the natural parameter~$k$.
    For $k=0$, the two classes of graphs are exactly the caterpillars
    and outerplanar graphs, respectively,
    which can be recognized in linear time.
    Two groups of researchers independently showed that outer 1-planar graphs can
    also be recognized in linear time [Hong et al., Algorithmica 2015; Auer et al., Algorithmica 2016].
    One group asked explicitly whether outer 2-planar graphs
    can be recognized in polynomial time.

    Our main contribution consists of \XP-algorithms for recognizing
    2-layer $k$-planar graphs and outer $k$-planar graphs, which implies
    that both recognition problems can be solved in polynomial time
    for every fixed~$k$.
    We complement these results by showing that recognizing 2-layer
    $k$-planar graphs is \XNLP-complete and that recognizing outer
    $k$-planar graphs is \XNLP-hard.
    This implies that both problems are \W$[t]$-hard for every $t$ and
    that it is unlikely that they admit \FPT-algorithms.
    On the other hand, we present an \FPT-algorithm for recognizing
    2-layer $k$-planar graphs where the order of the vertices on one
    layer is specified.
\end{abstract}  

\section{Introduction}\label{sec:introduction}

When evaluating the quality of a graph drawing, one of the established
metrics is the number of crossings, whose importance is supported by user
experiments~\cite{purchase-etal-aesthetics-TVCG12}.  Unfortunately,
computing the crossing number of a given graph, that is, the minimum
number of crossings over all drawings of the graph, is
NP-hard~\cite{garey-johnson-cr-NP-SIJADM83}, even for graphs that
become planar after removal of a single
edge~\cite{cabello-mohar-near-planar-NP-hard-SICOMP13}.
On the other hand, the problem is fixed-parameter tractable (\FPT)
with respect to the natural parameter, that is, the number of
crossings~\cite{grohe-cr-FPT-JCSS04,kawarabayashi-reed-cr-FPT-STOC07}.
Many variants of the crossing number have been studied; see Schaefer's
survey~\cite{schaefer-survey-EJC24}.  Two variants with geometric
restrictions have attracted considerable attention: \emph{2-layer crossing
minimization} and \emph{circular} (or \emph{convex}, or \emph{1-page})
\emph{crossing minimization}, where the placement of the vertices is
restricted to two parallel lines (called \emph{layers}) and to a
circle, respectively.  In both cases, edges are drawn as
straight-line segments.  Circular crossing minimization is \NP-hard,
but admits \FPT-algorithms with respect to the natural
parameter~\cite{bannister-eppstein-JGAA18,KobayashiOT17:IPEC:one-page-cm}.
In practice, often the so-called \emph{sifting heuristic} is 
used~\cite{BaurB04:WG:circular-crossing-minimization}.
Circular crossing minimization can be seen as a special case of a 
\emph{book embedding problem}, where vertices must lie on a straight
line, the \emph{spine} of the book, and each edge must be drawn on 
one of a given number of halfplanes called \emph{pages} whose 
intersection is the spine.  In this setting, crossing minimization
is interesting even if the order of the vertices along the spine is
given~\cite{BhoreGMN20:bookemb:JGAA,LiuCHW21:bookthickness:TCS}.

The 2-layer variant comes in two settings: \emph{one-sided crossing
minimization} (\emph{OSCM}) and \emph{two-sided crossing minimization}
(\emph{TSCM}).  In OSCM, the input consists of a (bipartite) graph and
a linear order for the vertices on one side of the bipartition; the 
task is to find a linear order for the vertices on the other side that
minimizes the total number of crossings.  In TSCM, the linear orders
on both layers can be chosen freely.
OSCM is an important step in the so-called Sugiyama framework for
drawing hierarchical graphs~\cite{sugiyama-etal-hierarchical-TSMC81},
that is, graphs where each vertex is assigned to a specific layer.
OSCM was the topic of the Parameterized Algorithms and Computational
Experiments Challenge
(PACE\footnote{\url{https://pacechallenge.org/2024/}}) 2024.
Both OSCM and TSCM are \NP-hard; OSCM even for the disjoint union
of 4-stars \cite{MunozUV01:GD:OSCM-hardess-4stars} and for
trees~\cite{dobler-arxiv23}.
On the positive side, OSCM admits a subexponential \FPT-algorithm; it runs in 
$O(k2^{\sqrt{2k}}+n)$ time~\cite{kt-fssfpt-Algorithmica14}.
TSCM also admits an \FPT-algorithm; it runs in $2^{O(k)}+n^{O(1)}$
time~\cite{kobayashi-tamaki-IPL16}.

In the context of beyond-planar graphs, a local version of the
crossing number has also received considerable attention \cite{dujmovic-DagRep,survey-beyond-planarity}.  A graph is
\emph{$k$-planar} if it admits a drawing with at most $k$ crossings
per edge.  The \emph{local crossing number} of a graph is the 
smallest~$k$ such that the graph is $k$-planar.
The recognition of 1-planar graphs has long been known to be
NP-hard~\cite{Grigoriev-Bodlaender-Algorithmica07}.
Later, it turned out that the recognition of $k$-planar graphs is
NP-hard for every~$k$~\cite{Urschel-Wellens-ILP21}.
Hence, it is unlikely that \FPT- or \XP-algorithms exist with respect
to the natural parameter~$k$.  On the other hand, recognizing 1-planar
graphs is fixed-parameter tractable with respect to tree-depth
and cyclomatic number~\cite{bannister-cabello-eppstein-JGAA18}.
The problem remains \NP-hard, however, for graphs of bounded bandwidth (and hence, pathwidth and treewidth).
The local crossing number has also been studied in the context of book 
embeddings~\cite{LiuCH20:k-planar-bookemb:COCOA,ackssuw-ecog-SWAT24}.

In this paper, we study the above-mentioned geometric restrictions,
but with respect to the {\em local} crossing number.
The resulting graph classes are called \emph{2-layer $k$-planar}
graphs and \emph{outer $k$-planar} graphs; see \cref{fig:example-graph}.
\begin{figure}[h]
  \begin{subfigure}[b]{.18\textwidth}
    \centering
    \includegraphics[page=1]{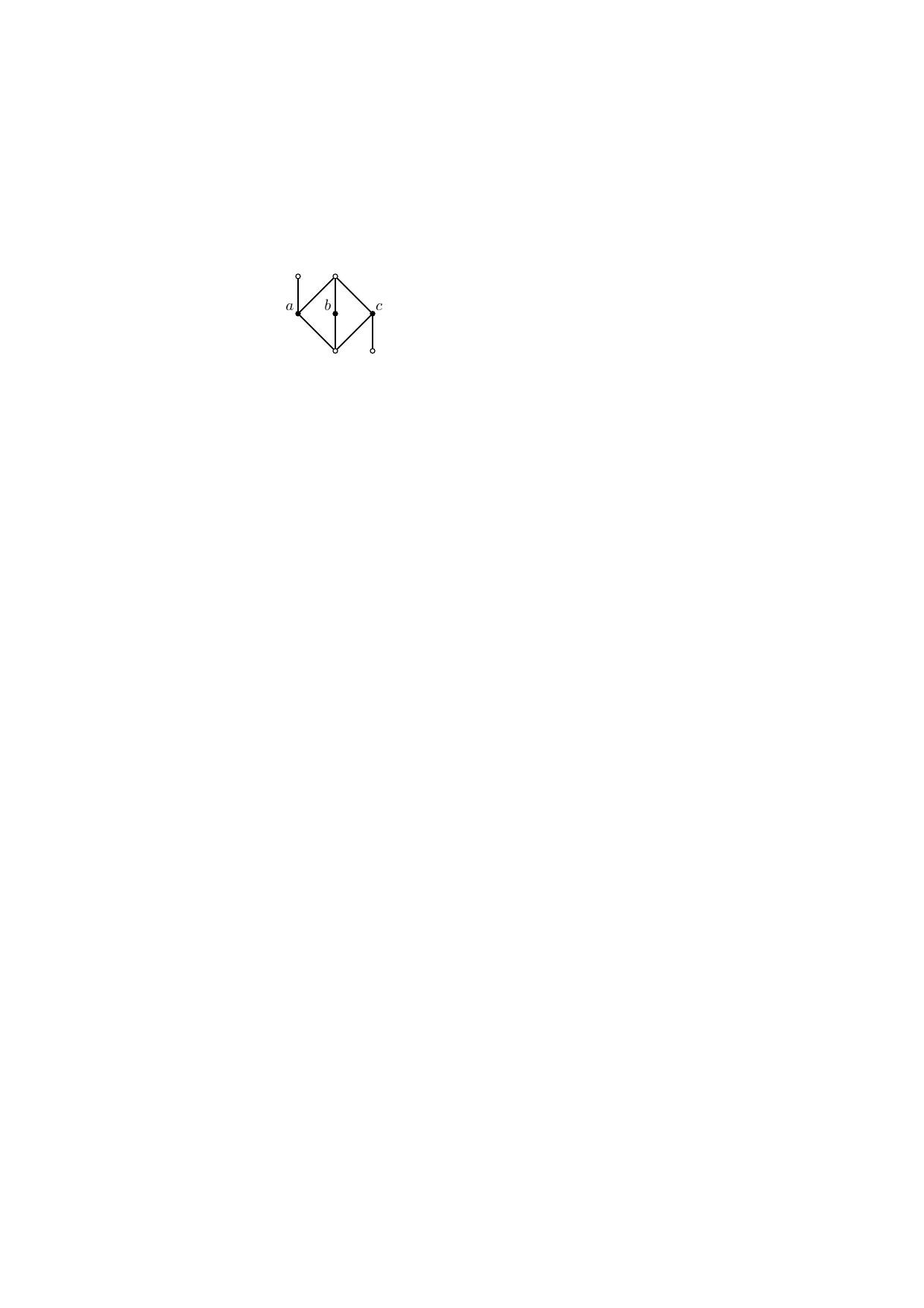}
    \subcaption{}
  \end{subfigure}
  \hfill
  \begin{subfigure}[b]{.2\textwidth}
    \centering
    \includegraphics[page=2]{example-graph}
    \subcaption{}
  \end{subfigure}
  \hfill
  \begin{subfigure}[b]{.2\textwidth}
    \centering
    \includegraphics[page=3]{example-graph}
    \subcaption{}
    \label{fig:example-graph-c}
  \end{subfigure}
  \hfill
  \begin{subfigure}[b]{.2\textwidth}
    \centering
    \includegraphics[page=4]{example-graph}
    \subcaption{}
  \end{subfigure}

  \caption{Drawings of the same bipartite graph with optimal local
    crossing number in different settings: (a)~planar drawing,
    (b)~2-layer 2-planar drawing without restriction,
    (c)~2-layer 3-planar drawing where the vertex order on the upper layer is
    fixed, (d)~outer 1-planar drawing.}
  \label{fig:example-graph}
\end{figure}
The former were studied by Angelini, Da Lozzo, F\"orster, and
Schneck~\cite{2layerkplanar}.  Among others, they gave bounds 
on the edge density of these graphs and characterized 2-layer 
$k$-planar graphs with the maximum edge density for $k \in \{2,4\}$.
They concluded that ``the general recognition and characterization 
of 2-layer $k$-planar graphs remain important open problems''.
According to Schaefer's survey~\cite{schaefer-survey-EJC24} on
crossing numbers, Kainen~\cite{kainen-okp-CGTC89} introduced the
``local outerplanar crossing number'', which minimizes, over all
circular drawings, the largest number of crossings along any edge.
Outer $k$-planar graphs have been studied
by Pach and T\'oth~\cite{pach-toth-edge-density-Combinatorica97},
who showed that any outer $k$-planar graph with $n$ 
vertices has at most $4.1\sqrt{k}n$ edges.  For $k \le 3$, they
established a better bound $(k+3)(n-2)$, which is tight for 
$k \in \{1,2\}$.  For $k \ge 5$, the constant factor was later
improved to $\sqrt{243/40} \approx 2.46$~\cite{OuterkPlanarBetterEdgeDensity}.

We study the
parameterized complexity of the two recognition problems with respect
to the natural parameter~$k$.  For $k=0$, the two classes of graphs
are exactly the caterpillars and outerplanar graphs, respectively,
which can be recognized in linear time.
There are also linear-time algorithms for recognizing outer $1$-planar
graphs~\cite{auer-etal-o1p-Algorithmica16,hong-etal-o1p-Algorithmica15}.
The authors of~\cite{hong-etal-o1p-Algorithmica15} posed the existence of
polynomial-time algorithms for recognizing outer 2-planar graphs 
as an open problem.  A partial answer has been given by Hong
and Nagamochi~\cite{LinearTimeFullOuter2Planarity}, showing that \emph{full}
outer 2-planar graphs can be recognized in linear time.
Outer $k$-planar drawings are \emph{full} if no crossing appears on
the boundary of the outer face.
The authors of~\cite{BeyondOuterplanarity} generalized this result
and showed that, for every integer~$k$, full outer $k$-planarity is testable in
$O(f(k)\cdot n)$ time, for a computable function~$f$.
They also showed that outer $k$-planar graphs can be
recognized in quasi-polynomial time,
which implies that, for every integer $k$, testing outer $k$-planarity is not \NP-hard unless the Exponential-Time Hypothesis fails.

\subparagraph*{Parameterized complexity}

We assume that the reader is familiar with basic concepts in parameterized complexity theory (see \cite{CyganFKLMPPS15:book,DowneyF13:book,FlumG06:book} for definitions of these concepts).
Zehavi~\cite{zehavi-survey-CSR22} gives a survey specifically 
on the parameterized analysis of crossing minimization problems.
The class \emph{\XNLP} consists of all parameterized problems that can be solved non-deterministically in time $f(k)n^{O(1)}$ and space $f(k)\log n$, where $f$ is some computable function, $n$ is the input size, and $k$ is the parameter.
A parameterized problem $L_2 \subseteq \Sigma^* \times \mathbb N$ is said to be \emph{\XNLP-hard} if for any $L_1 \in \XNLP$, there is a \emph{parameterized logspace reduction} from $L_1$ to $L_2$, that is, there is an algorithm $\mathcal A$ and computable functions $f$ and $g$ that satisfy the following: Given $(x_1, k_1) \in \Sigma^*\times \mathbb N$, the algorithm $\mathcal A$ computes $(x_2, k_2) \in \Sigma^* \times \mathbb N$ such that $(x_1, k_1) \in L_1$ if and only if $(x_2, k_2) \in L_2$, $k_2 \le g(k_1)$, and $\mathcal A$ runs in space $O(f(k_1) + \log |x_1|)$.
A parameterized problem is said to be \emph{$\XNLP$-complete} if it is $\XNLP$-hard and belongs to $\XNLP$.
The class \XNLP~contains the 
class~\W[$t$] for every $t \ge 1$~\cite{BodlaenderGNS21:FOCS:XNLP}.
Moreover, Pilipczuk and Wrochna~\cite{SliceWisePolynomialSpaceConjecture} conjectured that an \XNLP-hard problem does not admit an algorithm that runs in $n^{f(k)}$ time and $f(k) \cdot n^{O(1)}$ space for a computable function $f$, where $n$ is the size of the instance and $k$ is the parameter.
We refer to \cite{BodlaenderGNS21:FOCS:XNLP,ElberfeldST15-XNLP-Algorithmica} 
for more information.

Recently, the authors of~\cite{bkkswz-colpp-SoCG24} showed the
first graph-drawing problem to be XNLP-complete, namely 
\emph{ordered level planarity}, parameterized by the number of levels.
Ordered level planarity is a restricted version of level planarity,
where for each level, the vertices on that level are given in order
(and the problem is to route the edges in a y-monotone and
crossing-free way).

\subparagraph*{Our contribution}

We present \XP-algorithms for recognizing
2-layer $k$-planar graphs and outer $k$-planar graphs, which implies
that both recognition problems can be solved in polynomial time
for every fixed~$k$; see \cref{sub:2SkP:XP,sub:OkP:XP}, respectively.
This solves the open problem regarding the recognition of outer 2-planar
graphs posed by the authors of \cite{hong-etal-o1p-Algorithmica15}.
We complement these results by showing that recognizing 2-layer
$k$-planar graphs is \XNLP-complete even for trees (\cref{sub:2SkP:XNLP-hardness}) and that recognizing outer
$k$-planar graphs is \XNLP-hard (\cref{sub:OkP:XNLP-hardness}).
This implies that both problems are \W$[t]$-hard for every~$t$~\cite{BodlaenderGNS21:FOCS:XNLP} and
that it is unlikely that they admit \FPT-algorithms.
On the other hand, we present an \FPT-algorithm for recognizing
2-layer $k$-planar graphs where the order of the vertices on one
layer is specified; see \cref{fig:example-graph-c} and \cref{sub:1SkP:FPT}.  
We prove that two {\em edge-weighted} versions of this problem are \NP-hard; see \cref{sub:1SkP:hardness}.
Finally, we show that the local circular crossing number cannot be approximated even for graphs that are almost trees (that is, graphs with feedback vertex number~$1$); see \cref{sub:OkP:approx:hardness}.
We conclude with %
open problems; see \cref{sec:open}.

The proofs of statements marked with a (clickable) $\star$ are in the
\arxiv{appendix}\lipics{full version}.

\section{Preliminaries}\label{sec:preliminaries}

Let $G$ be a graph.
We let $V(G)$ and $E(G)$ denote the sets of vertices and edges of~$G$, respectively.
For a vertex $v$ of~$G$, let $N_G(v)$ be the set of neighbors of~$v$ in~$G$,
and let $\delta_G(v)$ be the set of edges incident to $v$.
For $U \subseteq V(G)$, we define $N_G(U) = (\bigcup_{v \in U}N_G(v)) \setminus U$, $N_G[U] = N_G(U) \cup U$, and $\delta_G(U) = \bigcup_{v \in U}\delta_G(v)$.
We may omit the subscript~$G$ when it is clear from the context.
The subgraph of $G$ induced by $U \subseteq V(G)$ is denoted by $G[U]$.
A vertex is called a \emph{leaf} if it has exactly one neighbor.

We use $[\ell]$ as shorthand for $\{1,2,\dots,\ell\}$.
Let $n=|V(G)|$, and let $\sigma \colon V(G) \to [n]$ be a linear order of the vertices of~$G$ (i.e., a bijection between~$V(G)$ and $[n]$).
For $\{u, v\} \in E$, the \emph{stretch} of edge $\{u, v\}$ in $\sigma$ is defined as $|\sigma(u) - \sigma(v)|$.
The \emph{bandwidth} of $\sigma$ (with respect to $G$) is the maximum stretch of an edge of~$G$ in~$\sigma$.
The \emph{bandwidth} of~$G$, denoted by $\bw(G)$, is the minimum integer $k$ such that $G$ has a linear order $\sigma$ of $V$ with bandwidth at most~$k$.

We define a \emph{circular drawing} of a graph $G$ to be a cyclic order $D = (v_1, \dots, v_n)$ of $V(G)$.
We say that an edge $\{v_i, v_j\}$ with $i < j$ \emph{pierces} a pair of (not necessarily adjacent)
vertices $\{v_{i'}, v_{j'}\}$ with $i' < j'$ if either $1 \le i < i' < j < j' \le n$ or $1 \le i' < i < j' < j \le n$ holds.
In particular, if $\{v_{i'}, v_{j'}\}$ is an edge of $G$, we say that $\{v_i, v_j\}$ \emph{crosses} $\{v_{i'}, v_{j'}\}$.
For an edge~$e$, let \emph{$\cross_D(e)$} denote the number of edges that cross~$e$ in~$D$.
A circular drawing $D$ is \emph{$k$-planar} (or an \emph{outer $k$-planar drawing}) if every edge crosses at most $k$ edges in~$D$.
Since whether two edges cross is determined only by the cyclic vertex order, in this paper, we allow the edges to be drawn arbitrarily.

Let $G$ be a bipartite graph with $V(G)=X \cup Y$, $X \cap Y = \emptyset$, and $E(G) \subseteq X \times Y$.
A \emph{2-layer drawing} of $G$ is a pair $D =(<_X, <_Y)$ of (strict) linear orders~$<_X$ and~$<_Y$ defined on~$X$ and on~$Y$, respectively.
A \emph{crossing} in $D$ is defined by a pair of edges $\{x, y\}$ and $\{x', y'\}$ with distinct endpoints $x, x' \in X$ and distinct endpoints $y, y' \in Y$ such that $x <_X x'$ and $y' <_Y y$.
The notation $\cross_D(e)$ is defined as above.
Moreover, for two distinct edges $e$ and $e'$, let \emph{$\cross_D(e, e')$}$ = 1$ if $e$ crosses $e'$; $\cross_D(e, e') = 0$ otherwise.
For distinct $x, x' \in X$, we say that $x$ is \emph{to the left of} $x'$ in $D$ if $x <_X x'$.
Equivalently, we say that $x'$ is \emph{to the right of} $x$.
The \emph{leftmost} (resp. \emph{rightmost}) of $X$ in $D$ is the smallest (resp. largest) vertex in $X$ under the linear order $<_X$.
We also use these notions for vertices in~$Y$.
For an %
integer $k \ge 0$, a 2-layer drawing $D$ of $G$ is said to be \emph{$k$-planar} if, for each edge~$e$ of~$G$, 
$\cross_D(e) \le k$.
For (not necessarily disjoint) vertex sets $A, B \subseteq X \cup Y$ and 2-layer drawings $D_A$ of $G[A]$ and $D_B$ of $G[B]$, we say that $D_A$ is \emph{compatible} with $D_B$ (or equivalently $D_B$ is compatible with $D_A$) if, for every pair $\{z,z'\} \subseteq A \cap B$ of vertices that both are in the same set of the bipartition $Z \in \{X,Y\}$, we have that $z <_Z z'$ in $D_A$ if and only if $z <_Z z'$ in $D_B$.

\section{Recognizing 2-Layer $k$-Planar Graphs -- The One-Sided Case}
\label{sec:2-layer:oskp}

In this section, we design an \FPT-algorithm for recognizing 2-layer $k$-planar graphs when the order of the vertices on one layer is given as input.
The problem is defined as follows.

\defproblem
{\OneSidedkPlanarity}
{A bipartite graph $(X \cup Y, E)$, an integer $k \geq 0$, and a linear order $<_X$ of~$X$.}
{Does $Y$ admit a linear order $<_Y$ such that $(<_X, <_Y)$ is a 2-layer $k$-planar %
  drawing of~$(X \cup Y, E)$?}

\subparagraph*{Degree reduction.}

Let $G = (X \cup Y, E)$ be a bipartite graph, and let $k$ be a non-negative integer.
We describe two simple reduction rules that yield an equivalent instance of \OneSidedkPlanarity where every vertex in~$X$ has degree at most $2k + 2$.

\begin{restatable}[\restateref{obs:2-layer:non-leaf-degree}]{observation}{obsTwoLayerNonLeafDeg}
\label{obs:2-layer:non-leaf-degree}
    Let $G = (X \cup Y, E)$ be a bipartite graph that contains a vertex with more than $2k+2$ non-leaf neighbors. Then $G$ is not 2-layer $k$-planar.
\end{restatable}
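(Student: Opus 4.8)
The plan is to argue by contradiction. Suppose $G$ admits a 2-layer $k$-planar drawing $D = (<_X, <_Y)$, and let $v$ be a vertex with non-leaf neighbors $u_1, \dots, u_m$ where $m \ge 2k+3$. Since the definition of a 2-layer drawing is symmetric in the two sides of the bipartition, I may assume $v \in X$, so that $u_1, \dots, u_m \in Y$; relabel them so that $u_1 <_Y u_2 <_Y \dots <_Y u_m$ in~$D$.

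The engine of the proof is an elementary observation about a single chosen index~$i$. Because $u_i$ is non-leaf, it has some neighbor $w \in X$ with $w \neq v$, and I claim that the edge $\{w, u_i\}$ is crossed by many of the ``fan'' edges $\{v, u_j\}$. Indeed, unwinding the definition of a crossing in a 2-layer drawing (two edges cross exactly when one of them has both the smaller $X$-endpoint and the larger $Y$-endpoint), if $w <_X v$ then $\{w, u_i\}$ crosses $\{v, u_j\}$ for every $j < i$, and if $w >_X v$ then $\{w, u_i\}$ crosses $\{v, u_j\}$ for every $j > i$. The only side conditions to verify are that these two edges have distinct $X$-endpoints and distinct $Y$-endpoints, which hold since $w \neq v$ and $j \neq i$; in particular the crossing edges $\{v, u_j\}$ obtained this way are pairwise distinct, so they all contribute to $\cross_D(\{w, u_i\})$.

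I would then apply this to the central index $i = k+2$, a legal choice because $m \ge 2k+3 \ge k+2$. The non-leaf vertex $u_i$ has a neighbor $w \in X \setminus \{v\}$. If $w <_X v$, the observation shows that $\{w, u_i\}$ is crossed by each of the $k+1$ edges $\{v, u_1\}, \dots, \{v, u_{k+1}\}$; if instead $w >_X v$, it is crossed by each of the edges $\{v, u_{k+3}\}, \dots, \{v, u_m\}$, of which there are $m - k - 2 \ge k+1$ precisely because $m \ge 2k+3$. In either case $\cross_D(\{w, u_i\}) \ge k+1$, contradicting the $k$-planarity of~$D$. The argument has no serious obstacle; the only points requiring a little care are picking the right orientation in the crossing condition (which edge plays the role of $\{x,y\}$ versus $\{x',y'\}$) and the index arithmetic $m-(k+2) \ge k+1$. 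If desired, one can phrase the bound uniformly: for every index~$i$, $k$-planarity of the non-$v$ edge incident to $u_i$ forces $i \le k+1$ or $i \ge m-k$, so no such drawing can exist once the gap $\{k+2, \dots, m-k-1\}$ is nonempty, that is, once $m \ge 2k+3$.
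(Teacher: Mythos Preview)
Your proof is correct and follows essentially the same argument as the paper: pick the middle non-leaf neighbor $u_{k+2}$, take any other edge incident to it, and observe that this edge must cross at least $k+1$ of the fan edges $\{v,u_j\}$ on one side or the other. Your write-up is in fact a bit more explicit about the crossing case analysis and the index arithmetic than the paper's version.
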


\begin{restatable}[\restateref{lem:2-layer:degree-reduction}]{lemma}{lemTwoLayerDegreeReduction}
  \label{lem:2-layer:degree-reduction}
  Let $(G, <_X, k)$ be an instance of \OneSidedkPlanarity.
  If $G$ contains a vertex $v \in X$ with $\deg(v) > 2k + 2$ and with a leaf neighbor $y \in Y$, then
  $(G, <_X, k)$ is a \YES-instance if and only if $(G-y, <_X, k)$ is a \YES-instance.
\end{restatable}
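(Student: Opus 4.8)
I would first dispose of the forward implication (from $G$ to $G - y$), which is immediate: given a $k$-planar 2-layer drawing $(<_X, <_Y)$ of~$G$, restricting~$<_Y$ to $Y \setminus \{y\}$ gives a 2-layer drawing of $G - y$ with the same order~$<_X$ on~$X$, and deleting~$y$ together with its unique incident edge cannot create a crossing, so this restricted drawing is still $k$-planar.

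For the converse, the plan is to take a $k$-planar 2-layer drawing $D' = (<_X, <_Y)$ of $G - y$ and to re-insert the leaf~$y$, together with its unique incident edge~$\{v, y\}$, into~$<_Y$ so that $\{v, y\}$ gets no crossing at all; then no other edge can gain a crossing either, so the resulting drawing is $k$-planar. List the neighbours of~$v$ in $G - y$ as $y_1 <_Y y_2 <_Y \dots <_Y y_m$. Since~$y$ is a leaf, $m = \deg_G(v) - 1$, so the hypothesis $\deg_G(v) > 2k+2$ gives $m \ge 2k+2$; in particular the neighbour $y_{k+1}$ exists and is ``central'' in the fan $\{v, y_1\}, \dots, \{v, y_m\}$, namely $k+1 \le m - k - 1$.

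The main ingredient is the claim that, in~$D'$, (i)~the edge $\{v, y_{k+1}\}$ is uncrossed, and (ii)~every neighbour of~$y_{k+1}$ other than~$v$ precedes~$v$ in~$<_X$. Both parts come from short counts against the fan. For~(i): if an edge $\{v', y'\}$ with $v' \ne v$ crossed $\{v, y_{k+1}\}$, then, by the definition of a crossing, either (a)~$v <_X v'$ and $y' <_Y y_{k+1}$, or (b)~$v' <_X v$ and $y_{k+1} <_Y y'$; in case~(a) the edge $\{v', y'\}$ crosses $\{v, y_j\}$ for every $j \in \{k+1, \dots, m\}$ and so has at least $m - k \ge k+2$ crossings, while in case~(b) it crosses $\{v, y_j\}$ for every $j \in \{1, \dots, k+1\}$ and so has at least $k+1$ crossings, both contradicting $k$-planarity of~$D'$. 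For~(ii): a neighbour~$v'$ of $y_{k+1}$ with $v <_X v'$ would make $\{v', y_{k+1}\}$ cross $\{v, y_j\}$ for every $j \in \{k+2, \dots, m\}$, i.e.\ at least $m - k - 1 \ge k+1 > k$ times, which is impossible. Getting this index bookkeeping exactly right---and noticing that $m \ge 2k+2$ is precisely what forces $y_{k+1}$ to be central enough---is the one place the argument needs care; everything else is routine.

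Given the claim, I would extend~$<_Y$ to a linear order on all of~$Y$ by inserting~$y$ directly after $y_{k+1}$, obtaining a 2-layer drawing~$D$ of~$G$. Because~$y$ sits immediately after $y_{k+1}$, an edge $\{v', y'\} \ne \{v, y\}$ crosses $\{v, y\}$ in~$D$ if and only if it crosses $\{v, y_{k+1}\}$ in~$D'$, or $y' = y_{k+1}$ and $v <_X v'$; by~(i) and~(ii) neither happens, so $\{v, y\}$ is uncrossed in~$D$. Consequently every other edge of~$G$---these are exactly the edges of $G - y$---keeps the crossings it had in~$D'$, so~$D$ is a $k$-planar 2-layer drawing of~$G$ using the prescribed order~$<_X$, and $(G, <_X, k)$ is a \YES-instance.
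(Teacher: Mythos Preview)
Your proof is correct and follows essentially the same approach as the paper's: find a ``central'' edge of the fan at~$v$ that is forced to be crossing-free, and insert the leaf~$y$ adjacent to it. The only cosmetic difference is the choice of pivot: the paper uses $y_{k+2}$ and inserts~$y$ immediately to its left, whereas you use $y_{k+1}$ and insert~$y$ immediately to its right; your claim~(ii) is the one-sided neighbour restriction that is exactly what the insertion needs, while the paper states the slightly stronger (and, when $d=2k+2$, not quite justified) claim that $y_{k+2}$ is a leaf, though only the one-sided part is actually used there as well.
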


Hence, in the following, we assume that every vertex in $X$ has degree at most $2k + 2$.

\subsection{An FPT-Algorithm}
\label{sub:1SkP:FPT}

Let $n$ be the number of vertices in $G$.
In this section, we prove the following result.

\begin{theorem}\label{thm:2-layer:oskp:algo}
    \OneSidedkPlanarity can be solved in time $2^{O(k\log k)}n^{O(1)}$, that is, \OneSidedkPlanarity is fixed-parameter tractable when parameterized by~$k$.
\end{theorem}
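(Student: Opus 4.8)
The plan is to design a dynamic program that sweeps over the vertices of~$X$ in the prescribed order $x_1 <_X \dots <_X x_p$ and incrementally commits to the relative order of the vertices of~$Y$. For a prefix $\{x_1,\dots,x_i\}$, call $y \in Y$ \emph{active} (after step~$i$) if $y$ has a neighbour in $\{x_1,\dots,x_i\}$ and a neighbour in $\{x_{i+1},\dots,x_p\}$; writing $\ell(y)$ and $r(y)$ for the indices of the leftmost and rightmost neighbour of~$y$, this means $\ell(y) \le i < r(y)$. Note that the set of active vertices after step~$i$ is determined by the instance, not by the drawing. A state of the dynamic program at step~$i$ will consist of (a)~the relative order of the currently active vertices in the sought order~$<_Y$, together with (b)~a small amount of numerical data recording, for each active vertex and for each ``gap'' between two consecutive active vertices, how much of the per-edge crossing budget~$k$ has already been consumed.

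The structural fact that keeps the state small is: \emph{in any 2-layer $k$-planar drawing respecting~$<_X$, at most $k+1$ vertices of~$Y$ are active after any step~$i$.} To see this, let $y^\star$ be the $<_Y$-smallest active vertex and consider the edge $e^\star=\{x_{r(y^\star)},y^\star\}$ to its rightmost neighbour. For every other active vertex~$y'$, the edge $\{x_{\ell(y')},y'\}$ to its leftmost neighbour satisfies $x_{\ell(y')}<_X x_{r(y^\star)}$ (because $\ell(y')\le i<r(y^\star)$) and has its $Y$-endpoint strictly above~$y^\star$, so it crosses~$e^\star$; these crossing edges are pairwise distinct, so $\cross_D(e^\star)\ge(\text{number of active vertices})-1\le k$. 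Combined with \cref{obs:2-layer:non-leaf-degree} and \cref{lem:2-layer:degree-reduction} (which let us assume $\deg(x)\le 2k+2$ for all $x\in X$), this bounds both the frontier size and the number of $Y$-vertices that can appear in a single step.

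What remains is to make the numerical part of the state sufficient while keeping it bounded by a function of~$k$. Here I would first normalise: by a short exchange argument one may assume that $<_Y$ extends the partial order ``$y$ entirely precedes $y'$'' (meaning $r(y)<\ell(y')$), since swapping two $<_Y$-consecutive vertices that violate this only removes crossings and never creates any. Under this normalisation, every $Y$-vertex appearing later in the sweep must be placed \emph{above} every already-``finished'' vertex (one all of whose neighbours have been processed). Consequently every future crossing added to an edge of a finished vertex~$w$ comes uniformly from a later edge whose $Y$-endpoint lies below~$w$; hence it suffices to remember, per gap, the maximum number of crossings currently incurred by an edge of a finished vertex lying in that gap, together with the total number of edges incident to finished vertices in that gap (capped at $k+1$, as larger values can only cause infeasibility). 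For each active vertex it suffices to remember the maximum number of crossings currently on one of its drawn edges. Altogether the state has size $2^{O(k\log k)}$ per value of~$i$.

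For the transition from step~$i$ to step~$i+1$ one branches over where to insert the (at most $2k+2$) new $Y$-neighbours of~$x_{i+1}$ relative to the current active vertices and to one another ($2^{O(k\log k)}$ choices), then draws the edges incident to~$x_{i+1}$ (each new edge adding one crossing to every already-drawn edge whose $Y$-endpoint lies above its own), updates all crossing counters, moves the vertices that have become finished into the per-gap aggregates, and kills the branch if any counter exceeds~$k$. The instance is a \YES-instance iff some chain of transitions leads from the empty frontier at step~$0$ to the empty frontier at step~$p$; the vertex order on~$Y$ can be recovered by backtracking. The total running time is $2^{O(k\log k)}n^{O(1)}$, giving \Cref{thm:2-layer:oskp:algo}. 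I expect the main obstacle to be the correctness proof that this compressed numerical state is genuinely sufficient — that two partial drawings reaching the same state are interchangeable for all remaining steps — which relies on the normalisation lemma and a careful case analysis of how a newly drawn edge interacts with the edges of active versus finished versus later-appearing $Y$-vertices.
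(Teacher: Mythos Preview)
Your approach is correct and reaches the same bound, but it is genuinely different from the paper's proof. The paper fixes a sliding window of $2k+1$ consecutive $X$-vertices $X_i=\{x_i,\dots,x_{i+2k}\}$ and stores as its DP state the \emph{entire} drawing of $G[N[X_i]]$ together with the exact crossing count on every edge of $\delta(X_i)$; the key structural lemma (\cref{lem:2-layer:if-far-enough-they-dont-cross}) says that two edges whose $X$-endpoints are more than $2k$ apart cannot cross, which makes the window recurrence sound. This na\"ive window only gives $2^{O(k^2\log k)}$; the paper then obtains $2^{O(k\log k)}$ by a separate trick, replacing the fixed window with a dynamic one chosen so that $|\delta(X_i)|=\Theta(k)$. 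You instead sweep one $X$-vertex at a time, bound the frontier by the (different, arguably cleaner) fact that at most $k+1$ $Y$-vertices are active, and keep only a per-active-vertex maximum and a per-gap $(\max,\text{total})$ pair, which yields $2^{O(k\log k)}$ directly with no window-resizing. The price you pay is the normalisation lemma and the state-sufficiency argument; the latter does go through, essentially because under normalisation every finished-vertex edge within a single gap receives the \emph{same} future increment (new edges go only to active or newly-appearing $Y$-vertices, and newly-appearing vertices sit above every finished vertex), so the gap-wise maximum is indeed all that matters. The paper never compresses and hence never needs an exchange argument; conversely, your approach never needs the paper's window lemma.
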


We assume that $G$ has no isolated vertices; otherwise, we simply remove them.
For a 2-layer drawing $D$ of a subgraph $G'$ of $G$, we say that $D$ \emph{respects} $<_X$ if, for every $x, x' \in V(G')$, it holds that $x$ is to the left of $x'$ in $D$ if and only if $x <_X x'$.

We first give a simpler algorithm with running time $2^{O(k^2\log k)}n^{O(1)}$. 
Let $x_1, \dots, x_{|X|}$ be the vertices of $X$ appearing in this order in $<_X$.
If $|X| < 2k + 1$, then $|Y| \le 2k(2k+2)$, and we can simply enumerate all the possible 2-layer drawings in time $2^{O(k^2\log k)}n^{O(1)}$.
Thus, we assume $|X| \geq 2k+1$.
Let $\ell = 2k$.
For $i \in [|X| - \ell]$, let $X_{\le i} = \{x_1, \dots, x_{i+\ell}\}$ and $X_i = \{x_i, \dots, x_{i + \ell}\}$.
Correspondingly, let $G_{\le i} = G[N[X_{\le i}]]$ and $G_i = G[N[X_i]]$.
Our algorithm recursively decides whether $G_{\le i}$ admits a 2-layer $k$-planar drawing that extends a prescribed partial drawing~$D$ of~$G_i$.
To be more precise, let $D$ be a 2-layer $k$-planar drawing of $G_i$ that respects $<_X$.
Given $i \in [|X| - \ell]$, a partial drawing~$D$ of~$G_i$ respecting~$<_X$, and a function $\chi\colon \delta(X_i) \to \{0, \dots, k\}$, we define a Boolean value $\Dp(i, D, \chi)$ to be~\vtrue if and only if $G_{\le i}$ admits a 2-layer $k$-planar drawing \DDDD such that
\begin{itemize}
    \item \DDDD is compatible with $D$ and 
    \item for every edge $e \in \delta(X_i)$, it holds that $\chi(e) = \cross_{\DDDD}(e)$.
\end{itemize}
Our goal is to compute $\Dp(|X| - \ell, D, \chi)$ for some partial drawing $D$ of $G_{|X| - \ell}$ and $\chi$, which is done by the following dynamic programming algorithm.

For the base case $i = 1$, we compute the table entries by brute force:
For each possible 2-layer $k$-planar drawing $D$ of $G_1$ respecting $<_X$ and for every function $\chi\colon \delta(X_1) \to \{0, \dots, k\}$, we set $\Dp(1, D, \chi) = \vtrue$ if $\chi(e) = \cross_D(e)$ for every $e \in \delta(X_1)$; otherwise, we set $\Dp(1, D, \chi) = \vfalse$.
Now we show that, in any 2-layer $k$-planar drawing,
if two edges have endpoints in~$X$ that are far apart, then the edges do not cross. 

\begin{restatable}[\restateref{lem:2-layer:if-far-enough-they-dont-cross}]{lemma}{lemTLayerIfFarEnoughTheyDontCross}
\label{lem:2-layer:if-far-enough-they-dont-cross}
  If $e = (x_p, y)$ and $f = (x_q, y')$ are distinct edges such that
  $p + \ell < q$, then, in every 2-layer $k$-planar drawing
  $D = (<_X, <_Y)$ of $G$, it holds that $y <_Y y'$ or that $y=y'$.
\end{restatable}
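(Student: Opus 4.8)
The plan is a proof by contradiction. Assume that some 2-layer $k$-planar drawing $D = (<_X, <_Y)$ of $G$ satisfies $y' <_Y y$; in particular $y \ne y'$. Since $p < q$ (because $p + \ell < q$ and $\ell \ge 0$), we have $x_p <_X x_q$. Combined with $y' <_Y y$, the defining condition of a crossing applies to $e = \{x_p, y\}$ and $f = \{x_q, y'\}$ (distinct $X$-endpoints $x_p \ne x_q$, distinct $Y$-endpoints $y \ne y'$), so $e$ and $f$ cross in $D$. The goal is to exhibit, in addition, many further edges crossing $e$ or $f$, so that one of them is crossed more than $k$ times.

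Next I would use the gap $p + \ell < q$: the vertices $x_{p+1}, \dots, x_{p+\ell}$ all lie strictly between $x_p$ and $x_q$ in $<_X$, and there are $\ell = 2k$ of them. Since $G$ has no isolated vertices, each $x_j$ with $p < j \le p+\ell$ has an incident edge $e_j = \{x_j, y_j\}$. These $\ell$ edges are pairwise distinct, since they have distinct $X$-endpoints, and each is distinct from $e$ and from $f$, whose $X$-endpoints are $x_p$ and $x_q$ while $x_p \ne x_j \ne x_q$. For each such $j$ I would then argue by a short case distinction on the position of $y_j$ that $e_j$ crosses $e$ or $f$ (possibly both): if $y_j \le_Y y'$, then $y_j <_Y y$, so from $x_p <_X x_j$ the edge $e_j$ crosses $e$; if $y_j \ge_Y y$, then $y' <_Y y_j$, so from $x_j <_X x_q$ the edge $e_j$ crosses $f$; and if $y' <_Y y_j <_Y y$, then both conditions hold and $e_j$ crosses both $e$ and $f$. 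These three cases are exhaustive.

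To finish, let $A$ (resp.\ $B$) be the set of indices $j \in \{p+1, \dots, p+\ell\}$ for which $e_j$ crosses $e$ (resp.\ $f$). The case distinction gives $A \cup B = \{p+1, \dots, p+\ell\}$, hence $|A| + |B| \ge \ell = 2k$, so $|A| \ge k$ or $|B| \ge k$. If $|A| \ge k$, then $\{e_j : j \in A\}$ together with $f$ are at least $k+1$ pairwise distinct edges all crossing $e$, so $\cross_D(e) > k$; if $|B| \ge k$, symmetrically $\cross_D(f) > k$. In either case $D$ is not $k$-planar, contradicting the assumption. Therefore $y' <_Y y$ is impossible, i.e.\ $y <_Y y'$ or $y = y'$.

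The routine part is the three-way case analysis; the point that needs care is the bookkeeping around the boundary situations $y_j = y'$ and $y_j = y$, where $e_j$ shares an endpoint with $f$ (resp.\ $e$) and therefore must be checked for crossing against the \emph{other} of the two edges, together with the verification that none of the $\ell$ chosen edges $e_j$ coincides with $e$ or $f$, which is exactly what lets the final count reach $k+1$ rather than merely $k$.
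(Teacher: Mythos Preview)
Your proof is correct and follows essentially the same approach as the paper's: assume $y' <_Y y$, so $e$ and $f$ cross, then use that each of the $\ell = 2k$ intermediate vertices $x_{p+1},\dots,x_{p+\ell}$ has an incident edge crossing $e$ or $f$, and apply pigeonhole to force more than $k$ crossings on one of them. The paper's version is more terse and omits the explicit case analysis on the position of $y_j$, but the argument is the same.
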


This suggests the following recurrence for the dynamic program.

\begin{lemma}\label{lem:2-layer:oskp:rec}
    Let $2 \le i \le |X| - \ell$, let $D$ be a 2-layer $k$-planar drawing of $G_i$ respecting $<_X$, and let $\chi\colon \delta(X_i) \to \{0, \dots, k\}$ such that $\chi(e) = \cross_D(e)$ for every $e \in \delta(x_{i + \ell})$.
    Then,
    \begin{align*}
        \Dp(i, D, \chi) = \bigvee_{\DD, \xx} \Dp(i-1, \DD, \xx),
    \end{align*}
    where the \DD are taken over all 2-layer $k$-planar drawings of $G_{i-1}$ that are compatible with~$D$, and the $\xx$ are taken over all functions $\delta(X_{i-1}) \to \{0, \dots, k\}$ that satisfy
    \begin{align*}
        \xx(e) =\chi(e) - \sum_{f \in \delta(x_{i+\ell})}\cross_D(e, f)
    \end{align*}
    for every edge $e \in \delta(X_i) \cap \delta(X_{i-1})$.
\end{lemma}
\begin{proof}
    Suppose that $\Dp(i, D, \chi) = \vtrue$.  
    Then, %
    there is a 2-layer $k$-planar drawing~\DDDD of~$G_{\le i}$ that is compatible with $D$ and, for every edge $e \in \delta(X_i)$, it holds that $\chi(e) = \cross_{\DDDD}(e)$.
    We need to show that
    there exists a triplet $t=(i-1,\DD,\xx)$ such that $\Dp(t)=\vtrue$.
    Let~\DDD and~\DD be subdrawings of~\DDDD induced by $G_{\le i-1}$ and $G_{i - 1}$, respectively.
    Then~\DDD is compatible with~\DD.
    By~\cref{lem:2-layer:if-far-enough-they-dont-cross}, edges incident to $x_{i + \ell}$ do not cross edges incident to $x_{i-1}$.
    Thus, for every edge $e \in \delta(x_{i-1})$, we have $\cross_{\DDDD}(e)=\cross_{\DDD}(e)$.
    Moreover, every edge $e \in \delta(X_i) \cap \delta(X_{i-1})$ has exactly $\chi(e) - \sum_{f \in \delta(x_{i+\ell})} \cross_{D}(e,f)$ crossings in~\DDD.
    Define~$\xx$ by setting $\xx(e) = \cross_{\DDD}(e)$ for every edge $e \in \delta(X_{i-1})$.
    Then, by definition, $\Dp(i-1, \DD, \xx) = \vtrue$
    since \DDD is a 2-layer $k$-planar drawing of $G_{\le i-1}$ %
    compatible with~\DD and, 
    for every $e \in \delta(X_{i-1})$, it trivially holds that $\xx(e) = \cross_{\DDD}(e)$.
    
    We omit the converse direction, which readily follows by reversing the above argument.
\end{proof}

Our algorithm evaluates the recurrence of~\cref{lem:2-layer:oskp:rec} in a dynamic programming manner.
To see the runtime bound, observe that, for each $i \in [|X| - \ell]$, the number of possible 2-layer $k$-planar drawings of~$G_i$ is upper bounded by $|N(X_i)|! \le ((\ell + 1) \cdot (2k+2))! = 2^{O(k^2\log k)}$ and the number of possible functions from $\delta(X_i)$ to $\{0,\dots,k\}$ is upper bounded by $(k+1)^{|\delta(X_i)|} = 2^{O(k^2\log k)}$.
Hence, we can evaluate the recurrence in time $2^{O(k^2\log k)}n^{O(1)}$.

We can improve the exponential dependency of our running time as follows.
Instead of fixing the ``window size'' to %
$2k + 1$, for every $i$, we dynamically take the smallest~$\ell_i$ such that
$\delta(\{x_i, \dots, x_{i+\ell_i}\})$ consists of at least $2k + 1$ edges.
It is easy to verify that
\cref{lem:2-layer:if-far-enough-they-dont-cross} (and hence
\cref{lem:2-layer:oskp:rec}) still holds for this dynamic window size.
Since the degree of every vertex in~$X$ is at most $2k + 2$, we have that $|\delta(\{x_i, \dots, x_{i+\ell_i}\})| \le 4k+2$.
This improves the running time to $2^{O(k \log k)}n^{O(1)}$, completing the proof of \cref{thm:2-layer:oskp:algo}.

\subsection{NP-Hardness of the Weighted Version}
\label{sub:1SkP:hardness}

We can generalize \OneSidedkPlanarity to weighted settings.
Let $G = (X \cup Y, E)$ be a bipartite graph, and let $w \colon E \to \mathbb N_{>0}$ be an edge-weight function.
A 2-layer drawing $D$ of $(G, w)$ is said to be $k$-planar if, for each edge $e$ of~$G$, it holds that
\begin{align}\label{eq:weighted-crossing}
    \sum_{f \text{ crosses } e \text{ in } D} w(f) \le k.
\end{align}
It is straightforward to extend our algorithm to this weighted setting.
Although we believe that \OuterkPlanarity is \NP-hard, we can only show the following weaker hardness.

\begin{restatable}[\restateref{thm:weighted-one-sided-np-hard}]{theorem}{thmWeightedOneSidedNPHard}
  \label{thm:weighted-one-sided-np-hard}
    The weighted \OneSidedkPlanarity is (weakly) \NP-hard under~(\ref{eq:weighted-crossing}).
\end{restatable}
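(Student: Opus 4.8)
The plan is to reduce from \textsc{Partition}: given positive integers $a_1,\dots,a_n$ with $\sum_{i\in[n]}a_i=2S$, decide whether some subset of them has sum exactly~$S$. This problem is (weakly) \NP-hard, and since in our reduction the target value~$k$ will be set to~$S$ (a number written in binary, hence not polynomially bounded in the number of integers), only \emph{weak} hardness follows, which matches the statement. The structural fact the reduction exploits is that, with the order of~$X$ fixed, an edge incident to the leftmost vertex of~$X$ is crossed by exactly the edges whose $Y$-endpoint lies strictly to its left (and whose $X$-endpoint is different), and symmetrically for the rightmost vertex of~$X$. This lets us translate ``on which side of a designated vertex $p\in Y$ an element vertex is placed'' into ``on which side of the partition an element belongs''.

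Concretely, I would take $X=\{x_0,x_{\mathrm m},x_1\}$ with the fixed order $x_0<_X x_{\mathrm m}<_X x_1$, and $Y=\{p,y_1,\dots,y_n\}$. The edge set consists of two \emph{budget edges} $e_0=\{x_0,p\}$ and $e_1=\{x_1,p\}$, each of weight~$1$, and one \emph{element edge} $f_i=\{x_{\mathrm m},y_i\}$ of weight $a_i$ for each $i\in[n]$. Set $k:=S$. The resulting instance $(G,<_X,w,k)$ clearly has size polynomial in the size of the \textsc{Partition} instance.

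For correctness, I would first enumerate the crossing pairs: $e_0$ and $e_1$ share~$p$, and any two element edges share $x_{\mathrm m}$, so neither $\{e_0,e_1\}$ nor any $\{f_i,f_j\}$ is a crossing; an element edge $f_i$ crosses $e_0$ iff $y_i<_Y p$, and crosses $e_1$ iff $p<_Y y_i$. Hence in any 2-layer drawing $D=(<_X,<_Y)$ the weighted crossing load of~$e_0$ equals $\sum_{y_i<_Y p}a_i$, that of~$e_1$ equals $\sum_{y_i>_Y p}a_i$, and that of every~$f_i$ equals $1\le S=k$. Therefore $D$ is weighted $k$-planar if and only if $\sum_{y_i<_Y p}a_i\le S$ and $\sum_{y_i>_Y p}a_i\le S$; since these two sums add up to $2S$, this is equivalent to both sums being exactly~$S$, i.e., to $\{a_i:y_i<_Y p\}$ being a solution of the \textsc{Partition} instance. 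So the constructed instance is a \YES-instance if and only if the \textsc{Partition} instance is.

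The construction itself is elementary; the step that needs the most care is verifying that the two budget edges are the \emph{only} edges that can carry a large weighted crossing load — this is exactly why every element vertex is attached to the single vertex $x_{\mathrm m}$ and why $p$ is the unique common neighbour of $x_0$ and $x_1$ — together with choosing $k$ to be exactly~$S$ (rather than merely $k\ge S$), so that the two budget inequalities force a perfectly balanced split rather than only an almost-balanced one.
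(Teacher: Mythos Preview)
Your reduction is correct and follows the same idea as the paper's: reduce from \textsc{Partition} by building a length-$2$ path whose two ``budget'' edges (of weight~$1$) share a vertex in~$Y$, and attaching weighted element edges that each must cross exactly one of the two budget edges; the two inequalities $\le k$ together with the sum constraint force an exact split.

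Your construction differs from the paper's in one detail that makes the analysis cleaner. The paper uses $n$ \emph{separate} vertices $x_1,\dots,x_n$ on the fixed layer (so the element edges form a matching), which allows element edges from opposite sides of $y_{\mathrm{mid}}$ to cross one another; to absorb these extra crossings the paper doubles the weights to $2a_i$ and sets $k=\Sum(A)+1$. You instead attach all element edges to a single middle vertex $x_{\mathrm m}\in X$, so no two element edges can cross, and you can take the weights to be $a_i$ and $k=S$ directly. This buys you a shorter crossing analysis (each $f_i$ trivially has load~$1$), at the cost of a vertex of unbounded degree in~$X$; the paper's version keeps all $X$-degrees at~$1$, which is irrelevant for the hardness statement but slightly more in line with the degree-reduction preprocessing discussed earlier in the section.
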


\begin{proofsketch}
  The claim is shown by performing a reduction from \textsc{Partition}, which is known to be (weakly) \NP-hard~\cite{garey1979computers}.
  The problem asks, given a set of $n$ integers $A = \{a_1, \dots, a_n\}$, whether the set can be partitioned into two subsets of equal sum.
  We construct a bipartite graph $G$ consisting of a path of length~$2$ with two edges $e_0$ and $e_{n+1}$ of weight~1, and $n$ isolated edges with weight proportional to the integers in $A$.
  By appropriately defining the order $<_X$ on $X$, we can ensure that each isolated edge crosses either $e_0$ or $e_{n+1}$; see \cref{fig:weighted-one-sided-hardness-reduction}.
  Setting $k$ properly induces two balanced subsets of $A$.
  \begin{figure}[h]
    \centering
    \includegraphics{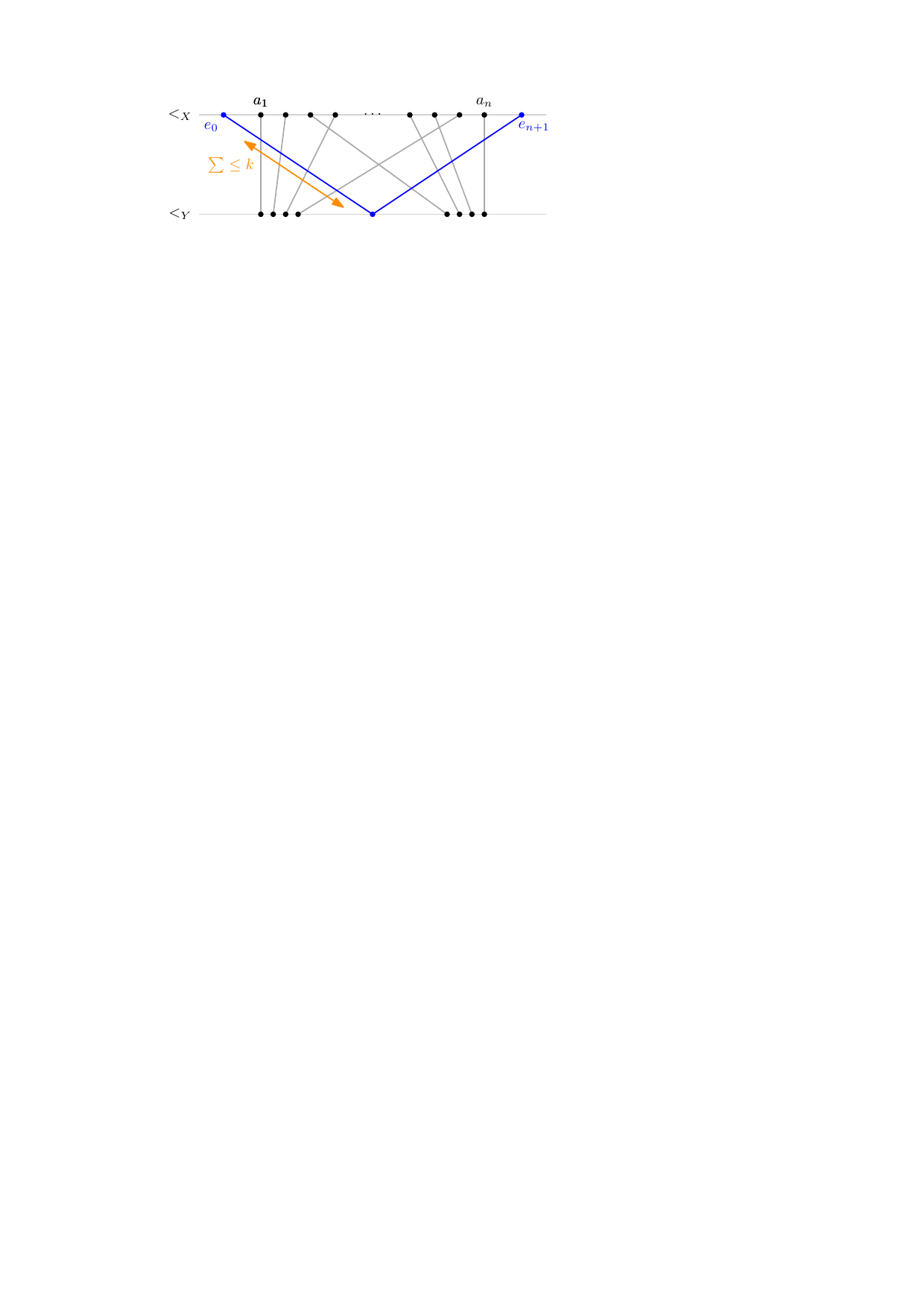}
    \caption{The graph $G$ and $<_X$ we construct. $\sum$ is the sum of weights of edges that cross $e_0$.}
    \label{fig:weighted-one-sided-hardness-reduction}
  \end{figure}
\end{proofsketch}

We remark that there is another reasonable definition of crossings in a weighted graph: A 2-layer drawing is defined to be $k$-planar if, for each edge $e \in E$, it holds that
\begin{align}\label{eq:another-weighted-crossing}
    \sum_{f \text{ crosses } e \text{ in } D} w(e) \cdot w(f) \le k.
\end{align}
By making $w(e_0)$ and $w(e_{n+1})$ sufficiently large,
a similar reduction will work.
\begin{remark}\label{thm:other-weighted-one-sided-np-hard}
    The weighted \OneSidedkPlanarity is (weakly) \NP-hard under~(\ref{eq:another-weighted-crossing}).
\end{remark}

\section{Recognizing 2-Layer $k$-Planar Graphs -- The Two-Sided Case}
\label{sec:2SkP}

The algorithm in \cref{thm:2-layer:oskp:algo} exploits the prescribed order $<_X$ on $X$, which is not specified in the two-sided case.
This difference is reflected by the parameterized complexity of the two problems.
The two-sided case turns out to be \XNLP-hard, meaning that it is unlikely to be fixed-parameter tractable.
On the other hand, we design a polynomial-time algorithm for the two-sided case, provided that $k$ is fixed.
We use this algorithm also to show that the problem is contained in \XNLP.
Formally, the problem is defined as follows.

\defproblem
{\TwoSidedkPlanarity}
{A bipartite graph $G = (X \cup Y, E)$ and an integer $k \geq 0$.}
{Does $G$ admit a 2-layer $k$-planar drawing?}

\subsection{An XP-Algorithm}
\label{sub:2SkP:XP}

To solve \TwoSidedkPlanarity, we extend the algorithm for \OneSidedkPlanarity presented in \cref{sec:2-layer:oskp}.
Let $G = (X \cup Y, E)$ be a bipartite graph, let $n$ be the number of vertices of~$G$, and let $k \in \mathbb N$.
We assume that $G$ is connected; otherwise, the problem can be solved independently for each connected component.
Moreover, by applying \cref{obs:2-layer:non-leaf-degree} and applying
\cref{lem:2-layer:degree-reduction} first to~$X$ and then to~$Y$,
we assume that every vertex has degree at most $2k + 2$.
Our algorithm employs a dynamic programming approach analogous to that presented in \cref{sec:2-layer:oskp}.
Instead of a ``window'', we specify a subset $X_i \subseteq X$ of $\ell + 1 = 2k + 1$ vertices, which plays the same role as the window $\{x_i, \dots, x_{i+\ell}\}$.
However, this subset does not specify the graph $G_{\le i}$ on the left of the window, preventing us from defining the same type of subproblems as there. %
We overcome this obstacle by applying an idea similar to that of Saxe~\cite{Saxe80:bandwidth:SIADM} for recognizing bandwidth-$k$ graphs.
To properly define the subproblems, we observe that $N[X_i]$ separates the subdrawings of the components of $G[V(G) \setminus N[X_i]]$ into left and right parts.

\begin{lemma}\label{lem:2-layer:tskp:separator}
    Let $D = (<_X, <_Y)$ be a 2-layer $k$-planar drawing of $G$, and let $S \subseteq X$ be a set of $\ell + 1$ vertices that appears consecutively in $D$.
    Let $x$ and $x'$ be the leftmost vertex and the rightmost vertex of~$S$ in~$D$, respectively.
    Then, for each component $C$ of $G[V(G) \setminus N[S]]$, the vertices in $C \cap X$ are either entirely to the left of $x$ or entirely to the right of $x'$.
\end{lemma}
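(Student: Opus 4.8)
The plan is to argue by contradiction, producing an edge with more than $k$ crossings whenever some component $C$ of $G[V(G)\setminus N[S]]$ has an $X$-vertex on each side of $S$. So suppose $a,b\in C\cap X$ with $a$ to the left of $x$ and $b$ to the right of $x'$ in $D$, and fix a path $P$ from $a$ to $b$ inside $C$. Since $G$ is bipartite, $P$ alternates between $X$ and $Y$; since $V(P)\subseteq C$ is disjoint from $N[S]$, no vertex of $P$ lies in $S$, and no $Y$-vertex of $P$ lies in $N(S)$. As $S$ occupies a block of consecutive positions in $<_X$ with leftmost element $x$ and rightmost element $x'$, every $X$-vertex of $P$ is therefore strictly to the left of $x$ or strictly to the right of $x'$. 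Walking along $P$ from $a$ (left) to $b$ (right), let $u'$ be the first $X$-vertex that is to the right of $x'$; then $u'\ne a$, so $u'$ is preceded along $P$ by a $Y$-vertex $y$ and, before that, by an $X$-vertex $u$, which by minimality of $u'$ is to the left of $x$. Thus $\{u,y\},\{u',y\}\in E(G)$, $u\ne u'$, and $y\notin N(S)$.

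Next I would charge crossings to these two edges using the block $S$ itself. Since $G$ has no isolated vertices, each $s\in S$ has a neighbor $z_s\in Y$, and $z_s\ne y$ because $z_s\in N(S)$ while $y\notin N(S)$; hence $z_s<_Y y$ or $y<_Y z_s$. In the first case, $u<_X x\le_X s$ together with $z_s<_Y y$ gives, straight from the definition of a crossing, that $\{u,y\}$ crosses $\{s,z_s\}$; in the second case, $s\le_X x'<_X u'$ together with $y<_Y z_s$ gives that $\{u',y\}$ crosses $\{s,z_s\}$. The edges $\{s,z_s\}$ for $s\in S$ are pairwise distinct (distinct $X$-endpoints) and each differs from $\{u,y\}$ and $\{u',y\}$ (whose $X$-endpoints lie outside $S$). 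Since there are $|S|=\ell+1=2k+1$ of them and each crosses at least one of the two edges $\{u,y\}$, $\{u',y\}$, the pigeonhole principle forces one of these two edges to be crossed by at least $k+1$ of them, contradicting the $k$-planarity of $D$. This contradiction shows $C\cap X$ lies entirely to one side of $S$; the cases $C\cap X=\emptyset$ and $|V(G)|\le 1$ are trivial.

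I do not expect a genuine obstacle here; the care needed is bookkeeping. One must use $V(P)\cap N[S]=\emptyset$ twice: first to ensure that no $X$-vertex of $P$ lies in $S$, so that each such vertex is unambiguously "left" or "right" and the first side-change is well defined; and second to get $y\notin N(S)$, which is exactly what makes $y\ne z_s$ for every $s\in S$ and lets every vertex of $S$ contribute a crossing. The only quantitative ingredient is that the window size $\ell+1$ equals $2k+1$, precisely the threshold so that splitting the $2k+1$ crossing edges between the two target edges leaves $k+1$ on one of them.
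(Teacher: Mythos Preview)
Your proof is correct and follows essentially the same approach as the paper: find a length-two subpath $u$--$y$--$u'$ of a path in $C$ at the point where it switches sides, then observe that each vertex of $S$ contributes an incident edge crossing one of $\{u,y\}$ or $\{u',y\}$, and pigeonhole on $|S|=2k+1$. The paper's proof is terser (it simply asserts one may assume the path has exactly two edges and that every edge incident to $S$ crosses one of them), whereas you spell out why $y\notin N(S)$ is needed and why the contributing edges are distinct; but the underlying argument is identical.
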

\begin{proof}
    Suppose that $C$ has two vertices $u, v \in X \setminus S$ such that $u$ is to the left of $x$ and $v$ is to the right of $x'$ in $D$.
    Let $P$ be a path between $u$ and $v$ in $G[C]$.
    We can assume that $P$ has exactly two edges, $e$ and $f$.
    Observe that each edge incident to a vertex in $S$ crosses either $e$ or $f$.
    Since each vertex in $S$ has at least one incident edge, at least one of $e$ and $f$ involves more than $k$ crossings.
\end{proof}

Suppose that $G$ has a 2-layer $k$-planar drawing $D = (<_X, <_Y)$.
For a family $\mathcal D \subseteq 2^{V(G)}$, we use $\mathcal D^X$ as shorthand for $\bigcup_{C \in \mathcal D} C \cap X$.
Let $x_1, \dots, x_{|X|}$ be the vertices of $X$ appearing in this order in $<_X$.
For $1 \le i \le |X| - \ell$, let $\mathcal C_i$ be the set of connected components in $G[V(G) \setminus N[\{x_i, \dots, x_{i+\ell}\}]]$.
By~\cref{lem:2-layer:tskp:separator}, we have $\mathcal C^X = \{x_1, \dots, x_{i-1}\}$ for some $\mathcal C \subseteq \mathcal C_i$.

Now, we can formally define our subproblems.
Let $S \subseteq X$ with $|S| = \ell + 1$, let $D$ be a 2-layer $k$-planar drawing of $G[N[S]]$, let $\chi\colon \delta(S) \to \{0, \dots, k\}$, and let $\mathcal C \subseteq \mathcal C_S$, where $\mathcal C_S$ is the set of components in $G[V(G) \setminus N[S]]$.
We define a Boolean value $\Dp(S, D, \chi, \mathcal C)$ to be $\vtrue$ if and only if there is a 2-layer $k$-planar drawing $D^*$ of $G[N[S \cup \mathcal C^X]]$ such that
\begin{itemize}
    \item $D^*$ is compatible with $D$ and
    \item for every edge $e \in \delta(S)$, it holds that $\chi(e) = \cross_{D^*}(e)$.
\end{itemize}
Hence, $G$ has a 2-layer $k$-planar drawing if and only if $\Dp(S, D, \chi, \mathcal C_S)=\vtrue$ for some $S \subseteq X$, $D$, $\chi$, and $\mathcal C_S$.

To compute the values $\Dp(S, D, \chi, \mathcal C)$ for $S \subseteq X$ with $|S| = \ell + 1$, $D$, $\chi$, and $\mathcal C \subseteq \mathcal C_S$, we first compute the base cases where $\mathcal C = \emptyset$ and then the other cases in ascending order of $|S \cup \mathcal C^X|$.
This can be done by using a recurrence similar to the one in \cref{lem:2-layer:oskp:rec}.

To see the running time bound of the above algorithm, observe that the number of possible choices for $S$, $D$, $\chi$, and $\mathcal C$ is at most
\begin{align*}
    \sum_{S \subseteq X} |S|! \cdot |N(S)|! \cdot (k + 1)^{|\delta(S)|}\cdot 2^{|\mathcal C_S|} =
    n^{\ell + 1} \cdot 2^{O(k^2\log k)} \cdot 2^{|\mathcal C_S|}.
\end{align*}
The third factor can be bounded by $2^{O(k^3)}$ as follows:
Since $G$ is connected, each connected component of $G[V(G) \setminus N[S]]$ contains at least one vertex in $N(N(S)) \setminus S$.
This implies that the number of components in $G[V(G) \setminus N[S]]$ is at most $|N(N(S)) \setminus S| \le (2k+2)(2k+1)^2$.

\begin{theorem}\label{thm:2-layer:tskp:xp-alg}
    \TwoSidedkPlanarity can be solved in time $2^{O(k^3)}n^{2k + O(1)}$, that is, \TwoSidedkPlanarity is polynomial-time solvable when $k$ is fixed.
\end{theorem}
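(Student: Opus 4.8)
The plan is to turn the preceding setup into a complete algorithm; essentially all the ingredients are in place, and the remaining work is to (i)~make the recurrence for $\Dp(S,D,\chi,\mathcal C)$ precise and prove it correct and (ii)~total up the running time. As preprocessing I would, as already justified, reduce to the case that $G$ is connected and every vertex has degree at most $2k+2$ (run the algorithm componentwise, then apply \cref{obs:2-layer:non-leaf-degree} and \cref{lem:2-layer:degree-reduction} to both sides). If $|X|<\ell+1=2k+1$, then $|Y|=O(k^2)$ and the whole graph can be drawn by brute force over all pairs of linear orders in time $2^{O(k^2\log k)}$. Otherwise I run the dynamic program over the states $\Dp(S,D,\chi,\mathcal C)$ with $|S|=\ell+1$ and $\mathcal C\subseteq\mathcal C_S$, filling the table in increasing order of $|S\cup\mathcal C^X|$.

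The base cases are those with $\mathcal C=\emptyset$: since $G$ has no isolated vertex, every component of $G[V(G)\setminus N[S]]$ contains an $X$-vertex, so $\mathcal C=\emptyset$ is equivalent to $\mathcal C^X=\emptyset$; then $G[N[S\cup\mathcal C^X]]=G[N[S]]$ and $\Dp(S,D,\chi,\emptyset)=\vtrue$ iff $D$ is $k$-planar and $\chi(e)=\cross_D(e)$ for all $e\in\delta(S)$, which is read off $D$ directly. For $\mathcal C\ne\emptyset$, let $x'$ be the rightmost vertex of $S$ in $D$; as in the precondition of \cref{lem:2-layer:oskp:rec} we may assume $\chi$ agrees with $\cross_D$ on $\delta(x')$ (otherwise the entry is \vfalse). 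The recurrence mirrors \cref{lem:2-layer:oskp:rec}, with $S\cup\mathcal C^X$ playing the role of the $X$-vertices already placed and $S$ the rightmost $\ell+1$ of them. Peeling off $x'$ and admitting the vertex $s\in\mathcal C^X$ that sits immediately to the left of $S$ yields a new window $S'=(S\setminus\{x'\})\cup\{s\}$ with $S'\cup{\mathcal C'}^X=(S\cup\mathcal C^X)\setminus\{x'\}$, where $\mathcal C'\subseteq\mathcal C_{S'}$ is the unique component family with ${\mathcal C'}^X=\mathcal C^X\setminus\{s\}$; that $\mathcal C'$ is well defined is exactly where \cref{lem:2-layer:tskp:separator} enters, as it forces every off-window component to lie wholly on one side of the window. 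The recurrence then reads
\[
  \Dp(S,D,\chi,\mathcal C)=\bigvee_{s,\,D',\,\chi',\,\mathcal C'}\Dp(S',D',\chi',\mathcal C'),
\]
where $s$ ranges over $\mathcal C^X$, $D'$ over all 2-layer $k$-planar drawings of $G[N[S']]$ compatible with $D$ in which $s$ is leftmost in $S'$, $\mathcal C'$ over the subsets of $\mathcal C_{S'}$ with ${\mathcal C'}^X=\mathcal C^X\setminus\{s\}$, and $\chi'$ over all functions $\delta(S')\to\{0,\dots,k\}$ with $\chi'(e)=\chi(e)-\sum_{f\in\delta(x')}\cross_D(e,f)$ for every $e\in\delta(S)\cap\delta(S')$. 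Because \cref{lem:2-layer:if-far-enough-they-dont-cross} still holds verbatim here (both windows span $2k+1$ vertices of $X$), every edge incident to $x'$ crosses only edges with an $X$-endpoint in $S$; hence its crossings are already final in $D$, the update on $\delta(S)\cap\delta(S')$ is exact, and correctness of the recurrence follows by the same restrict-the-witness / glue-the-witness argument as for \cref{lem:2-layer:oskp:rec}. Finally, $G$ is 2-layer $k$-planar iff $\Dp(S,D,\chi,\mathcal C_S)=\vtrue$ for some $S$, $D$, and $\chi$.

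For the running time, the number of states is at most $\sum_{S:\,|S|=\ell+1}|S|!\cdot|N(S)|!\cdot(k+1)^{|\delta(S)|}\cdot 2^{|\mathcal C_S|}$. Using the degree bound $2k+2$, each of $|S|!$, $|N(S)|!$, and $(k+1)^{|\delta(S)|}$ is $2^{O(k^2\log k)}$; and since $G$ is connected, every component of $G[V(G)\setminus N[S]]$ meets $N(N(S))\setminus S$, so $|\mathcal C_S|\le(2k+2)(2k+1)^2=O(k^3)$ and $2^{|\mathcal C_S|}=2^{O(k^3)}$. As there are at most $n^{\ell+1}=n^{2k+1}$ choices of $S$, the table has $2^{O(k^3)}\,n^{2k+1}$ entries, and evaluating the recurrence at one entry costs an additional $2^{O(k^3)}\,n^{O(1)}$ factor (enumerate $s$, $D'$, $\chi'$, $\mathcal C'$ and check compatibility). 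This gives the claimed total running time $2^{O(k^3)}\,n^{2k+O(1)}$.

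The step I expect to be the main obstacle is making the peeling in the recurrence rigorous: one must argue that advancing the window by a single $X$-vertex captures all 2-layer $k$-planar drawings, that is, that no off-window component can straddle the window, so that the already-placed part can be grown one $X$-vertex at a time exactly as in the one-sided case. This is precisely what \cref{lem:2-layer:tskp:separator} provides, so the technical heart is verifying that the lemma applies cleanly to every induced drawing arising along the recursion --- in particular to $G[N[S\cup\mathcal C^X]]$ with separator $N[S]$ and to $G[N[S']]$ with separator $N[S']$ --- and that, as a consequence, the component family $\mathcal C'$ is fully determined by the vertex set $\mathcal C^X\setminus\{s\}$.
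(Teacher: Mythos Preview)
Your proposal is correct and follows essentially the same approach as the paper: the paper sets up exactly the same DP states $\Dp(S,D,\chi,\mathcal C)$, the same base cases ($\mathcal C=\emptyset$), the same processing order (ascending $|S\cup\mathcal C^X|$), and the identical running-time bookkeeping, and it merely states that the recurrence is ``similar to'' \cref{lem:2-layer:oskp:rec} without spelling it out. Your write-up is in fact more explicit than the paper's; the point you single out as the main obstacle---that after peeling $x'$ and admitting $s$ the family $\mathcal C'\subseteq\mathcal C_{S'}$ with ${\mathcal C'}^X=\mathcal C^X\setminus\{s\}$ is well defined---is indeed the crux and is established by the same tent-crossing argument that underlies \cref{lem:2-layer:tskp:separator}.
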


The above algorithm easily turns into a non-deterministic algorithm that runs in polynomial time and space $k^{O(1)} \log n$, which implies the following.

\begin{restatable}[\restateref{cor:2-layer:tslp:xnlp}]{corollary}{cortlayertslpxnlp}
\label{cor:2-layer:tslp:xnlp}
    \TwoSidedkPlanarity is in \XNLP.
\end{restatable}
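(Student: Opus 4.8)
The plan is to turn the dynamic-programming algorithm of \cref{thm:2-layer:tskp:xp-alg} into a nondeterministic procedure that, instead of tabulating all values $\Dp(S,D,\chi,\mathcal C)$, \emph{guesses} a single witnessing computation. Recall that $G$ is 2-layer $k$-planar iff $\Dp(S,D,\chi,\mathcal C_S)=\vtrue$ for some ``accepting'' state, and that the recurrence (analogous to \cref{lem:2-layer:oskp:rec}) expresses $\Dp(S,D,\chi,\mathcal C)$ as a disjunction of values $\Dp(S',D',\chi',\mathcal C')$ over ``predecessor'' states of strictly smaller potential $\phi := |S\cup\mathcal C^X|$, with base cases $\mathcal C=\emptyset$ whose value is read off directly. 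A positive answer is therefore certified by a chain $t_0\to t_1\to\dots\to t_m$ in the acyclic dependency digraph of this recurrence, where $t_0$ is a base case of value $\vtrue$, each $t_{j}$ is a legal predecessor of $t_{j+1}$, and $t_m$ is accepting. The algorithm guesses this chain one state at a time, storing only the current state and a counter, checks the base-case condition at the start and the predecessor condition at each step, and accepts once an accepting state appears; since $\phi$ ranges in $\{\ell+1,\dots,n\}$ and strictly increases along the chain, the chain has at most $n$ states.

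Second, I would verify the space bound $k^{O(1)}\log n$. A state $(S,D,\chi,\mathcal C)$ is encoded as follows. The set $S\subseteq X$ has $\ell+1=O(k)$ vertices, costing $O(k\log n)$ bits. As $N[S]$ is determined by $S$ and, after the degree reduction, has size $O(k^2)$, the linear order $D$ of $G[N[S]]$ is stored as a list of $O(k^2)$ vertex identifiers, i.e.\ $O(k^2\log n)$ bits, and $\chi\colon\delta(S)\to\{0,\dots,k\}$ costs $O(k^2\log k)$ bits. Finally, because $G$ is connected, the components of $G[V(G)\setminus N[S]]$ number at most $|N(N(S))\setminus S|=O(k^3)$ (exactly as argued before \cref{thm:2-layer:tskp:xp-alg}); fixing the canonical order of these components by minimum vertex index, the subfamily $\mathcal C$ is an $O(k^3)$-bit characteristic vector. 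Adding the $O(\log n)$-bit step counter, the total is $O(k^3\log n)=k^{O(1)}\log n$.

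Third, each verification the machine performs should run in polynomial time and logarithmic space, so that it fits the model. From $S$ one computes $N(S)$, $N[N(S)]$, and the components of $G[V(G)\setminus N[S]]$ together with their canonical order in logarithmic space, since undirected reachability is in \textsf{L}; hence the machine can check that a guessed state is well-formed ($|S|=\ell+1$; $D$ is a 2-layer $k$-planar drawing of $G[N[S]]$, obtained by inspecting the $O(k^4)$ edge pairs; $\chi$ agrees with $D$ on the relevant edges; $\mathcal C\subseteq\mathcal C_S$), recognize base cases of value $\vtrue$ and accepting states ($\mathcal C=\mathcal C_S$), and decide whether $t_j$ is a legal predecessor of $t_{j+1}$ -- which reduces to comparing two linear orders on $O(k^2)$ vertices for compatibility and recomputing $O(k^2)$ crossing counts from $D$. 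Performing at most $n$ such steps yields a nondeterministic algorithm running in polynomial time and $k^{O(1)}\log n$ space, i.e.\ \TwoSidedkPlanarity $\in\XNLP$.

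The only point that needs care is the space accounting for the component family $\mathcal C$: one must use the $O(k^3)$ bound on the number of components of $G[V(G)\setminus N[S]]$, together with a logspace-computable canonical ordering of them (via Reingold's theorem that undirected connectivity is in \textsf{L}), so that $\mathcal C$ costs only $O(k^3)$ bits and the predecessor checks stay in logspace; the rest -- replacing table-filling by chain-guessing, and observing that every recurrence step is local -- is routine once the \XP-algorithm of \cref{thm:2-layer:tskp:xp-alg} is available.
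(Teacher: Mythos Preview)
Your proposal is correct and follows essentially the same approach as the paper: replace the tabulation in the \XP-algorithm by nondeterministic guessing of a single chain of states $(S,D,\chi,\mathcal C)$, and argue that each state fits in $k^{O(1)}\log n$ bits. The paper's proof is terser; the only noticeable difference is in how $\mathcal C$ is encoded. The paper represents $\mathcal C$ directly by a subset of $N(N(S))\setminus S$ (one vertex identifier per component, $O(k^3\log n)$ bits), whereas you use an $O(k^3)$-bit characteristic vector over a canonical ordering of the components and therefore explicitly invoke Reingold's theorem to compute this ordering in logspace. Both encodings are within the required bound, and the extra care you take with logspace verifiability is a welcome level of detail that the paper leaves implicit.
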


\subsection{XNLP-Completeness}
\label{sub:2SkP:XNLP-hardness}

To complement the positive result in the previous subsection, we show that \TwoSidedkPlanarity is \XNLP-hard even on trees.
In contrast to our result, TSCM can be solved in polynomial time on trees~\cite{TwoSidedMinimizationOnTrees}.

\begin{restatable}[\restateref{thm:two-sided-k-planar-w-hardness}]{theorem}{thmTwoSidedKPlanarWHardness}
  \label{thm:two-sided-k-planar-w-hardness}
  \TwoSidedkPlanarity is \XNLP-complete w.r.t.~$k$ %
  even on trees.
\end{restatable}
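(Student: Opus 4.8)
The plan is to prove the two halves of the statement separately, with essentially all of the work going into the lower bound. \textbf{Membership.} Every tree is bipartite, so recognizing 2-layer $k$-planar trees is a special case of \TwoSidedkPlanarity, and \cref{cor:2-layer:tslp:xnlp} already places that problem in \XNLP; nothing more is needed here.

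\textbf{Hardness.} For the lower bound I would give a parameterized logspace reduction from \Bandwidth, which is \XNLP-complete~\cite{BodlaenderGNS21:FOCS:XNLP}. Given a \Bandwidth instance $(G,b)$, first reject if $\Delta(G)>2b$, since a layout of bandwidth~$b$ forces every vertex to have all of its neighbours within distance~$b$; so from now on $\Delta(G)\le 2b$. I would then construct, in logarithmic space, a tree $T$ together with a value $k$ polynomial in $b$ such that $T$ is 2-layer $k$-planar if and only if $\bw(G)\le b$. The tree is a single long \emph{spine} path with all remaining gadgets attached to it as pendant subtrees, so that $T$ is automatically connected and acyclic; the crucial observation is that in a 2-layer drawing two pendant subtrees interact only through crossings, never through shared vertices, which is exactly what lets a \emph{tree} simulate the (possibly highly cyclic) incidence structure of $G$. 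The spine is padded with short branches so that, within the crossing budget, it can essentially only be drawn in one canonical zig-zag, which pins down a system of $\mathrm{poly}(|V(G)|)$ consecutive \emph{slots} along the drawing. For each vertex $v_i$ of $G$ I attach, at a fixed spine position, a \emph{value gadget} $V_i$ whose $k$-planar drawings are in bijection with the choice of a slot $\pi(v_i)$, with the gadgets scaled so that the chosen slots are forced to be pairwise distinct, i.e., $\pi$ is a linear layout of $V(G)$. For each edge $v_iv_j$ of $G$ I attach an \emph{edge gadget} $E_{ij}$: a small pendant tree carrying two ``probe'' paths, one aimed at $V_i$ and one at $V_j$, calibrated so that the number of crossings on its central edge equals, up to an additive constant, $|\pi(v_i)-\pi(v_j)|$; this is at most $k$ precisely when the stretch of $v_iv_j$ under $\pi$ is at most~$b$. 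Since $\Delta(G)\le 2b$, at most $2b$ probes ever need to cross a single value gadget, so a budget $k=O(b^2)$ suffices, and since the construction is entirely local it is computable in logarithmic space with $k$ a function of $b$ only. If it turns out that the edge gadgets can measure stretch only up to a constant factor, one would instead reduce from \GapBandwidth, at the price of a constant-factor blow-up of the parameter.

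\textbf{Correctness and the main obstacle.} The forward direction takes a layout of bandwidth~$b$, uses it as $\pi$, draws the spine canonically, sets each $V_i$ to select $\pi(v_i)$, and routes each $E_{ij}$; the stretch bound then translates directly into the per-edge crossing bound. Conversely, from a 2-layer $k$-planar drawing of $T$ one recovers the canonical spine from its rigidity, reads off a permutation $\pi$ from the value gadgets, and the edge gadgets certify that every edge of $G$ has stretch at most~$b$. I expect the real difficulty to lie in the gadget engineering rather than in the bookkeeping: one has to make the spine genuinely \emph{rigid}, so that no cheaper non-canonical drawing of $T$ exists, and the value gadget genuinely \emph{robust}, so that a slot chosen from a range of size $|V(G)|$ is pinned down using only $\mathrm{poly}(b)$ crossings and cannot be ``smeared out'' over several slots; one must then account exactly for how the linearly many pendant gadgets interfere with one another. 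This is the same flavour of obstacle as in the level-gadget construction behind the \XNLP-hardness of ordered level planarity~\cite{bkkswz-colpp-SoCG24}, recast here in a 2-layer, tree-only setting; notably, the required spine rigidity is the structural counterpart of the separator property (\cref{lem:2-layer:tskp:separator}, in the spirit of Saxe~\cite{Saxe80:bandwidth:SIADM}) that underlies the matching algorithm of \cref{thm:2-layer:tskp:xp-alg}.
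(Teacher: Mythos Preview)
Your membership argument is fine and matches the paper.

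For hardness, there is a genuine gap, and it stems from choosing the wrong source problem. You reduce from \Bandwidth on \emph{general} graphs, which forces you to encode a possibly highly cyclic incidence structure inside a tree via spine, value, and edge gadgets. You correctly identify that the hard part is making the spine rigid and the value gadgets robust, but you do not actually build any of these gadgets; the proposal is a plan, not a construction, and the fallback to \GapBandwidth signals that you are not confident the stretch can be measured exactly. In particular, forcing a spine with $\mathrm{poly}(|V(G)|)$ slots to be drawn essentially uniquely under a crossing budget that is only $\mathrm{poly}(b)$ is a serious obstacle you have not addressed: with $k=O(b^2)$ there is no obvious mechanism that prevents the spine from folding, and your analogy to~\cite{bkkswz-colpp-SoCG24} does not transfer, since there the number of levels \emph{is} the parameter.

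The paper sidesteps all of this by using that \Bandwidth is \XNLP-hard \emph{already on trees}~\cite{Bandwidth-Wt-hard-Trees,BodlaenderGNS21:FOCS:XNLP}. Starting from a tree $T$, the reduction is almost trivial: subdivide every edge once (so the result is bipartite with $X=V(T)$ on one side) and attach $\ell=\Theta(b^2)$ pendant leaves to every original vertex; set $k=\Theta(b^3)$. The output is again a tree, the construction is local and clearly logspace, and correctness is a short counting argument: in any 2-layer $k$-planar drawing, an edge $\{u,v\}\in E(T)$ of stretch $s$ forces at least $\ell s$ crossings on the two edges of its subdivided path, so $s\le b$; conversely, a bandwidth-$b$ layout of $T$ yields a drawing with $O(b^3)$ crossings per edge. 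No spine, no value gadgets, no edge gadgets. The moral is that the ``tree'' constraint in the theorem should be pushed back to the source problem rather than engineered in the target.
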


\begin{proofsketch}
Membership in \XNLP follows from \cref{cor:2-layer:tslp:xnlp}.
We prove the claim by showing a parameterized logspace reduction from \Bandwidth,
which is known to be \XNLP-hard even on trees~\cite{Bandwidth-Wt-hard-Trees,BodlaenderGNS21:FOCS:XNLP}.
Let $T$ be a tree.
We subdivide each edge~$e$ of~$T$ once by introducing a vertex~$w_e$, and we add $\ell$ leaves adjacent to each original vertex of~$T$ for some $\ell = \Theta(b^2)$.
Let $G$ be the graph obtained in this way.
Next, we show that $\bw(T) \le b$ if and only if $G$ has a 2-layer $k$-planar drawing for some $k = \Theta(b^3)$.

Let $X = V(T)$, let $Y = V(G) \setminus X$, and let $\sigma$ be a vertex order of~$T$ with bandwidth~$b$.
Define a vertex order~$<_X$ on~$X$ by setting $<_X=\sigma$.
Since the stretch of each edge in $\sigma$ is at most $b$, there are at most $b - 1$ vertices between its endpoints.
This implies that we can place the vertices in $Y$ so that there will be only $O(b^3)$ crossings per edge; see \cref{fig:two-layer-xp-hardness-reduction}.
Conversely, in any 2-layer $k$-planar drawing of $G$, the endpoints of every edge $e = \{u, v\}$ of~$T$ are close to each other, as each vertex between $u$ and $v$ causes at least $\ell$ crossings on the path $(u, w_e, v)$.
Hence, the order on $X$ turns into a vertex order of~$T$ with bandwidth at most~$b$.
\end{proofsketch}

\begin{figure}[h]
  \centering
  \includegraphics{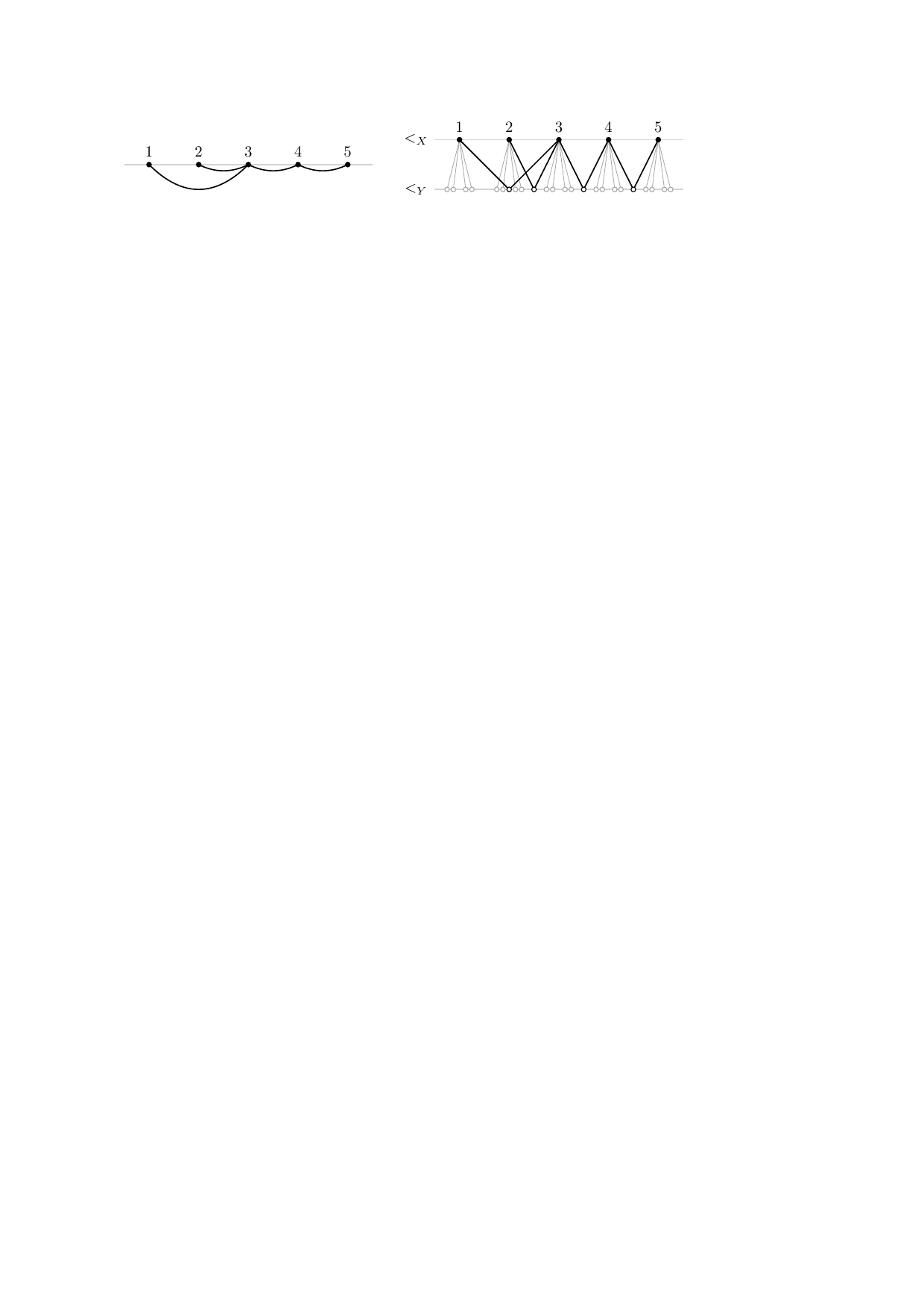}
  \caption{A minimum-bandwidth order of $T$ and the 2-layer drawing of $G$ that we construct.}
  \label{fig:two-layer-xp-hardness-reduction}
\end{figure}

\section{Recognizing Outer $k$-Planar Graphs}
\label{sec:OkP}

In this section, we discuss the parameterized complexity of recognizing outer $k$-planar graphs.

\defproblem
{\OuterkPlanarity}
{A graph $G$ with $n$ vertices and an integer $k$.}
{Does $G$ admit an outer $k$-planar drawing?}

\subsection{An XP-Algorithm}
\label{sub:OkP:XP}

In this subsection, we show our main result, an \XP-algorithm for \OuterkPlanarity with respect to~$k$.
Note that a graph is outer $k$-planar if and only if its biconnected components are outer $k$-planar;
this can be shown in a similar manner as \cite[Theorem 4]{LinearTimeFullOuter2Planarity} for $k=2$.
Hence, we assume the input graph to be biconnected.

\begin{theorem}\label{thm:outer-k-planarity-xp}
  \OuterkPlanarity can be solved in time $2^{O(k \log k)}n^{3k + O(1)}$, that is, \OuterkPlanarity is polynomial-time solvable when $k$ is fixed.
\end{theorem}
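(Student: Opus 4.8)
The plan is to lift the dynamic program behind \cref{thm:2-layer:tskp:xp-alg} from 2-layer drawings to circular ones. As there, $G$ may be assumed biconnected (noted above), and we reason about a hypothetical outer $k$-planar drawing $D=(v_1,\dots,v_n)$ that we try to reconstruct arc by arc. The role of the \emph{window} of $2k+1$ consecutive $X$-vertices is now played by a set~$S$ of $O(k)$ vertices that appears \emph{consecutively} in~$D$ (more precisely, a pair of such blocks guarding the two ends of the arc processed so far); guessing the identities of these roughly $3k$ vertices accounts for the $n^{3k+O(1)}$ factor in the running time. The first step is to prove the circular analogue of \cref{lem:2-layer:tskp:separator}: for the \emph{correct} choice of~$S$, the set $N[S]$ separates the components of $G-N[S]$ into those drawn inside the processed arc and those drawn outside, so that the only interaction between the two sides runs through the edges that cross the ``virtual chord'' bounding the arc.

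Given such a separator lemma, I would set up subproblems as in \cref{sec:2SkP}. For a candidate set~$S$ of consecutive vertices, a compact description~$\mathcal I$ of the edges crossing the bounding chord together with a partial circular drawing of~$G[N[S]]$, a function~$\chi$ from the interface edges to $\{0,\dots,k\}$ prescribing each interface edge's final number of crossings, and a subfamily~$\mathcal C$ of the components cut off on the inside, define $\Dp(S,\mathcal I,\chi,\mathcal C)$ to be~\vtrue if and only if the subgraph induced by~$S$ and the vertices of~$\mathcal C$ admits an outer $k$-planar drawing that is compatible with the partial drawing and in which every interface edge~$e$ receives exactly $\chi(e)$ crossings. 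These values are computed in increasing order of the number of vertices involved, via the circular version of \cref{lem:2-layer:oskp:rec}: to absorb one more component (equivalently, to advance~$S$ by one step along~$D$), one takes a disjunction over all ways of drawing the newly exposed vertices consistently with~$\mathcal I$, subtracting from each still-active interface edge's budget the crossings it receives in the newly drawn part. Finally, $G$ is outer $k$-planar if and only if $\Dp(S,\mathcal I,\chi,\mathcal C)=\vtrue$ for some top-level state in which $\mathcal C$ comprises all components of the corresponding~$G-N[S]$.

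For the running time there are $n^{3k+O(1)}$ ways to choose~$S$ together with the endpoints of the interface edges; the cyclic orders and crossing-count functions of the $O(k)$ boundary objects cost only $2^{O(k\log k)}$; and, exactly as in the proof of \cref{thm:2-layer:tskp:xp-alg}, biconnectivity forces the number of components of $G-N[S]$ to be $\mathrm{poly}(k)$, so the choice of~$\mathcal C$ contributes at most $2^{\mathrm{poly}(k)}$, which a sharper count folds into the $2^{O(k\log k)}$ term. I expect the interface description~$\mathcal I$ to be the main obstacle: unlike bandwidth-bounded graphs (where the earlier degree reduction applies), outer $k$-planar graphs may contain high-degree vertices, so a cut in a circular drawing can be crossed by unboundedly many edges -- a fan sharing one endpoint, or a long chain of pairwise nested chords. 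One must show that such bundles admit an $O(k)$-size encoding (for instance by their common endpoint, or by the innermost chord of a nested chain together with its crossing count) that is both enough to propagate the recurrence and consistent with the separator lemma. The separator lemma itself is also subtler than in the 2-layer case: an edge incident to~$S$ need not cross the short path that witnesses a component reaching ``around'' the arc, so~$S$ must be chosen somewhat larger -- and possibly split into two blocks -- to force the required $k+1$ crossings.
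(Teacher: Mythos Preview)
Your proposal has a genuine gap, and you have in fact put your finger on it yourself: the interface description~$\mathcal I$. In the 2-layer setting, the degree reduction of \cref{obs:2-layer:non-leaf-degree} and \cref{lem:2-layer:degree-reduction} guarantees that $|N[S]|$, and hence both the partial drawing of $G[N[S]]$ and the number of interface edges, are bounded by a function of~$k$. For circular drawings no such reduction is available, so $N[S]$ can contain $\Theta(n)$ vertices and the ``chord'' bounding your processed arc can be crossed by $\Theta(n)$ edges whose individual crossing counts all matter later. Your suggested encodings do not survive this: a fan of many edges sharing an endpoint~$p$ outside the arc cannot be summarised by~$p$ alone, because when $p$ is eventually placed each of those edges has accumulated its own, distinct number of crossings from the processed side, and the DP must know all of them to decide $k$-planarity. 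The same objection applies to a nested chain encoded by its innermost chord. Making any sweep-style DP work would require a structural lemma bounding the \emph{information content} of such an interface by $f(k)$, and nothing in the proposal supplies one.

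The paper avoids this obstacle by abandoning the sweep paradigm altogether. Instead of a window of consecutive vertices, the separator is a single ordered pair $\vec e=(u,v)$; the interface is then exactly the set of edges \emph{piercing} $\{u,v\}$, of which there are at most~$k$ whenever $\{u,v\}$ is a ``good'' pair. The key structural fact, taken from the triangulation lemma for outer $k$-planar graphs, is that inside any good pair one can always find a vertex~$w$ such that both $\{u,w\}$ and $\{v,w\}$ are again pierced by at most~$k$ edges. This yields a \emph{recursive bisection} rather than a sweep: the state is $\Dp((u,v),R,\tau,\chi)$ with $|\dom\tau|=|\dom\chi|\le k$, and the recurrence splits at~$w$ into two subinstances with interfaces of the same bounded size. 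A separate counting lemma then shows that, for fixed $(u,v)$ and a fixed guess of the $\le k$ piercing edges, only $2^{O(k)}$ choices of the side~$R$ are possible, which is what produces the $n^{3k+O(1)}$ bound. The moral is that for the circular problem the right separator is a chord, not a block of consecutive vertices.
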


Let $G$ be the input graph, let $\vec{e} = (u, v)$ be an ordered pair of two distinct vertices of $G$, and let~$R$ be a subset of $V(G) \setminus \{u, v\}$ such that there are at most $k$ edges between $R$ and $L$, where $L = V(G) \setminus (\{u, v\} \cup R)$.
Let $C$ be the set of these edges.
Let $\tau$ be a linear order of $C$, let $c_1, \dots, c_\ell$ denote the vertices of $C$ in the order given by $\tau$, and let $\chi \colon C \to \{0, \dots, k\}$.
Let $G_{\tau, \vec{e}, R}$ be the graph obtained by adding $\ell$ vertices $t^\tau_1, t^\tau_2, \dots, t^\tau_\ell$ to the induced subgraph $G[\{u, v\} \cup R]$ and by connecting $t^\tau_i$ and the endpoint of $c_i$ in $R$ for every $i \in [\ell]$.
Then we define a Boolean value $\Dp(\vec{e}, R, \tau, \chi)$ to be \vtrue if and only if $G_{\tau, \vec{e}, R}$ admits an outer $k$-planar drawing $D$ with the following properties:
\begin{enumerate}[(P1)]
\item \label{enum:cyclic} the cyclic order of~$D$ contains
  $(u, t^\tau_1, t^\tau_2, \dots, t^\tau_\ell, v)$ as a consecutive
  subsequence, and
\item \label{enum:chi} for every edge $c_i \in C$, it holds that
  $\chi(c_i) = \cross_D(c_i)$.
\end{enumerate}
Clearly, the graph $G$ admits an outer $k$-planar drawing if and only if there exists a vertex pair~$\vec{e}$ such that $\Dp(\vec{e}, V(G) \setminus \{u, v\}, f_\emptyset, f_\emptyset)=\vtrue$, where $f_\emptyset$ is the empty function.

We evaluate the recurrence as follows.
For every base case, namely where $R = \emptyset$, $\Dp(\vec{e}, R, \tau, \chi)$ is $\vtrue$ since $C$ is also empty.

When $R \neq \emptyset$, we compute $\Dp(\vec{e}, R, \tau, \chi)$ for smaller sets of type~$R$.
In this case, with the same technique as that used in \cite[Lemma 6]{OuterkPlanarTriangulation}, we can show the following.

\begin{restatable}[\restateref{lem:outer-k-planarity-existence-of-w}]{lemma}{LemOuterKPlanarityExistenceOfW}
  \label{lem:outer-k-planarity-existence-of-w}
  If $G_{\tau, \vec{e}, R}$ admits an outer $k$-planar drawing $D$
  with properties P\ref{enum:cyclic} and P\ref{enum:chi}, there is
  $w \in R$ such that vertex pairs $\{u, w\}$ and $\{v, w\}$ are
  pierced by at most $k$ edges in $D$.
\end{restatable}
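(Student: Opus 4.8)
The plan is to mimic the classical argument used to locate a ``balanced'' or ``central'' vertex inside a separator of bounded size, as in \cite[Lemma~6]{OuterkPlanarTriangulation}. Fix an outer $k$-planar drawing $D$ of $G_{\tau,\vec e,R}$ satisfying P\ref{enum:cyclic} and P\ref{enum:chi}, and read the cyclic order of $D$ starting just after $u$ and ending just before $v$; by P\ref{enum:cyclic} the vertices $t^\tau_1,\dots,t^\tau_\ell$ come first, and then the vertices of $R$ appear in some linear order along the arc $\alpha$ from $u$ to $v$ that avoids the $t^\tau_i$'s (the complementary arc contains only $u$, $v$, and the $t^\tau_i$'s, which are degree-one). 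First I would observe that because $G[\{u,v\}\cup R]$ is biconnected-relevant and, more importantly, every edge of $G_{\tau,\vec e,R}$ has at most $k$ crossings, a vertex pair $\{u,w\}$ is pierced by exactly the set of edges with one endpoint strictly inside the sub-arc of $\alpha$ between $u$ and $w$ and the other endpoint strictly outside it; similarly for $\{v,w\}$ with the sub-arc between $w$ and $v$.

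Next I would set up a counting/monotonicity argument over the candidate positions for $w$ along $\alpha$. Walk along $\alpha$ from $u$ toward $v$ and, for each vertex $w\in R$, let $p(w)$ be the number of edges piercing $\{u,w\}$ and $q(w)$ the number piercing $\{v,w\}$. As $w$ advances by one position, $p$ changes by the net balance of edges incident to the vertex just passed (edges reaching back toward $u$ decrease $p$, edges reaching forward increase it), and $q$ changes in the opposite-but-not-symmetric fashion; the key point is that $p$ is ``small'' near $u$ and $q$ is ``small'' near $v$, and there must be a crossover position. Concretely, consider the first vertex $w^\ast$ along $\alpha$ (from the $u$ side) for which $p(w^\ast)\le k$ fails to propagate—or, cleaner, pick $w^\ast$ to be an endpoint of a ``deepest'' edge, i.e.\ an edge $\{a,b\}$ of $G[R\cup\{u,v\}]$ drawn in $D$ that is crossed by no other edge that itself nests strictly deeper; such an innermost edge exists because there are finitely many edges and $k<\infty$. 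Taking $w=$ one endpoint of that innermost edge, every edge piercing $\{u,w\}$ must cross $\{a,b\}$ or be incident to $w$, and since $\cross_D(\{a,b\})\le k$ and $\deg(w)$ is bounded, one gets the bound $k$ on the number of piercing edges for $\{u,w\}$; the same innermost edge controls $\{v,w\}$ from the other side.

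I would then verify the two bounds carefully. For $\{u,w\}$: any edge $e'$ piercing $\{u,w\}$ has one endpoint in the open sub-arc $(u,w)$ and one in $(w,v)\cup\{t^\tau_i\}$; I claim each such $e'$ either crosses the innermost edge $\{a,b\}$ or is itself incident to $w$ (or $a$), because $\{a,b\}$ together with the arc $\alpha$ encloses a region and $e'$ must exit it. Counting: at most $\cross_D(\{a,b\})\le k$ edges cross $\{a,b\}$, so at most $k$ edges pierce $\{u,w\}$, and symmetrically for $\{v,w\}$. (A small amount of care is needed so that the bound is exactly $k$ and not $k$ plus the degree; this is handled by choosing $w$ to be the endpoint of $\{a,b\}$ closer to $u$ for the first bound and arguing that the very same edge set works for $v$, or by choosing the globally innermost edge with respect to nesting depth, which makes the incident-edge contributions already counted among the crossings of $\{a,b\}$.)

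The main obstacle I anticipate is precisely this last bookkeeping: ensuring that the single vertex $w$ works \emph{simultaneously} for both $\{u,w\}$ and $\{v,w\}$ with the \emph{same} threshold $k$, rather than getting $2k$ or $k+O(\deg)$ for one of the two pairs. The resolution should be to pick $w$ as an endpoint of a maximally nested (innermost) edge of $D$ restricted to $G[\{u,v\}\cup R]$: by innermost-ness, the at most $k$ edges crossing that edge are exactly the edges that ``see across'' $w$ in both directions, so they simultaneously account for everything piercing $\{u,w\}$ and everything piercing $\{v,w\}$. Everything else in the argument is the routine translation between ``piercing'' in the combinatorial circular-drawing model of \cref{sec:preliminaries} and ``crossing/nesting'' of chords, which I would state once and then use silently.
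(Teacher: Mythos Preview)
Your innermost-edge argument has a real gap: the central claim that every edge piercing $\{u,w\}$ must cross the chosen innermost edge $\{a,b\}$ is false. Take the cyclic order $(u,t_1,v,r_1,r_2,\dots,r_m)$ and let the edges of $G_{\tau,\vec e,R}$ be $\{r_1,r_2\}$, $\{t_1,r_1\}$, and $\{v,r_q\}$ for all $3\le q\le m$ (this drawing is outer $k$-planar as soon as $m-2\le k$). The only candidate for an innermost edge on the $R$-arc is $\{r_1,r_2\}$, and $\cross_D(\{r_1,r_2\})=0$. Yet with $w=r_1$ (or $w=r_2$), each of the $m-2$ edges $\{v,r_q\}$ pierces $\{u,w\}$ without crossing $\{r_1,r_2\}$. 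Innermost-ness only rules out edges nested \emph{inside} $\{a,b\}$; it says nothing about edges that jump over $w$ while lying entirely outside $\{a,b\}$. The fix you propose (``the at most $k$ edges crossing that edge are exactly the edges that see across $w$ in both directions'') is precisely the false statement, so the obstacle you correctly anticipated is not resolved by it.

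The paper's proof goes a different way. It first observes that property~P\ref{enum:cyclic} makes $\{u,v\}$ itself a pair pierced by exactly $\ell\le k$ edges (only the $\ell$ leaf edges through the $t^\tau_i$ can pierce it), and then applies the splitting step of \cite[Lemma~6]{OuterkPlanarTriangulation} verbatim, noting that the maximality hypothesis there is never used. That step says: whenever a pair $\{v_i,v_r\}$ is pierced by at most $k$ edges, some intermediate vertex $v_j$ has both $\{v_i,v_j\}$ and $\{v_j,v_r\}$ pierced by at most $k$ edges. Your sweep idea with $p(w)$ and $q(w)$ is much closer in spirit to how that step is actually proved than the innermost-edge heuristic; if you want a self-contained argument, develop the sweep, using the bound on the piercers of $\{u,v\}$ as the boundary condition, and drop the nesting approach.
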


Hence, we can compute $\Dp(\vec{e}, R, \tau, \chi)$ by checking all the ways to split the instance at the vertex~$w$.
Let $\{R_1, R_2\}$ be a partition of $R \setminus \{w\}$, let $L_1 = V(G_{\tau, \vec{e}, R}) \setminus (\{u, w\} \cup R_1)$ and let $L_2 = V(G_{\tau, \vec{e}, R}) \setminus (\{v, w\} \cup R_2)$.
For $i \in [2]$, let $C_i$ be the set of edges between $R_i$ and $L_i$, let $\ell_i = |C_i|$, let $\tau_i$ be a linear order of $C_i$, and let $\chi_i \colon C_i \to \{0, \dots, k\}$ be a function.
We say that $w, R_1, \tau_1, \chi_1, R_2, \tau_2, \chi_2$ are \emph{consistent}
if the following holds ($\cross_{\tau, \tau_1, \tau_2}$ is defined below):
  \begin{itemize}
    \item there are at most $k$ edges between $L_1$ and $R_1$ and at most $k$ edges between~$L_2$ and~$R_2$,
    \item for every edge $c$ between $L$ and $R_i$, $\chi_i(c) = \chi(c) - \cross_{\tau, \tau_1, \tau_2}(c)$ holds for $i \in [2]$,
    \item for every edge $c$ between $L$ and $\{w\}$, $\chi(c) = \cross_{\tau, \tau_1, \tau_2}(c)$,
    \item for every edge $c$ between $\{v\}$ and $R_1$, $\chi_1(c) + \cross_{\tau, \tau_1, \tau_2}(c) \leq k$,
    \item for every edge $c$ between $\{u\}$ and $R_2$, $\chi_2(c) + \cross_{\tau, \tau_1, \tau_2}(c) \leq k$, and
    \item for every edge $c$ between $R_1$ and $R_2$, $\chi_1(c) + \chi_2(c) + \cross_{\tau, \tau_1, \tau_2}(c) \leq k$.
  \end{itemize}
Informally, the value $\cross_{\tau, \tau_1, \tau_2}(c)$ is the number of crossings on $c$ inside the ``triangle'' consisting of $\{u, v, w\}$.
To define it formally, let us consider a circular drawing $D_H = (u, t^\tau_{1}, \dots, t^\tau_{\ell}, v, t^{\tau_2}_{\ell_2}, \dots t^{\tau_2}_{1}, w, t^{\tau_1}_{\ell_1}, \dots, t^{\tau_1}_{1})$ of a graph $H$.
For each edge $c \in (E \cap \{u, v\}) \cup C \cup C_1 \cup C_2$, the graph $H$ contains an edge $f(c)$ defined as follows.
If $c = \{u, v\}$, $f(c)$ simply connects $u$ and $v$.
Suppose that $c$ is incident to exactly one vertex $x \in \{u, v, w\}$.
This implies that $c$ is contained in exactly one of $C$, $C_1$, and $C_2$, which also means that $c$ is contained in the domain of exactly one $\tau' \in \{\tau, \tau_1, \tau_2\}$.
Then $f(c)$ connects $x$ and $t^{\tau'}_{\tau'(c)}$.
Otherwise, $c$ is contained in exactly two of $C$, $C_1$, and $C_2$, since $c$ is not contained in $C \cap C_1 \cap C_2$.
Similarly to the previous case, $c$ is contained in the domains of distinct $\tau', \tau'' \in \{\tau, \tau_1, \tau_2\}$.
Then $f(c)$ connects $t^{\tau'}_{\tau'(c)}$ and $t^{\tau''}_{\tau''(c)}$.
Now we define $\cross_{\tau, \tau_1, \tau_2}(c) = \cross_{D_H}(f(c))$.

We are ready to state \cref{lem:outer-k-planarity-dp}, which formalizes 
the above idea of splitting an instance $((u,v), R, \tau, \chi)$ 
at a vertex~$w$ in~$R$ into two subinstances $((u,w), R_1, \tau_1, \chi_1)$ 
and $((w,v), R_2, \tau_2, \chi_2)$; see 
\cref{fig:outer-k-planarity-xp-image}, where some edges are curved for better visualization.

\begin{figure}[h]
    \centering
    \includegraphics{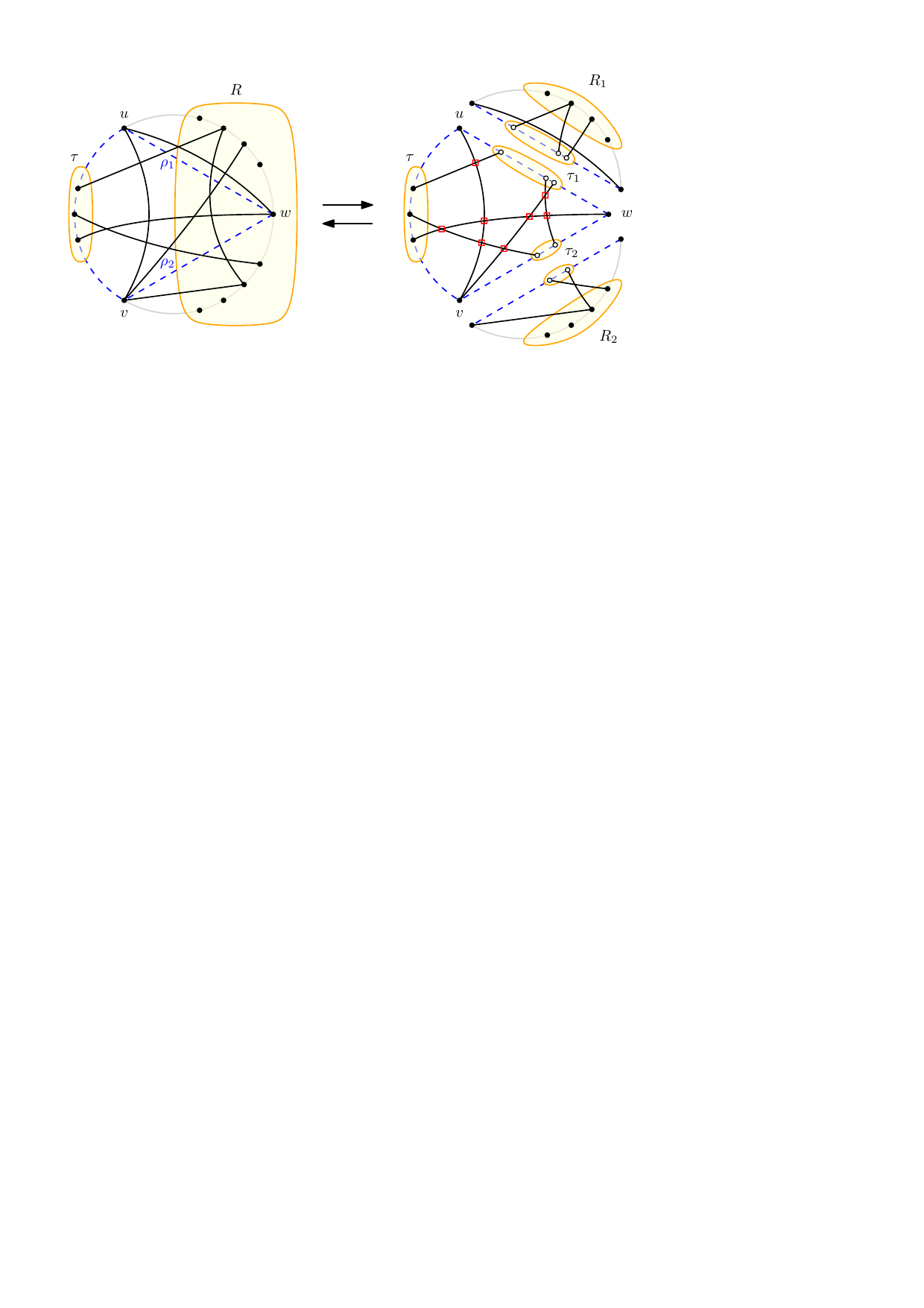}
    \caption{An image of \cref{lem:outer-k-planarity-dp}. The red boxes are the crossings considered in $\cross_{\tau, \tau_1, \tau_2}$.}
    \label{fig:outer-k-planarity-xp-image}
\end{figure}

\begin{lemma}\label{lem:outer-k-planarity-dp}
  For $\vec{e} = (u,v)$, it holds that
  \begin{align*}
    \Dp(\vec{e}, R, \tau, \chi) = \bigvee_{\substack{w, R_1, \tau_1, \chi_1, R_2, \tau_2, \chi_2 \\ \text{consistent}}} \Dp((u, w), R_1, \tau_1, \chi_1) \;\land\; \Dp((w, v), R_2, \tau_2, \chi_2).
  \end{align*}
\end{lemma}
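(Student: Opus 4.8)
The plan is to prove the equivalence in \cref{lem:outer-k-planarity-dp} by a two-way argument, mirroring the structure of the proof of \cref{lem:2-layer:oskp:rec} but with the extra bookkeeping demanded by the circular setting and the three ``boundary'' vertices $u$, $v$, $w$.

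\textbf{Forward direction.} Suppose $\Dp(\vec e,R,\tau,\chi)=\vtrue$, witnessed by an outer $k$-planar drawing $D$ of $G_{\tau,\vec e,R}$ with properties P\ref{enum:cyclic} and P\ref{enum:chi}. First I would invoke \cref{lem:outer-k-planarity-existence-of-w} to obtain a vertex $w\in R$ such that the pairs $\{u,w\}$ and $\{v,w\}$ are each pierced by at most $k$ edges of $D$. The cyclic order of $D$, read off starting at $u$, then splits into three contiguous arcs: the arc from $u$ to $w$, the arc from $w$ to $v$, and the arc from $v$ back to $u$ (which contains $t^\tau_1,\dots,t^\tau_\ell$). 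Let $R_1$ be the set of original vertices of $R$ lying strictly between $u$ and $w$ on that cycle, and $R_2$ those strictly between $w$ and $v$; this is the partition of $R\setminus\{w\}$ I will feed into the disjunction. The edges pierced by $\{u,w\}$ are exactly the edges crossing the $\{u,w\}$ arc, i.e.\ the edges between $R_1$ and its complement $L_1$, so there are at most $k$ of them; symmetrically for $R_2$. Now I would define the subdrawings: $D_1$ is $D$ restricted to $G[\{u,w\}\cup R_1]$, with each edge $c_i\in C_1$ replaced by a short edge from the endpoint of $c_i$ in $R_1$ to a fresh apex $t^{\tau_1}_i$ placed inside the $\{u,w\}$ arc in the position where $c_i$ leaves it; $\tau_1$ is the order in which those edges cross the arc, and $\chi_1(c_i)$ is the number of crossings $c_i$ has \emph{within the region bounded by $R_1$'s side}, which by construction equals $\chi(c_i)-\cross_{\tau,\tau_1,\tau_2}(c_i)$ (the remaining crossings of $c_i$ all sit in the $u$--$v$--$w$ triangle and are counted by $\cross_{\tau,\tau_1,\tau_2}$). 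One checks $D_1$ is outer $k$-planar with the right boundary word, so $\Dp((u,w),R_1,\tau_1,\chi_1)=\vtrue$; symmetrically $\Dp((w,v),R_2,\tau_2,\chi_2)=\vtrue$. Finally one verifies each of the six bullet conditions in the definition of \emph{consistent}: each is just the statement that the original crossing count on an edge (which is $\le k$, or exactly $\chi(c)$, as appropriate) decomposes as the crossings inside subinstance~1, plus the crossings inside subinstance~2, plus the crossings in the triangle recorded by $\cross_{\tau,\tau_1,\tau_2}$ — reading off which terms are present from which of $C$, $C_1$, $C_2$ the edge $c$ belongs to.

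\textbf{Reverse direction.} Conversely, suppose the right-hand side is \vtrue, so there exist consistent $w,R_1,\tau_1,\chi_1,R_2,\tau_2,\chi_2$ together with outer $k$-planar drawings $D_1$ of $G_{\tau_1,(u,w),R_1}$ and $D_2$ of $G_{\tau_2,(w,v),R_2}$ realizing the respective boundary words and crossing vectors. I would glue $D_1$ and $D_2$ along the shared picture of the triangle: concretely, take the cyclic order obtained by concatenating the $u$--$w$ portion of $D_1$ (containing $R_1$), then $w$, then the $w$--$v$ portion of $D_2$ (containing $R_2$), then $v$, then the block $t^\tau_1,\dots,t^\tau_\ell$, back to $u$ — this is exactly the drawing $D_H$ used to define $\cross_{\tau,\tau_1,\tau_2}$, with the $R_i$ expanded back in. For every edge $c$ of $G_{\tau,\vec e,R}$, its crossing number in the glued drawing is the sum of its crossings inside the $D_1$ region ($\chi_1(c)$ or $0$), inside the $D_2$ region ($\chi_2(c)$ or $0$), and inside the triangle ($\cross_{\tau,\tau_1,\tau_2}(c)$). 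The six consistency conditions were chosen precisely so that each such sum is $\le k$, and so that for $c\in\delta(S)$ the sum equals $\chi(c)$; the first consistency bullet guarantees the ``interface'' edges between $R_i$ and $L_i$ stay within $k$ so that $D_1,D_2$ were legitimately defined. Hence the glued drawing is outer $k$-planar, has $(u,t^\tau_1,\dots,t^\tau_\ell,v)$ as a consecutive subword, and realizes $\chi$, so $\Dp(\vec e,R,\tau,\chi)=\vtrue$.

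\textbf{Main obstacle.} The conceptual content is easy; the real work — and the place a careful write-up must be meticulous — is the case analysis for $\cross_{\tau,\tau_1,\tau_2}$ and the claim that every crossing of an edge $c$ in the global drawing is accounted for exactly once, in exactly one of the three regions. This hinges on the observation that $N[S]=\{u,v\}\cup N(R\cup\{u,v\})$ and the three arcs $u$--$w$, $w$--$v$, $v$--$u$ partition the disk into regions whose interiors are pairwise disjoint, so no crossing is double-counted and none is missed; and on matching, for each of the (up to two) apex-vertices that a single original edge $c$ spawns, the right endpoint of the right substitute edge $f(c)$ in $D_H$. I would present this as a short lemma (``each edge $c$ has $\cross_{D}(c)=[c\ \text{contributes to}\ C_1]\chi_1(c)+[c\ \text{contributes to}\ C_2]\chi_2(c)+\cross_{\tau,\tau_1,\tau_2}(c)$'') whose proof is the promised routine verification, citing the analogous argument in \cite[Lemma 6]{OuterkPlanarTriangulation} for the geometric separation step.
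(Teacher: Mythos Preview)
Your proposal is correct and follows the same approach as the paper: split at the vertex $w$ supplied by \cref{lem:outer-k-planarity-existence-of-w}, read off $(R_i,\tau_i,\chi_i)$ from the two subdrawings, verify the six consistency clauses via the three-region crossing decomposition, and glue for the converse. The only place the paper is more explicit than your sketch is in defining $\tau_1,\tau_2$: it takes a curve $\rho_i$ that follows the chord through the two endpoints but is perturbed to avoid every crossing point among piercing edges (and to keep the edge $\{u,w\}$, if present, on the $R_1$ side), which resolves the ambiguity in your phrase ``the order in which those edges cross the arc''; your stray remark about $N[S]$ appears to be a leftover from the 2-layer argument and plays no role here.
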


\begin{proof}
  Suppose that $\Dp(\vec{e}, R, \tau, \chi) = \vtrue$, that is, there
  is an outer $k$-planar drawing
  $D = (u, t^{\tau}_{1}, \dots, t^{\tau}_{\ell}, v, v_1, \dots, v_r)$
  of $G_{\tau, e, R}$ that satisfies
  properties~P\ref{enum:cyclic}~and~P\ref{enum:chi}.
  We assume that edges are drawn as straight-line segments in $D$.
  By \cref{lem:outer-k-planarity-existence-of-w} there is a vertex $w \in R$ for some $w = v_i$ such that both $\{u, w\}$ and $\{v, w\}$ have at most $k$ piercing edges.
  The vertices $u, v, w$ divide the circumference into the three arcs $\rho_{\bar{u}}, \rho_{\bar{v}}, \rho_{\bar{w}}$, where $\rho_{\bar{x}}$ is the arc between vertices other than $x$ that does not pass through $x$ for each $x \in \{u, v, w\}$.
  Then, following the line through $u$ and $w$, we can take a curve $\rho_1$ between $u$ and $w$, inside the circle, such that it crosses exactly the piercing edges, does not pass through any crossing between piercing edges, and separates $v$ and the segment representing the edge $\{u, w\}$ (if it exists).
  We can take a curve $\rho_2$ between $v$ and $w$ similarly.
  Let $R_1 = \{v_{i+1}, \dots, v_r\}$ and $R_2 = \{v_1, \dots, v_{i-1}\}$.
  We then define a linear order $\tau_1$ on the edges between $R_1$ and $L_1 \coloneqq V(G_{\tau, \vec{e}, R}) \setminus (\{u, w\} \cup R_1)$ in such a way that $\tau_1$ orders those edges in ascending order of the distance between $u$ and the crossing with the curve $\rho_1$.
  Similarly, $\tau_2$ is defined in such a way that $\tau_2$ orders the edges $R_2$ and $L_2 \coloneqq V(G_{\tau, \vec{e}, R}) \setminus (\{v, w\} \cup R_2)$ in ascending order of the distance between $w$ to the crossing with the curve $\rho_2$.
  If we cut the drawing~$D$ along the curves as \cref{fig:outer-k-planarity-xp-image}, the drawing $D$ can be decomposed into three subdrawings $D_H$, $D_1$, and $D_2$: $D_H$ is the drawing inside the region surrounded by arcs $\rho_{\bar{w}}$, $\rho_2$, and $\rho_1$; $D_1$ is the drawing inside the region surrounded by arcs $\rho_{\bar{v}}$ and $\rho_1$; $D_2$ is the drawing inside the region surrounded by arcs $\rho_{\bar{u}}$ and $\rho_2$.
  Since each crossing on the edges between $R_1$ and $L_1$ is contained in exactly one of $D_H$, $D_1$, $D_2$, we have $\chi_1(c) = \chi(c) - \cross_{\tau, \tau_1, \tau_2}(c)$ for each edge $c$ between $L$ and $R_1$, and we have $\chi_1(c) + \cross_{\tau, \tau_1, \tau_2}(c) \le k$ for each edge $c$ between $\{v\}$ and $R_1$.  
  As $D_1$ is a circular drawing of $G_{\tau_1, (u, w), R_1}$ that
  satisfies~P\ref{enum:cyclic} and~P\ref{enum:chi}, we have
  $\Dp((u, w), R_1, \tau_1, \chi_1) = \vtrue$, and similarly, we have
  $\Dp((w, v), R_2, \tau_2, \chi_2) = \vtrue$.
  Since each edge $c$ between $R_1$ and $R_2$ satisfies $\chi_1(c) + \chi_2(c) + \cross_{\tau, \tau_1, \tau_2}(c) \le k$, we conclude that $w, R_1, \tau_1, \chi_1, R_2, \tau_2, \chi_2$ are consistent.

  Suppose that there are consistent $w, R_1, \tau_1, \chi_1, R_2, \tau_2, \chi_2$ such that $\Dp((u, w), R_1, \tau_1, \chi_1) = \Dp((w, v), R_2, \tau_2, \chi_2) = \vtrue$.
  Let $D_1$ and~$D_2$ be circular drawings of
  $G_{\tau_1, (u, w), R_1}$ and $G_{\tau_2, (w, v), R_2}$,
  respectively, that satisfy~P\ref{enum:cyclic} and~P\ref{enum:chi}.
  Let $\sigma_1 = (w, \dots, u)$ and $\sigma_2 = (v, \dots, w)$ be the linear orders of $\{w, u\} \cup R_1$ and $\{v, w\} \cup R_2$ obtained from $D_1$ and $D_2$ by removing the vertices $t^{\tau_1}_i$ and $t^{\tau_2}_j$ for $i \in [\ell_1]$ and $j \in [\ell_2]$, respectively.
  Then we obtain a cyclic order $\sigma$ of $G_{\tau, e, R}$ by concatenating $\sigma_2, \sigma_1, (u, t^\tau_1, \dots, t^\tau_\ell, v)$ in this order, identifying the two occurrences of each of $w, u, v$ with each other.
  It is not difficult to see that, by combining $D_H$, $D_1$, and
  $D_2$ as in \cref{fig:outer-k-planarity-xp-image}, we obtain a
  drawing~$D$ with linear order~$\sigma$ that
  satisfies~P\ref{enum:cyclic} and~P\ref{enum:chi}.
  In other words, $\Dp(\vec{e}, R, \tau, \chi)=\vtrue$.
\end{proof}

Na\"ively, the number of $R$'s to consider is $\Theta(2^{n})$, which does not give an \XP-algorithm.
However, the following lemma assures that it is not so large.

\begin{restatable}[\restateref{lem:outer-k-planarity-xp-separator-size}]{lemma}{LemOuterKPlanarityXPSeparatorSize}
\label{lem:outer-k-planarity-xp-separator-size}
  Let $G$ be a biconnected graph that admits an outer $k$-planar drawing~$D$.
  Let $\{u, v\}$ be a pair of distinct vertices of $G$ that has at most $k$ piercing edges in $D$.
  Then the number of $R$'s such that $\Dp((u, v), R, \tau, \chi) = \vtrue$ for some $\tau$ and $\chi$ is at most $2^{O(k)}m^{k+O(1)}$, where $m = |E(G)|$.
  Moreover, such $R$'s can be enumerated in \mbox{$2^{O(k)}m^{k+O(1)}$ time.}
\end{restatable}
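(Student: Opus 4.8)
The plan is to bound, and simultaneously enumerate, the relevant sets $R$ in two stages: first bound the edge set $C = E_G(R, L)$ that a valid $R$ induces, and then bound, for each fixed such $C$, how many $R$'s realize it. By definition of $\Dp$, a valid $R$ has $|C| \le k$, so there are at most $\sum_{i=0}^{k}\binom{m}{i} = m^{k+O(1)}$ possibilities for $C$. Fix one. Deleting $C$ from $G - \{u,v\}$ removes every edge between $R$ and $L$, so each connected component of $(G - \{u,v\}) - C$ lies entirely in $R$ or entirely in $L$; hence $R$ must be a union of such components, and the induced cut is exactly $C$ only for those unions whose boundary is all of $C$. Among the components of $(G-\{u,v\}) - C$, at most $2|C| \le 2k$ are incident to an endpoint of a $C$-edge, and the remaining ones are exactly the components of $G - \{u,v\}$ that meet no edge of $C$. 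So the whole argument reduces to one claim: \emph{$G - \{u,v\}$ has only $O(k)$ connected components}. Granting this, $(G-\{u,v\}) - C$ has $O(k)$ components, so there are at most $2^{O(k)}$ candidate unions per $C$, and $2^{O(k)} m^{k+O(1)}$ valid $R$'s in total.

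To prove the claim, I would first use biconnectivity of $G$: each component $P$ of $G - \{u,v\}$ has an edge to $u$ and an edge to $v$ (otherwise $u$ or $v$ would be a cut vertex). Picking, for each of the $t$ components $P_1,\dots,P_t$, a $u$--$v$ path $\pi_i$ whose internal vertices all lie in $P_i$, we obtain $t$ pairwise internally-vertex-disjoint (hence pairwise edge-disjoint) $u$--$v$ paths. In $D$, add the chord $\{u,v\}$; since $\{u,v\}$ is pierced by at most $k$ edges, the chord receives at most $k$ crossings and splits the disk into two sides. At most $k$ of the $\pi_i$ use an edge crossing the chord (they are edge-disjoint), so at least $t-k$ of them lie entirely on one side, and hence some side carries $r \ge (t-k)/2$ of the paths, each of which has all of its internal vertices on the circular arc delimited by $u$ and $v$ on that side.

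The core is then a counting argument for these $r$ paths. Linearize the arc as $u = p_0, p_1, \dots, p_\mu, p_{\mu+1} = v$, so two edges cross in $D$ iff their position intervals interleave. For each of the $r$ paths $\pi_i$ let $s_i$ be its second vertex (a genuine arc-vertex of $P_i$, as $P_i \ne \emptyset$), and choose $i^\ast$ maximizing $\mathrm{pos}(s_{i^\ast})$. For any other such path $\pi_j$: it contains $u$ (position $0$) and $v$ (position $>\mathrm{pos}(s_{i^\ast})$) but not $s_{i^\ast}$; moreover $u$ has degree $1$ in $\pi_j$ with its sole neighbor at position $<\mathrm{pos}(s_{i^\ast})$. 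Tracing $\pi_j$ therefore yields an edge $\{a,b\}$ with $0 < \mathrm{pos}(a) < \mathrm{pos}(s_{i^\ast}) < \mathrm{pos}(b)$, whose interval interleaves that of $\{u, s_{i^\ast}\}$, so $\{a,b\}$ crosses $\{u,s_{i^\ast}\}$. These witnessing edges are pairwise distinct (one inside each $P_j$), so $\{u,s_{i^\ast}\}$ has at least $r-1$ crossings; as $D$ is outer $k$-planar, $r-1\le k$, whence $t \le 3k+2$.

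Finally, for the enumeration I would iterate over all $C\subseteq E(G)$ with $|C|\le k$, compute the components of $(G-\{u,v\}) - C$ in polynomial time (there are $O(k)$ of them by the claim), iterate over all $2^{O(k)}$ unions of these components, and keep a union $R$ whenever $E_G(R, V(G)\setminus(\{u,v\}\cup R)) = C$. This runs in $2^{O(k)} m^{k+O(1)}$ time and produces a superset of the valid $R$'s of the stated size, since for every valid $R$ the set $C = E_G(R,L)$ is among those enumerated. I expect the component-counting step to be the main obstacle: because the internal vertices of the $u$--$v$ paths lie on the circle rather than in the interior of the disk, one cannot appeal to a clean topological separation and must instead reason combinatorially about interleaving position intervals, and care is needed to ensure the $r-1$ crossings pinned on a single edge are genuinely distinct.
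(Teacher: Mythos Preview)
Your proposal is correct and follows essentially the same two-stage approach as the paper: enumerate candidate cut sets $C$ of size at most $k$, then unions of the $O(k)$ components of $(G-\{u,v\})-C$, with the key step being the bound on the number of components of $G-\{u,v\}$ via internally-disjoint $u$--$v$ paths. The only cosmetic difference is that the paper packages the path bound as a standalone claim (at most $2k+3$ internally vertex-disjoint $u$--$v$ paths in any outer $k$-planar graph, proved by a two-case analysis on whether some edge pierces $\{u,v\}$), whereas you invoke the lemma's piercing hypothesis directly to peel off the $\le k$ chord-crossing paths and then run the same ``farthest first neighbor'' crossing argument on each arc.
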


\begin{proofsketch}
  We show that, given the set of edges piercing $\{u, v\}$, there are $2^{O(k)}$ possibilities for $R$ that are separated by these piercing edges.
  Since $R$ is a union of components in the graph obtained from $G[V(G)\setminus \{u, v\}]$ by deleting the piercing edges, it suffices to show that there are only $O(k)$ components in this graph.
  The proof shares the same underlying idea with \cref{lem:2-layer:tskp:separator}, but it is more involved as the maximum degree is no longer bounded.
  The upper bound can be obtained by considering that there are at most $k$ piercing edges of $\{u, v\}$ and the number of components in $G[V(G)\setminus \{u, v\}]$ is at most $2k + 3$.
\end{proofsketch}

With \cref{lem:outer-k-planarity-xp-separator-size} and the fact that $m = O(\sqrt{k} n)$~\cite{pach-toth-edge-density-Combinatorica97}, the number of combinations of arguments $\{e, R, \sigma, \chi\}$ to consider is at most
\begin{align*}
  n^2 \cdot 2^{O(k)} m^{k + O(1)} \cdot k! \cdot (k+1)^k
  &=  2^{O(k \log k)} n^{k + O(1)}.
\end{align*}
To compute the value $\Dp(\vec{e}, R, \sigma, \chi)$ as in \cref{lem:outer-k-planarity-dp}, we guess at most 
\begin{align*}
  n \cdot (2^{O(k \log k)} n^{k + O(1)})^2 = 2^{O(k \log k)} n^{2k + O(1)}
\end{align*}
possible combinations of $w, R_1, \tau_1, \chi_1, R_2, \tau_2, \chi_2$.
For each guess, checking the consistency takes $n^{O(1)}$ time.
Hence, the total running time to fill the table is $2^{O(k \log k)} n^{3k + O(1)}$.
This completes the proof of \cref{thm:outer-k-planarity-xp}.

\subsection{NP-Hardness of Approximation}
\label{sub:OkP:approx:hardness}

In this subsection, we show an inapproximability result for \OuterkPlanarity even for graphs that are almost trees, whereas trees can be drawn without any crossings.

\begin{theorem}\label{thm:outer-k-planarity-np-hard}
  For any fixed $c \ge 1$, there is no polynomial-time $c$-approximation algorithm for \OuterkPlanarity unless $\cP = \NP$, even for graphs with feedback vertex number~$1$.
\end{theorem}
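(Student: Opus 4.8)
The plan is to reduce from a strongly \NP-hard problem whose optimum separates two ``gap'' values that are a constant factor apart, so that any $c$-approximation would distinguish them. Since \OuterkPlanarity asks for a drawing minimizing the maximum number of crossings per edge, and since we must work within the class of graphs with feedback vertex number $1$, the natural target is a variant of \GapBandwidth or, more directly, \Bandwidth itself: by \cref{thm:two-sided-k-planar-w-hardness}'s reduction we already know bandwidth is tightly linked to local crossing numbers on tree-like gadgets. I would start from a graph~$G_0$ (a tree or a graph of small feedback vertex number) together with two thresholds $b_1 < b_2$ such that it is \NP-hard to decide whether $\bw(G_0) \le b_1$ or $\bw(G_0) \ge b_2$, with $b_2/b_1$ larger than any prescribed constant. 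Such a gap version of bandwidth can be obtained by a standard ``blow-up'' amplification: replacing each vertex by a clique or a star of appropriate size multiplies the bandwidth, so one can push $b_2/b_1$ above any fixed~$c$ while keeping the instance polynomial and of bounded feedback vertex number (a single long cycle plus pendant trees suffices to realize feedback vertex number~$1$).

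First I would build, from $G_0$, a graph $G$ with feedback vertex number~$1$: take one distinguished cycle $C$ whose vertices encode the positions $1,\dots,n$ of a would-be linear order, so that a circular drawing of $G$ placing $C$ in its natural cyclic order corresponds to a linear layout; attach to $C$, via pendant subtrees, gadgets that force the edges of $G_0$ to be realized as short ``chords'' whose crossing number is controlled by the stretch of the corresponding edge in the layout. The key design goal is a two-sided correspondence: (i) from a layout $\sigma$ of $G_0$ of bandwidth $b$, construct an outer drawing of $G$ in which every edge has at most $g(b)$ crossings, for an explicit increasing polynomial $g$; and (ii) from an outer $k$-planar drawing of $G$, extract a layout of $G_0$ of bandwidth at most $g^{-1}(k)$, by reading off the cyclic order of the anchor vertices on $C$ and using the crossing bound to certify that each $G_0$-edge spans few positions. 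Choosing the gadget sizes so that $g(b_1) < g(b_2)$ remains a constant-factor gap then finishes the argument: a polynomial-time $c$-approximation for \OuterkPlanarity would, applied to $G$, return a value within a factor $c$ of the optimum $k^\ast$, and since $k^\ast \in \{\,\le g(b_1)\,\}$ or $k^\ast \in \{\,\ge g(b_2)\,\}$ with $g(b_2) > c\cdot g(b_1)$, this would decide the gap-bandwidth instance.

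The main obstacle is controlling crossings \emph{precisely} in a circular (outer) drawing rather than a 2-layer one: in the 2-layer setting of \cref{thm:two-sided-k-planar-w-hardness} the two layers cleanly separate ``structure'' edges from ``padding'' leaves, but on a circle every chord can potentially cross every other chord, so the padding gadgets must be engineered to pin down not just which vertices are adjacent on $C$ but the entire local rotation, preventing the drawing from ``cheating'' by rerouting. I would handle this by making each anchor on $C$ carry a heavy bundle of $\Theta(b^2)$ pendant edges (as in the 2-layer reduction) so that any vertex separating two anchors of a $G_0$-edge forces $\Omega(b^2)$ crossings on the length-two connecting path, which both upper-bounds the admissible stretch in a $k$-planar drawing and, crucially, forces the cycle $C$ to appear in convex position in its natural cyclic order (any vertex inserted ``inside'' an arc of $C$ would be cut off by the bundles). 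The second delicate point is keeping the feedback vertex number at exactly~$1$ while still having enough gadgetry: this I would manage by routing all pendant structure through trees hanging off $C$, so that deleting a single vertex of $C$ destroys all cycles. The routine parts — verifying the explicit crossing bounds $g(b)$ in direction (i), and the amplification arithmetic making $g(b_2)/g(b_1) > c$ — I would relegate to calculation.
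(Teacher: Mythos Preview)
Your high-level strategy---reduce from the constant-factor inapproximability of \Bandwidth and show a two-sided correspondence between bandwidth and the local circular crossing number---is exactly what the paper does. But your construction is far more complicated than necessary, and in places shaky. The paper's reduction is one line: given a tree $T$, add a single universal vertex $w$ adjacent to every vertex of $T$. Deleting $w$ leaves a tree, so the feedback vertex number is~$1$. The fan of edges from $w$ then does all the work: an edge $\{v_i,v_j\}$ of $T$ with stretch $s$ in the induced linear order crosses $s-1$ of the edges $\{w,v_\ell\}$, which immediately gives both directions (if $G$ is outer $k$-planar then $\bw(T)\le k-1$; if $\bw(T)\le b$ then $G$ is outer $(5b-5)$-planar). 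No cycle $C$, no pendant bundles, no rotation-pinning gadgets are needed.

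Two concrete issues with your plan. First, you propose to manufacture the gap yourself by ``blow-up amplification,'' but bandwidth on \emph{trees} is already \NP-hard to approximate within any constant factor, so you can start directly from a tree and never worry about preserving low feedback vertex number through a blow-up (your clique blow-up, in particular, would destroy it). Second, your cycle-plus-pendant-trees construction is not obviously sound: pendant leaves on a circle, placed next to their parent, contribute no crossings, so they do not serve the same role as the leaf bundles in the 2-layer reduction; what actually forces large crossing counts in the circular setting is a vertex whose incident edges sweep across the whole drawing---precisely the universal vertex. Your proposal never identifies this mechanism and instead imports the 2-layer gadgetry, which does not transfer cleanly.
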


Our proof is by reduction from \Bandwidth on trees, which is \NP-hard to approximate within any constant factor~\cite{HardnessApproximatingBandwidth}.
In other words, given a tree~$T$, there is no polynomial-time algorithm to distinguish between the cases $\bw(T) \le b$ and $\bw(T) > cb$ for any constant $c \ge 1$, unless $\cP = \NP$. 

Let $T$ be a tree, and let $n$ denote $|V(T)|$.
We construct a graph $G$ from $T$ by adding a vertex~$w$ and making it adjacent to all vertices of $T$.
Clearly, $G$ has feedback vertex number~$1$ since $G[V(G) \setminus \{w\}]$ is a tree.

\begin{restatable}[\restateref{lem:okp:app-hard:forward}]{lemma}{lemOkpAppHardForward}
\label{lem:okp:app-hard:forward}
    If $G$ is outer $k$-planar, then $\bw(T) \le k - 1$.
\end{restatable}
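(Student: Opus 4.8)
The plan is to read a low-bandwidth vertex order of~$T$ out of an outer $k$-planar drawing of~$G$, exploiting that~$w$ is adjacent to every vertex of~$T$. Fix an outer $k$-planar drawing~$D$ of~$G$; by definition this is a cyclic order of $V(G)=V(T)\cup\{w\}$ in which every edge is crossed at most $k$ times. Delete~$w$ from this cyclic order to obtain a cyclic order of $V(T)$, and then cut this cyclic order open at the gap formerly occupied by~$w$; let $\sigma=(y_1,\dots,y_n)$ be the resulting linear order of $V(T)$. I claim that $\sigma$, regarded as a vertex order of~$T$, has bandwidth at most $k-1$, which is exactly the assertion $\bw(T)\le k-1$.

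To bound the stretch of a fixed edge $e=\{y_p,y_q\}$ of~$T$ with $p<q$ (its stretch in~$\sigma$ being $q-p$), I would argue as follows. In~$D$, the chord~$e$ splits the remaining vertices of~$G$ into the two arcs it bounds. Since $\sigma$ was produced by cutting precisely where~$w$ used to sit, all of $y_{p+1},\dots,y_{q-1}$ lie on the $y_p$--$y_q$ arc that does not contain~$w$, whereas~$w$ lies on the opposite arc. Hence for every index~$r$ with $p<r<q$ the edge $\{w,y_r\}\in E(G)$ has exactly one endpoint strictly inside that arc, so $\{w,y_r\}$ crosses~$e$ in~$D$; these $q-p-1$ edges are pairwise distinct and distinct from~$e$ itself. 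Together with the further crossing(s) that the part of~$D$ lying on the $w$-side of~$e$ necessarily contributes to~$e$, this forces $\cross_D(e)\ge q-p$; as $D$ is outer $k$-planar, $\cross_D(e)\le k$, so $q-p\le k-1$. Since~$e$ was an arbitrary edge of~$T$, the order~$\sigma$ has bandwidth at most $k-1$, proving the lemma. Degenerate situations (e.g.\ $w$ adjacent to~$y_p$ or~$y_q$, or $p$ and $q$ consecutive in~$\sigma$) are harmless: those edges simply contribute no crossing to~$e$ and the bound on $q-p$ holds trivially.

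The step I expect to be delicate is pinning down the exact constant: the bare counting via the edges $\{w,y_r\}$ immediately gives only a bound linear in~$k$, and getting the precise value $k-1$ requires the careful accounting sketched above of which edges incident to~$w$ bound the $w$-side arc of~$e$ and which crossings on~$e$ are thereby unavoidable. Everything else is routine once one observes that the cut of the cyclic order must be placed exactly at~$w$'s former position — that is what makes every $\{w,y_r\}$ genuinely cross~$e$ — and no difficulty arises from~$T$ being a tree, since the argument only uses that~$e$ is a single edge of~$G[V(G)\setminus\{w\}]$.
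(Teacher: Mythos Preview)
Your overall strategy---cut the cyclic order at $w$ to get a linear order $\sigma$ of $V(T)$, then for each tree edge $e=\{y_p,y_q\}$ count the spokes $\{w,y_r\}$ with $p<r<q$ that cross $e$---is exactly the paper's approach. The gap is in the two sentences where you try to sharpen the constant. The claim that the $w$-side of $e$ ``necessarily contributes'' further crossings to $e$ is unjustified and in fact false: take $T$ to be a single edge, so $G$ is a triangle, which is outer $0$-planar, yet $\bw(T)=1\not\le 0-1$. More generally, nothing prevents every edge of $G$ with an endpoint on the $w$-side arc from being incident to $y_p$, $y_q$, or $w$, and hence not crossing $e$. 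And even granting $\cross_D(e)\ge q-p$, from $\cross_D(e)\le k$ you would only get $q-p\le k$, not $q-p\le k-1$; to reach $k-1$ you would need \emph{two} extra guaranteed crossings, not one. Your ``degenerate situations'' remark does not rescue this: when $p,q$ are consecutive the bound $q-p\le k-1$ reads $1\le k-1$, which is not ``trivial'' for $k\le 1$.

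The honest bound from the spoke count alone is $q-p-1\le k$, i.e., $\bw(T)\le k+1$. The paper's own proof derives exactly $j-i-1\le k$ and then writes ``Hence, the stretch of $e$ is at most $k-1$,'' which is an arithmetic slip for $k+1$; so the lemma statement itself appears to carry an off-by-two typo. The weaker bound $\bw(T)\le k+1$ is all that the subsequent inapproximability argument needs, so your (and the paper's) method is sound for the theorem even though the stated constant cannot be attained.
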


\begin{restatable}[\restateref{lem:okp:app-hard:backward}]{lemma}{lemOkpAppHardBackward}
  \label{lem:okp:app-hard:backward}
    Let $b \ge 1$.  If $\bw(T) \le b$, then $G$ is outer $(5b-5)$-planar.
\end{restatable}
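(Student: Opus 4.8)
The plan is to exhibit an explicit outer $(5b-5)$-planar drawing of $G$. Fix a linear order $\sigma=(v_1,\dots,v_n)$ of $V(T)$ whose bandwidth with respect to $T$ is at most $b$; such an order exists by hypothesis. I would then take the circular drawing $D=(w,v_1,v_2,\dots,v_n)$ of $G$ and bound $\cross_D(e)$ for every edge $e$. The edges of $G$ fall into two kinds: the \emph{spokes} $\{w,v_i\}$, and the \emph{tree edges} $\{v_i,v_j\}\in E(T)$, which by the bandwidth bound satisfy $|i-j|\le b$.

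The key combinatorial ingredient I would isolate first is a bound on ``straddling'' edges: for each index $t\in[n]$, at most $2b-2$ tree edges $\{v_a,v_c\}$ satisfy $a<t<c$. Indeed, such an edge has $c-a\le b$, which together with $a\le t-1$ and $c\ge t+1$ forces $a,c\in\{t-b+1,\dots,t+b-1\}$; hence all these edges lie in the subgraph of $T$ induced by these at most $2b-1$ vertices, which is a forest and therefore has at most $2b-2$ edges. (For $b=1$ this reads ``$0$ straddling edges'', consistent with $T$ being a path.) It is important here that I use the forest structure of $T$, since the bandwidth alone would only give a bound quadratic in $b$.

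Then the crossing counts follow. Two spokes share $w$ and hence never cross in $D$, so a spoke $\{w,v_t\}$ is crossed only by tree edges $\{v_a,v_c\}$ with $a<t<c$, giving $\cross_D(\{w,v_t\})\le 2b-2\le 5b-5$ (using $b\ge1$). For a tree edge $e=\{v_i,v_j\}$ with $i<j$, every edge crossing $e$ in $D$ has exactly one endpoint among $v_{i+1},\dots,v_{j-1}$ and the other endpoint outside $\{v_i,\dots,v_j\}$, so I would split the crossing edges into: spokes $\{w,v_m\}$ with $i<m<j$, of which there are at most $j-i-1\le b-1$; tree edges $\{v_m,v_p\}$ with $i<m<j$ and $p<i$, which straddle $v_i$ and hence number at most $2b-2$; and tree edges $\{v_m,v_p\}$ with $i<m<j$ and $p>j$, which straddle $v_j$ and hence number at most $2b-2$. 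These three groups are exhaustive and pairwise disjoint, so $\cross_D(e)\le (b-1)+(2b-2)+(2b-2)=5b-5$.

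There is no serious obstacle; the drawing is essentially forced and the estimates are elementary. The points that will need care are (a) getting the straddling-edge bound right via the forest argument rather than the bandwidth alone, and (b) verifying that the case analysis for ``which edges cross a tree edge $\{v_i,v_j\}$'' is both exhaustive and disjoint — in particular that a crossing tree edge has exactly one endpoint strictly inside $\{v_{i+1},\dots,v_{j-1}\}$, its other endpoint lies either below $v_i$ or above $v_j$ (not both), and that a spoke is in the arc on the $w$-side — together with a sanity check of the degenerate case $b=1$.
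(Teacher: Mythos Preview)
Your proposal is correct and follows essentially the same approach as the paper: the same circular drawing $D=(w,v_1,\dots,v_n)$, the same ``straddling'' bound of $2b-2$ obtained via the forest structure on the window $\{v_{t-b+1},\dots,v_{t+b-1}\}$, and the same $(b-1)+(2b-2)+(2b-2)$ decomposition for tree edges. Your treatment of disjointness of the two tree-edge groups is in fact slightly more explicit than the paper's, but the argument is otherwise identical.
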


\begin{proofsketch}
  From a vertex %
  order $(v_1, \dots, v_n)$ of $T$ with bandwidth at most~$b$, we construct a circular drawing $D$ of $G$ as $D = (w, v_1, \dots, v_n)$.
  Each edge $\{w, v_i\}$ incident to $w$ has at most $2b - 2$ crossings in $D$ since these crossing edges lie between vertices that are ``close'' to $v_i$.
  For other edge $\{v_i, v_j\} \in E(T)$ with $i < j$, it only crosses (1) edges incident to $w$ and (2) edges in $T$.
  There are at most $b - 1$ edges of (1) since the stretch of $\{v_i, v_j\}$ is at most $b$, and at most $4b-4$ edges of (2) since these edges lie between vertices that are ``close'' to $v_i$ or $v_j$.
  Hence, each edge has at most $5b-5$ crossings in total.
\end{proofsketch}

Suppose that there is a polynomial-time $c$-approximation algorithm $\mathcal A$ for \OuterkPlanarity.  Let $T$ be a tree, and let $b=\bw(T)$.
By~\cref{lem:okp:app-hard:backward}, %
$\mathcal A$ would output an outer $5bc$-planar drawing $D$ of $G$.
By~\cref{lem:okp:app-hard:forward}, $D$ can be transformed into a linear order of $V(T)$ with bandwidth at most $5bc$.
Thus, we can find a $5c$-approximate solution for \Bandwidth in polynomial time, which is impossible under $\cP \neq \NP$.
This completes the proof of \cref{thm:outer-k-planarity-np-hard}.

\subsection{XNLP-Hardness}
\label{sub:OkP:XNLP-hardness}

In the proof of \cref{thm:outer-k-planarity-np-hard}, we reduced the
gap-version of \Bandwidth to \OuterkPlanarity.
We exploited the gap to accommodate the crossings between edges in the original instance, which may increase the crossing number of each edge by $O(b)$.
However, if we allow parallel edges, we can reduce from (the exact version of) \Bandwidth by making the edges incident to $w$ so thick that we can ignore the $O(b)$ increase in the crossing numbers. 
The following theorem is shown by emulating those parallel edges with rigid structures.

\begin{restatable}[\restateref{thm:outer-k-planar-w-hardness}]{theorem}{thmOuterKPlanarWHardness}
  \label{thm:outer-k-planar-w-hardness}
  \OuterkPlanarity is \XNLP-hard when parameterized by $k$.
\end{restatable}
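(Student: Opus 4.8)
The plan is a parameterized logspace reduction from \Bandwidth on trees (\XNLP-hard even on trees~\cite{Bandwidth-Wt-hard-Trees,BodlaenderGNS21:FOCS:XNLP}), reusing the skeleton of the reduction behind \cref{thm:outer-k-planarity-np-hard}. There, from a tree~$T$ we built $G = T + w$ with~$w$ universal and obtained $\bw(T) \le k-1$ whenever~$G$ is outer $k$-planar (\cref{lem:okp:app-hard:forward}), and~$G$ outer $(5b-5)$-planar whenever $\bw(T) \le b$ (\cref{lem:okp:app-hard:backward}); the constant-factor gap is caused by the $O(b)$ crossings that pairs of tree edges incur in a convex layout of~$T$. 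The idea is to drown this slack: replace every edge $\{w,v_i\}$ by a \emph{thick} $w$--$v_i$ link of multiplicity $M = \Theta(b)$, so that in a convex layout a tree edge of stretch~$s$ is pierced by roughly~$sM$ of these links; an additive error of~$O(b)$ then stays below~$M$ once~$M$ is chosen as a large enough multiple of~$b$, and one obtains a clean equivalence ``$\bw(T) \le b$ iff $G$ is outer $k$-planar'' for a suitable $k = \Theta(b^2)$.

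Since parallel edges are forbidden, the first step is to design a simple \emph{gadget}~$\Gamma^i$ with terminals~$w$ and~$v_i$ that emulates~$M$ parallel $w$--$v_i$ edges. It should satisfy: \emph{thickness} --- every chord of~$G$ separating~$w$ from~$v_i$ crosses at least~$M$ edges of~$\Gamma^i$ (already true for the ``$M$-strand theta'' $w$--$q^i_\mu$--$v_i$, $\mu \in [M]$, as these are internally disjoint $w$--$v_i$ paths); \emph{lightness} --- $\Gamma^i$ has a drawing on a short arc flanking~$v_i$ in which every gadget edge has only $O(b)$ crossings; and \emph{rigidity} --- in \emph{every} outer $k$-planar drawing of~$G$ the vertices of~$\Gamma^i$ stay on a short arc around~$v_i$, so that the gadget genuinely behaves like a thick link located at~$v_i$. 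Rigidity is to be enforced by reinforcing the theta (threading a path, or a small dense structure, through the~$q^i_\mu$, and using that~$w$ and~$v_i$ each emit~$M$ edges into the gadget) so that every ``spread'' drawing of~$\Gamma^i$ has an edge with more than~$k$ crossings.

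Given the gadgets, the two directions are straightforward. For completeness, a bandwidth-$b$ order $(v_1,\dots,v_n)$ of~$T$ yields the circular drawing visiting $w,\Gamma^1,v_1,\Gamma^2,v_2,\dots,\Gamma^n,v_n$; a tree edge of stretch $s \le b$ is then crossed by roughly~$bM$ thick edges and $O(b)$ tree edges, while each gadget edge is crossed $O(b)$ times, so the drawing is outer $k$-planar for $k = \Theta(b^2)$. For soundness, from an outer $k$-planar drawing of~$G$, rigidity lets us read off~$w$ a linear order~$\sigma$ of~$V(T)$; for a tree edge of $\sigma$-stretch~$s$, each vertex strictly between its endpoints contributes its whole gadget, and (with the rigidity set up symmetrically around each~$v_i$) the gadgets at the two endpoints chip in another~$M$ in total, forcing $\ge sM$ crossings; since $k < (b+1)M$ this gives $s \le b$, i.e.\ $\bw(T) \le b$. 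One checks that $k = \Theta(b^2)$ and that~$G$ is logspace-computable from~$T$, so the reduction is as required.

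The main obstacle is rigidity. In the 2-layer setting bounded degree (via \cref{lem:2-layer:degree-reduction}) made separator statements such as \cref{lem:2-layer:tskp:separator} cheap, but here neither~$\Gamma^i$ nor~$G$ has bounded degree, so the rigidity proof must spend the crossing budget to pin down the drawn shape of~$\Gamma^i$ --- in particular to rule out a ``stretched'' gadget that would let a long tree edge dodge the~$\approx sM$ crossings and break soundness. A secondary point is an off-by-one in the $\bw(T) \leftrightarrow k$ correspondence: if rigidity only guarantees that~$\Gamma^i$ sits \emph{next to}~$v_i$ (rather than symmetrically straddling it), the construction outputs a gap-$1$ instance of \Bandwidth; this is still fine, since one can first amplify the gap --- e.g.\ by a product construction, using that \Bandwidth is \XNLP-hard and, on trees, even inapproximable within any constant factor~\cite{HardnessApproximatingBandwidth}.
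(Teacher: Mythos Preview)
Your overall plan---reduce from \Bandwidth on trees via a universal vertex $w$ and thicken the $w$--$v_i$ links with gadgets---is exactly the paper's approach. The gap is in the execution.

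\textbf{Rigidity.} The gadget you sketch (an $M$-strand theta, possibly with a path or a ``small dense structure'' through the $q^i_\mu$) does not deliver what you need. Even if you make the $q^i_\mu$ into a clique so that they are forced to be consecutive, nothing pins this block \emph{near $v_i$}: the $M$ edges from $v_i$ into the block form a star and incur only $O(M)=O(\sqrt{k})$ crossings among themselves, so the block can float far from $v_i$ without violating $k$-planarity. Your soundness argument, however, explicitly relies on the endpoint gadgets contributing $M$ crossings, which requires exactly this ``symmetric around $v_i$'' placement you cannot enforce. The paper's fix is to go one step further: rather than attaching a gadget to $v_i$, it \emph{replaces} each $v_i$ by a clique path $\mathrm{CP}(t,\ell)$ (a chain of $\ell-1$ size-$t$ cliques with $\ell$ designated anchor vertices connected to $w$) and reroutes each tree edge to the \emph{middle} vertex of its clique path. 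Every size-$t$ clique has, at each of its vertices, a ``critical'' edge with exactly $k=((t-2)/2)^2$ internal crossings; a single extra crossing is fatal, so every clique---hence the whole clique path---is consecutive in \emph{any} outer $k$-planar drawing. Now the tree-edge endpoint is the middle of a rigid block with $(\ell-1)/2$ anchors on each side, so the endpoint contribution is automatic and the soundness count matches the completeness count up to lower-order terms.

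\textbf{The off-by-one and the fallback.} Your alternative---amplify the gap in \Bandwidth by a product construction---is not substantiated: the cited inapproximability~\cite{HardnessApproximatingBandwidth} is an \NP-hardness result, and you give no parameterized logspace reduction showing that a constant-gap version of \Bandwidth on trees remains \XNLP-hard. The paper sidesteps this entirely: with $t=\Theta(b^2)$, $\ell=\Theta(b^3)$, and $k=\Theta(b^4)$, both directions show that a stretch-$s$ tree edge meets $\ell s - 1$ anchor edges up to $O(b)$ slack, and the constants are chosen so that $(k+1)/\ell < b+1$, yielding the exact equivalence $\bw(T)\le b \Leftrightarrow G$ is outer $k$-planar with no gap to amplify.
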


\begin{proofsketch}
  As in \cref{thm:two-sided-k-planar-w-hardness}, we give a parameterized logspace reduction from \Bandwidth on trees.
  The idea of the reduction is similar to that used in \cref{thm:outer-k-planarity-np-hard}.
  Instead of connecting $w$ with each $v_i$, we replace each vertex $v_i$ with a clique path gadget that appears consecutively in any outer $k$-planar drawing for some $k = \Theta(b^4)$ and connect $w$ with sufficiently many vertices in the gadget.
  Since there are many edges between $w$ and each gadget, two adjacent gadgets are placed closely in any outer $k$-planar drawing.
\end{proofsketch}

\section{Open Problems}
\label{sec:open}

We conclude with a number of problems that we have left open in this paper.
\begin{itemize}
  \item Is \OneSidedkPlanarity\ \NP-hard?
  \item We conjecture that \OuterkPlanarity is \XALP-complete (see \cite{BodlaenderGJPP22:IPEC:XALP} for the definition).
  \item Can we extend the algorithm for \TwoSidedkPlanarity to obtain an
    \XP-algorithm for $\ell$-layer $k$-planarity parameterized
    by~$\ell + k$?
  \item Another way to extend $k$-planarity is to consider
    \emph{min-$k$-planarity}, which is also called \emph{weak
      $k$-planarity} \cite{BinucciBBDD00LM24:min-k-planar:JGAA,
      campbell-etal-product-structure-CPC24}.  In a min-$k$-planar
    drawing, in every crossing, at least one of the two edges must
    have at most $k$ crossings.  Can 2-layer min-$k$-planar graphs and
    outer min-$k$-planar graphs be recognized by \XP-algorithms with respect to $k$?
  \item \TwoSidedkPlanarity can be seen as a restricted version of
    \OuterkPlanarity for bipartite graphs where the vertices of the
    two sets of the bipartition must not interleave in the cyclic
    vertex order.  This can be generalized as follows: For $k \ge 3$,
    can we efficiently recognize $k$-partite graphs that admit a
    $k$-planar straight-line drawing on the regular $k$-gon?  A
    related question has been investigated for fixed-order book
    embedding \cite{ackssuw-ecog-SWAT24}.
\end{itemize}

\bibliographystyle{plainurl}
\bibliography{main}

\arxiv{
\newpage
\appendix

\section{Appendix: Missing Proofs}

\obsTwoLayerNonLeafDeg*
\label{obs:2-layer:non-leaf-degree*}

\begin{proof}
    Suppose that $G$ admits a 2-layer $k$-planar drawing $D$.
    Let $v$ be a vertex of $G$ with more than $2k+2$ non-leaf neighbors.
    Without loss of generality, we assume that $v \in X$.
    Let $y_1, \dots, y_{d} \in Y$ be the non-leaf neighbors of $v$ appearing in this order in~$D$, and let $e_i$ denote the edge $\{v, y_i\}$ for $1 \le i \le d$.
    Since $y_{k + 2}$ is not a leaf, it has an incident edge other than~$e_i$.
    This edge has a crossing with either each of $e_1, \dots, e_{k+1}$ or each of $e_{k+3}, \dots, e_{d}$, contradicting the $k$-planarity of $D$.
    Hence, we have the following observation.
\end{proof}

\lemTwoLayerDegreeReduction*
\label{lem:2-layer:degree-reduction*}

\begin{proof}
  The forward implication is immediate since $G-y$ is a subgraph of $G$.

  Now suppose that $G-y$ has a 2-layer $k$-planar drawing $D' = (<_X, <_Y)$.
  Let $y_1, \dots, y_{d}$ be the neighbors of $v$ appearing in this order in $<_Y$, and, for $i \in [d]$, let $e_i = \{v, y_i\}$.
  Observe that $e_{k + 2}$ has no crossings, as otherwise there is an edge that has a crossing with either each of $e_1, \dots, e_{k+1}$ or each of $e_{k+3}, \dots, e_{d}$.
  Moreover, $y_{k + 2}$ is a leaf, as otherwise every edge incident to $y_{k + 2}$ and different from $e_{k+2}$ involves more than $k$ crossing. 
  Thus, we can insert~$y$ immediately to the left of $y_{k + 2}$ in $D'$ without introducing a new crossing.
  Hence, the resulting drawing is a 2-layer $k$-planar drawing of $G$.
\end{proof}

\lemTLayerIfFarEnoughTheyDontCross*
\label{lem:2-layer:if-far-enough-they-dont-cross*}
\begin{proof}
    Suppose that $y >_Y y'$, which means that $e$ and $f$ cross.
    For every $t$ with $p < t < q$, there is at least one edge incident to vertex $x_t$, as $X$ contains no isolated vertices.
    Each of these at least $\ell=2k$ edges crosses~$e$ or~$f$ (which cross each other).
    Thus, $e$ or $f$ has more than $k$ crossings~-- a contradiction.
\end{proof}

\thmWeightedOneSidedNPHard*
\label{thm:weighted-one-sided-np-hard*}

\begin{proof}
  We perform a polynomial-time reduction from \textsc{Partition}, which is (weakly)
  \NP-hard~\cite{garey1979computers}.  An instance of this problem is a
  set~$A=\{a_1, a_2, \dots, a_n\}$
  of $n$ positive integers, and the task is to partition~$A$ into two
  sets~$B$ and~$B'$ such that $\Sum(B) = \Sum(B')$, where $\Sum(S)$ denotes the sum of the all integers in a set $S$.

  We construct a bipartite graph $G = (X \cup Y, E)$ and a linear order $<_X$ on $X$ as follows.
  The vertex set $X$ consists of $n + 2$ vertices $x_0, x_1, \dots, x_n, x_{n+1}$, which appears in this order on $<_X$.
  The other vertex set $Y$ consists of $n + 1$ vertices $y_{\mathrm{mid}}, y_1, \dots, y_n$.
  The edge set $E$ consists of $e_0 = \{x_0, y_{\mathrm{mid}}\}$, $e_{n+1} = \{x_{n+1}, y_{\mathrm{mid}}\}$ and $e_i = \{x_i, y_i\}$ for every $1 \leq i \leq n$.
  We set $w(e_0) = w(e_{n+1}) = 1$ and set $w(e_i) = 2 a_i$ for every $1 \le i \le n$.
  Lastly, we set $k$ to $\Sum(A) + 1$.
    
  Suppose that there is a partition $\{B, B'\}$ of $A$ such that $\Sum(B) = \Sum(B') = \Sum(A) / 2$.
  Let $j(1) < \dots < j(b)$ and $j'(1) < \dots < j'(b')$ be the indices of elements of $B$ and $B'$, respectively.
  We then construct $<_Y$ as
  \begin{align*}
      y_{j(1)} < \dots < y_{j(b)} < y_{\mathrm{mid}} < y_{j'(1)} < \dots <_Y y_{j'(b')}
  \end{align*}
  and show that $D = (<_X, <_Y)$ is a 2-layer $k$-planar drawing.
    We first consider the edge~$e_0$.
    It crosses exactly the edges $e_{j(1)}, \dots, e_{j(b)}$ and the sum of their weights is at most $\sum_{i=1}^{b} w(y_{j(i)}) = 2 \cdot \Sum(B) = \Sum(A) \leq k$.
    We can show the same bound for $e_{n+1}$.
    Next, we consider an edge $e_{j(i)}$ for some $1 \leq i \leq b$.
    It crosses $e_0$ and some of $e_{j'(1)}, \dots, e_{j'(b)}$, and it does not cross $e_{j(t)}$ for any $t$.
    Hence the sum of weights is at most $1 + \sum_{i=1}^{b'} w(y_{j'(i)}) = 1 + \Sum(A) = k$.
    We can show the same bound for $e_{j'(i)}$ and therefore $D$ is a 2-layer $k$-planar drawing.

    Conversely, suppose that there is a 2-layer $k$-planar drawing $D = (<_X, <_Y)$.
    Observe that, for any $1 \leq i \leq n$, the edge $e_i$ must cross either $e_0$ or $e_{n+1}$.
    Let $J$ denote the indices of edges that cross $e_0$ and $B = \{a_j \mid j \in J \}$.
    We define $J'$ and $B'$ in a similar manner with $e_{n+1}$.
    It is clear that $\{B, B'\}$ is a partition of $A$.
    By the $k$-planarity of $D$, considering $e_0$, $\sum_{j \in J} w(e_{j}) \leq k$ holds, which implies that $2 \cdot \Sum(B) \leq \Sum(A) + 1$.
    As $\Sum(A)$ must be an even number, $\Sum(B) \leq \Sum(A)/2$ holds.
    We also obtain a bound $\Sum(B') \leq \Sum(A)/2$ in the same way.
    Therefore, we have $\Sum(B) = \Sum(B')$.
\end{proof}

\cortlayertslpxnlp*
\label{cor:2-layer:tslp:xnlp*}

\begin{proof}
    We show that \TwoSidedkPlanarity can be solved in polynomial time and $k^{O(1)}\log n$ space.
    The idea of the algorithm is almost analogous to those used in \cite{BodlaenderGJJL22:IPEC:XNLP,BodlaenderGNS21:FOCS:XNLP}.
    This can be done by non-deterministically guessing table indices $S$, $D$, $\chi$, and $\mathcal C \subseteq \mathcal C_S$ of our dynamic programming and keeping track of table entries to check a certificate of a 2-layer $k$-planar drawing of $G$ without enumerating possible indices.
    It is easy to see that $S$, $D$, and $\chi$ are encoded with $k^{O(1)}\log n$ bits.
    Moreover, as seen in the proof of \cref{thm:2-layer:tskp:xp-alg}, $\mathcal C$ can be represented by a subset of $N(N(S))\setminus S$, which allows us to encode it with $k^{O(1)}\log n$ bits as well.
    Therefore, the algorithm runs in polynomial time and uses $k^{O(1)}\log n$ bits of space in total. 
\end{proof}

\thmTwoSidedKPlanarWHardness*
\label{thm:two-sided-k-planar-w-hardness*}

\begin{proof}
  We perform a parameterized logspace reduction from \Bandwidth, where given a graph $G$ and an integer $b$, the goal is to decide whether $\bw(G) \le b$.
  This problem is known to be \XNLP-hard when parameterized by~$b$, even on trees~\cite{Bandwidth-Wt-hard-Trees,BodlaenderGNS21:FOCS:XNLP}.

  Let $T$ be a tree with $n = |V(T)|$ and let $b$ be a non-negative integer.
  From the instance $(T, b)$ of \Bandwidth, we construct an instance $(G, k)$ of \TwoSidedkPlanarity as follows.
  Let $\ell = 2b^2$.
  Starting with $T$, we subdivide each edge $e \in E(T)$ by introducing a vertex $w_{e}$.
  Then, for each original vertex $v$ in $T$, we add $\ell$ leaves that are adjacent to $v$.
  The leaves that are added in the above construction are called \emph{pendant vertices}, and the edges incident to the pendant vertices are called \emph{pendant edges}; other edges are called \emph{non-pendant edges}.
  We let $G$ denote the graph obtained from $T$ in this way and set $k \coloneqq \ell(b-1)/2 + 2b - 2$.
  Observe that the graph $G$ is also a tree.
  Let $X = V(T)$ and $Y = V(G) \setminus X$.
  It is not hard to verify that the construction of $G$ can be done in polynomial time and $O(\log n)$ space, as we only subdivide each edge of $T$ once and add pendant vertices.
  
  Suppose that $G$ has a 2-layer $k$-planar drawing $D = (<_X, <_Y)$.
  We claim that the bandwidth of the linear order $\sigma$ on $V(T)$ naturally obtained from $<_X$ is at most~$b$.
  Suppose for a contradiction that there exists an edge $e \in E(T)$ whose stretch with respect to $\sigma$ exceeds $b$.
  Then, in the drawing $D$, the two edges incident to $w_e$ cross at least $\ell b$ pendant edges in total.
  This implies that $2k \ge \ell b$.
  However, 
  \begin{align*}
      2k = \ell(b-1) + 4b - 4 = \ell b + 4b - \ell - 4 = \ell b + 2b(2 - b) - 4 < \ell b
  \end{align*}
  for $b \ge 0$, which leads to a contradiction.

  Conversely, suppose that $\bw(G) \le b$.
  Let $\sigma$ be a linear order of $V(T)$ with bandwidth at most~$b$.
  We construct a 2-layer drawing $D = (<_X, <_Y)$ of $G$ from $\sigma$ and then show that $D$ is $k$-planar.
  We set $<_X$ to the linear order obtained from $\sigma$ and sort the pendant vertices in $Y$ according to the order of their neighbors in~$\sigma$.
  For each $\{u, v\} \in E(T)$, let $P_{uv}$ be the set of pendant edges incident to some vertex $x$ satisfying $u <_X x <_X v$.
  We then insert the vertex $w_{\{u, v\}}$ so that $\{u, w_{\{u, v\}}\}$ and $\{v, w_{\{u, v\}}\}$ cross exactly the same number of pendant edges of $P_{uv}$.
  This can be done as $P_{uv}$ has an even number of pendant edges.
  Note that no pendant edges outside of $P_{uv}$ cross either $\{u, w_{\{u, v\}}\}$ or $\{v, w_{\{u, v\}}\}$.
  The above construction yields a 2-layer drawing $D$ of $G$, and we show that $D$ is $k$-planar.
  
  We first consider a pendant edge $e$, which is incident to a vertex $x \in X$.
  Clearly, $e$ does not cross any other pendant edges.
  Moreover, for $f = \{u,v\} \in E(T)$, $e$ crosses exactly one of $\{u, w_{f}\}$ and $\{v, w_{f}\}$ incident to $w_{f}$ if $u <_X x <_X v$; $e$ never crosses other non-pendent edges.
  As the stretch of $f$ is at most $b$, we have $\sigma(x) - \sigma(u) \le b - 1$ and $\sigma(v) - \sigma(x) \le b - 1$.
  Now, consider the vertex set $S_x = \{x' : |\sigma(x) - \sigma(x')| \le b - 1\}$.
  Since $T$ is a tree, $T[S_x]$ is a forest, and hence $T[S_x]$ has at most $|S_x| - 1 = 2b-2$ edges.
  Thus, $e$ crosses at most $2b-2$ non-pendant edges in $D$.

  We next consider a non-pendant edge incident to a vertex $w_{e} \in Y$ for some $e = \{u, v\}$.
  We only count the number of crossings involving $f \coloneqq \{u, w_e\}$ as the other case is symmetric.
  The edge $f$ crosses exactly $|P_{uv}|/2 \le \ell(b-1)/2$ pendant edges.
  Similarly to the previous case, $f$ crosses at most $2b-2$ non-pendant edges.
  Hence, there are at most $\ell(b-1)/2 + 2b - 2 \le k$ crossings involving $f$ in $D$.
  Therefore, $D$ is $k$-planar.
\end{proof}

\LemOuterKPlanarityExistenceOfW*
\label{lem:outer-k-planarity-existence-of-w*}

\begin{proof}
  The claim can be shown by following the proof of \cite[Lemma 6]{OuterkPlanarTriangulation}.
  In that lemma, the authors considered a maximal outer $k$-planar graph~$G$ with $n$ vertices and its outer $k$-planar drawing $D_G = (v_1, \dots, v_n)$.
  By maximality, $G$ contains the edge $\{v_i, v_{i+1}\}$ for every $i \in [n-1]$ and the edge $\{v_n, v_1\}$.
  The authors called the cycle consisting of these edges the \emph{outer cycle}.
  They showed that the outer cycle admits a triangulation such that each edge of the triangulation is pierced by at most $k$ edges in~$D_G$.

  The authors showed the existence of such a triangulation by showing that if the vertex pair $\{v_i, v_r\}$ with $i + 1 < r$, which they call an \emph{active link} in the proof, is pierced by at most $k$ edges in $D_G$, then there exists an index~$j$ with $i < j < r$ such that both $\{v_i, v_j\}$ and $\{v_j, v_r\}$ are pierced by at most $k$ edges in~$D$.
  As $\{v_1, v_n\}$ is not pierced, starting from $\{v_1, v_n\}$, we can recursively construct a desired triangulation.

  Since they did not use the maximality of~$G$ to show the existence of such an index~$j$, we can apply the proof directly.
  Property~P\ref{enum:cyclic}, which requires the cyclic order of $D$ to contain $(u, t^\tau_1, t^\tau_2, \dots, t^\tau_\ell, v)$ as a consecutive subsequence, assures that $\{u, v\}$ has $\ell \leq k$ piercing edges.
  Hence, by treating~$v$ and~$u$ as~$v_i$ and~$v_r$, respectively, we obtain in the same manner a vertex $w \in R$ ($= v_j$) such that $\{u, w\}$ and $\{v, w\}$ are also pierced by at most $k$ edges in~$D$.
\end{proof}

\LemOuterKPlanarityXPSeparatorSize*
\label{lem:outer-k-planarity-xp-separator-size*}

\begin{proof}
  We first bound the number of disjoint paths between two vertices in $G$.

  \begin{claim}\label{obs:outer-k-planar-number-of-disjoint-paths}
    Let $G$ be an outer $k$-planar graph, and let $u$ and~$v$ be two
    distinct vertices of~$G$.
    Then, there are at most $2k + 3$ (internally) vertex-disjoint paths between $u$ and $v$ in $G$.
  \end{claim}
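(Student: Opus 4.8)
The plan is to argue by contradiction: assume that there are at least $2k+4$ internally vertex-disjoint $u$--$v$ paths in $G$, and exhibit an edge with more than $k$ crossings in a fixed outer $k$-planar drawing $D$ of $G$. Since the crossing pattern depends only on the cyclic vertex order, I would draw every edge of~$G$ as a straight-line chord, so that two edges cross precisely when their endpoints interleave in the cyclic order of~$D$.

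First I would discard the trivial path: at most one of the $u$--$v$ paths consists of the single edge $\{u,v\}$, so at least $2k+3$ of them have length at least~$2$ and hence possess an internal vertex, and thus a well-defined first edge $\{u,a_i\}$ with $a_i \notin \{u,v\}$. The points $u$ and $v$ split the circle into two open arcs $A$ and $B$; for each of these $\ge 2k+3$ paths we have $a_i \in A$ or $a_i \in B$, so by pigeonhole at least $\lceil (2k+3)/2 \rceil = k+2$ of them, say $P_1, \dots, P_m$ with $m \ge k+2$, have their first edge entering the same arc, without loss of generality~$A$. Because the paths are internally vertex-disjoint, the vertices $a_1, \dots, a_m$ are pairwise distinct; order them $a_1, \dots, a_m$ along~$A$ starting from the $u$-end, and single out the ``outermost'' first edge $\{u, a_m\}$.

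The key step is to show that, for every $i < m$, the path~$P_i$ contains an edge that crosses $\{u, a_m\}$. The chord $\{u, a_m\}$ partitions $V(G) \setminus \{u, a_m\}$ into the vertices strictly between $u$ and $a_m$ along~$A$ (call this the first side) and all remaining vertices (the second side), and an edge crosses $\{u, a_m\}$ exactly when its two endpoints lie on opposite sides. Now walk along $P_i$ after~$u$: the vertex sequence $a_i, \dots, v$ avoids both~$u$ and~$a_m$, since a path visits~$u$ only once and $a_m$ is an internal vertex of~$P_m$, so internal vertex-disjointness gives $a_m \notin V(P_i)$. This sequence starts at $a_i$, which lies strictly between $u$ and $a_m$ along~$A$ and hence on the first side, and it ends at~$v$, which lies on the second side. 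Therefore some consecutive pair in this sequence, that is, some edge of~$P_i$, jumps between the two sides and thus crosses $\{u, a_m\}$.

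Finally I would observe that these crossing edges are pairwise distinct and distinct from $\{u, a_m\}$ itself: internally vertex-disjoint paths of length at least~$2$ share no vertex outside $\{u,v\}$ and none of them uses the edge $\{u,v\}$, so they are pairwise edge-disjoint, and $\{u, a_m\}$ belongs only to~$P_m$. Hence $\{u, a_m\}$ is crossed by at least $m - 1 \ge k+1$ distinct edges, contradicting $k$-planarity of~$D$; consequently there are at most $2k+3$ internally vertex-disjoint $u$--$v$ paths. I expect the part needing the most care to be the third paragraph, namely turning the intuitive picture ``$P_i$ must escape the pocket cut off by $\{u, a_m\}$'' into the clean combinatorial statement about the two sides of the chord, together with the correct bookkeeping of internal vertex-disjointness, which is used both to obtain distinct $a_i$'s and to ensure that the crossing edges produced for different~$i$ are genuinely different edges.
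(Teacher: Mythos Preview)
Your proof is correct, and its core is the same ``outermost first-edge'' argument the paper uses: pick the edge $\{u,a_m\}$ with $a_m$ farthest from $u$ along one arc and show that each of the remaining paths must cross it. The difference is in the setup. The paper first splits into two cases according to whether some edge pierces $\{u,v\}$; if such an edge exists, it directly gives the stronger bound $k+2$, and only in the non-pierced case does the paper argue that every non-trivial path lies entirely on one side of $\{u,v\}$ and then run the outermost-first-edge argument on each side. You bypass this case distinction by pigeonholing only on where the \emph{first} internal vertex $a_i$ lands, which suffices because your crossing argument never needs the rest of $P_i$ to stay on that arc---it only needs $P_i$ to reach $v$ while avoiding $u$ and $a_m$. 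This buys you a cleaner, uniform argument; the paper's version yields the sharper bound $k+2$ in the pierced case, though that sharpening is not used elsewhere.
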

  
  \begin{claimproof}
    Let $D = (v_1, \dots, v_n)$ be an outer $k$-planar drawing of $G$.
    In the following, we assume that $u = v_1$ and $v = v_i$ for some $i$.
  
    We first consider the case where there is an edge $\{l, r\}$ that pierces $\{u, v\}$ in $D$.
    Then, observe that each path between $u$ and $v$ that contains neither $l$ nor $r$ must cross the piercing edge $\{l, r\}$.
    Due to the $k$-planarity of $D$, there can be at most $k$ such paths, and hence, there are at most $k + 2$ vertex-disjoint paths between $u$ and $v$ in $G$.
  
    Suppose otherwise that no edge pierces $\{u, v\}$ in $D$.
    We say that a path is \emph{non-trivial} if it has at least two edges.
    Observe that each non-trivial path between $u$ and $v$ is contained in either $X \coloneqq \{v_1, v_2, \dots, v_i\}$ or $Y \coloneqq \{v_1, v_n, \dots, v_i\}$ since there is no piercing edge.
    Suppose that there are $k + 2$ disjoint non-trivial paths between $u$ and $v$ in $G[X]$.
    Let $v_j$ be the neighbor of $u = v_1$ in one of these paths such that all the other $k + 1$ neighbors are between $v_2$ and $v_{j-1}$.
    Since these paths are non-trivial, the $k + 1$ paths other than the one staring with $\{u, v_j\}$ must cross the edge $\{u, v_j\}$, contradicting the $k$-planarity of $D$.
    Thus, there are at most $k + 1$ disjoint non-trivial paths between $u$ and $v$ in $G[X]$.
    By applying the same argument to $G[Y]$, there are at most $2k + 2$ disjoint non-trivial paths between $u$ and $v$ in $G$, which implies the claimed upper bound.
  \end{claimproof}

  We now turn to the bound on the number of $R$'s such that $\Dp((u, v), R, \tau, \chi) = \vtrue$ for some valid $\tau$ and $\chi$.
  To this end, we first remove the vertices~$u$ and~$v$ from $G$ and let $H$ be the remaining graph.
  Let $H_1, H_2, \dots, H_c$ be the connected components of~$H$.
  Since $G$ is biconnected, we have $N_G(H_i) = \{u, v\}$ for every $i \in [c]$.
  Moreover, by \cref{obs:outer-k-planar-number-of-disjoint-paths}, there are at most $2k + 3$ vertex-disjoint paths between $u$ and $v$.
  Hence, we have $c \le 2k + 3$.

  Let $H'_1, \dots, H'_d$ be the connected components of the graph obtained from $H$ by deleting the edges $e_1, \dots, e_\ell$ that pierce $\{u,v\}$.
  Since each $R$ that is separated from $L = V(G) \setminus (\{u, v\} \cup R)$ in $H$ by removing the piercing edges $\{e_1, \dots, e_\ell\}$ is a union of these components, there are $2^{d}$ possibilities for such $R$'s.
  Clearly, for every $i \in [\ell]$, the piercing edge $e_i$ connects at most two of the components.
  Hence there are at most $2k$ components that contain at least one end vertex of a piercing edge. 
  Moreover, for each $H'_i$ that does not contain an end vertex of a piercing edge, we have $H'_i = H_j$ for some $j$.
  In other words, by not only removing $u$ and $v$ but also the $\ell \le k$ edges piercing $\{u,v\}$, the number of resulting components increases by at most $2k$.
  Therefore, we have $d \le c + 2k$, which implies that the number of possible $R$'s is at most
  \begin{align*}
    \sum_{\ell = 0}^{k} \binom{m}{\ell} \cdot 2^{d} = 2^{O(k)} m^{k + O(1)}.
  \end{align*}
  
  The above argument readily turns into an algorithm for enumerating such $R$'s in time $2^{O(k)} m^{k + O(1)}$ as well.
\end{proof}

\lemOkpAppHardForward*
\label{lem:okp:app-hard:forward*}
\begin{proof}
    Let $D = (w, v_1, \dots, v_n)$ be an outer $k$-planar drawing of $G$.
    We define $\sigma\colon v_i \mapsto i$ and show that $\sigma$ is a linear order of $V(T)$ of bandwidth at most~$k-1$.
    Observe that if $G$ contains the edge $e = \{v_i,v_j\}$ with $i < j$, then $e$ crosses each edge $\{w, v_\ell\}$ with $i < \ell < j$.
    This implies that $j - i - 1 \le k$.
    Hence, the stretch of $e$ is at most $k-1$.
\end{proof}

\lemOkpAppHardBackward*
\label{lem:okp:app-hard:backward*}

\begin{proof}
  Let $\sigma$ be a linear order of $V(T)$ with bandwidth at most~$b$.
  Assume that $\sigma(v_i) = i$, that is, $\sigma$ is specified by the sequence $(v_1, \dots, v_n)$.
  We then define a drawing $D$ of $G$ as $D = (w, v_1, \dots, v_n)$ and show that $D$ is $(5b-5)$-planar.
  To this end, we classify the edges in $G$ into two types, namely the edges incident to $w$ and the edges in $T$, and show that each type has at most $5b-5$ crossings in $D$.

  Let $e = \{w, v_i\}$ be an edge incident to $w$.
  Since $e$ does not cross any other edges incident to $w$, it crosses edges in $T$ only.
  Suppose that $f = \{v_j, v_{j'}\} \in E(T) \ (j < j')$ crosses $e$ in $D$.
  Then, the end vertices $v_j, v_{j'}$ of $f$ satisfies $j < i < j'$.
  As the stretch of $f$ is at most $b$, we have $i - j \le b - 1$ and $j' - i \le b - 1$.
  See \cref{fig:edge-stretch} for an illustration.
  \begin{figure}
      \centering
      \includegraphics[width=0.5\linewidth]{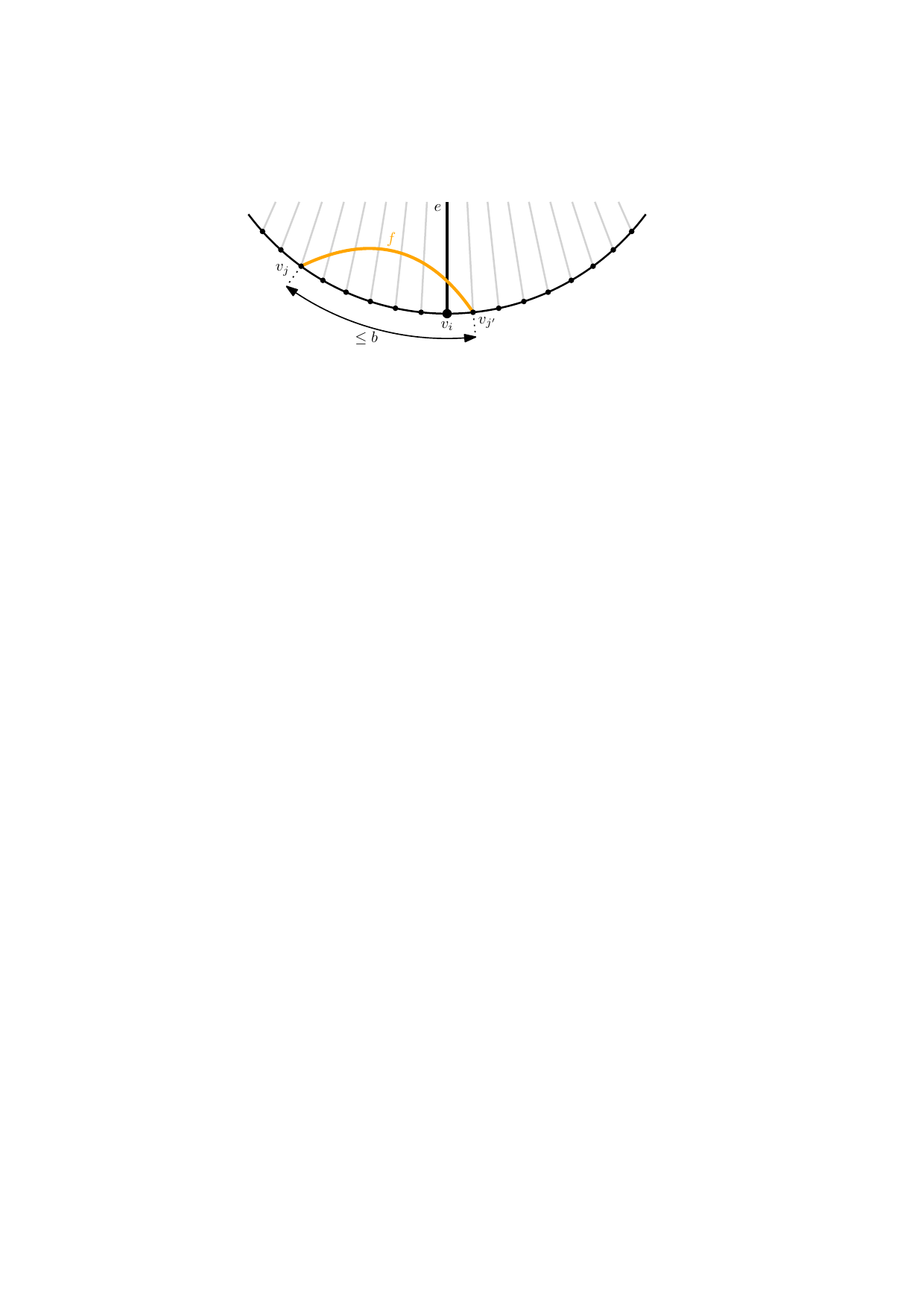}
      \caption{The figure depicts edges $e = \{w, v_i\}$ and $f \in E(T)$ in $D$.}
      \label{fig:edge-stretch}
  \end{figure}
  Now, consider the vertex set $S_i = \{v_{i'} : |i - i'| \le b - 1\}$.
  Since $T$ is a tree, $T[S_i]$ is a forest, and hence $T[S_i]$ has at most $|S_i| - 1 \le 2b-2$ edges.
  Thus, $e$ crosses at most $2b-2$ edges in $D$.

  Let $f = \{v_i, v_j\} \in E(T)$ with $i < j$.
  The edge $f$ crosses exactly $j - i - 1 \le b - 1$ edges incident to $w$ in $D$.
  Moreover, $f$ crosses an edge $f' = \{v_{i'}, v_{j'}\} \in E(T)$ if and only if $i' < i < j'$ or $i' < j < j'$.
  Similarly to the previous discussion, there are at most $2b-2$ edges $f'$ satisfying $i' < i < j'$.
  This implies that there are at most $4b-4$ edges in $E(T)$ that cross~$e$.
  Hence, there are at most $5b-5$ crossings involving $f$ in $D$.
\end{proof}

\thmOuterKPlanarWHardness*
\label{thm:outer-k-planar-w-hardness*}

\begin{proof}
  \newcommand{\CliquePath}{\mathrm{CP}}
  As in the proof of \cref{thm:two-sided-k-planar-w-hardness}, we show the claim by reducing from \Bandwidth.

  First, we define a gadget called a \emph{clique path}, denoted $\CliquePath(t, \ell)$, for every integer $t > 1$ and every odd number $\ell > 1$.
  Let $H_1, H_2, \dots, H_{\ell-1}$ be cliques of $t$ vertices and, for $i \in [\ell - 1]$, let $v_{i, 1}, v_{i, 2}, \dots, v_{i, t}$ be the vertices of~$H_i$.
  Then, $\CliquePath(t, \ell)$ is obtained by identifying $v_{i,t}$ and $v_{i+1, 1}$ for each $i \in [\ell - 2]$; see \cref{fig:outer-k-planarity-w-hardness-clique-path}.
  We call the $\ell$ vertices $v_{1, 1}, v_{2, 1}, \dots, v_{\ell-1, 1}, v_{\ell-1, t}$ \emph{anchor points}.
  As $\ell$ is odd, $(\ell-1)/2$ is an integer, and the vertex $v_{(\ell-1)/2, t}$ (and hence $v_{(\ell+1)/2,1}$) separates $\CliquePath(t, \ell)$ evenly:
  Each connected component after removing $v_{(\ell-1)/2, t}$ has exactly $(\ell-1)/2$ anchor points.
  We refer to this vertex as the \emph{middle vertex} of $\CliquePath(t, \ell)$.
  By appropriately choosing $t$, $\ell$, and $k$, the vertices of $\CliquePath(t, \ell)$ appear consecutively in any outer $k$-planar drawing.
  Intuitively, the clique path behaves as a single vertex, and its anchor points emulate the end vertices of $\ell$ parallel edges.

  \begin{figure}[h]
    \centering
    \includegraphics[page=1]{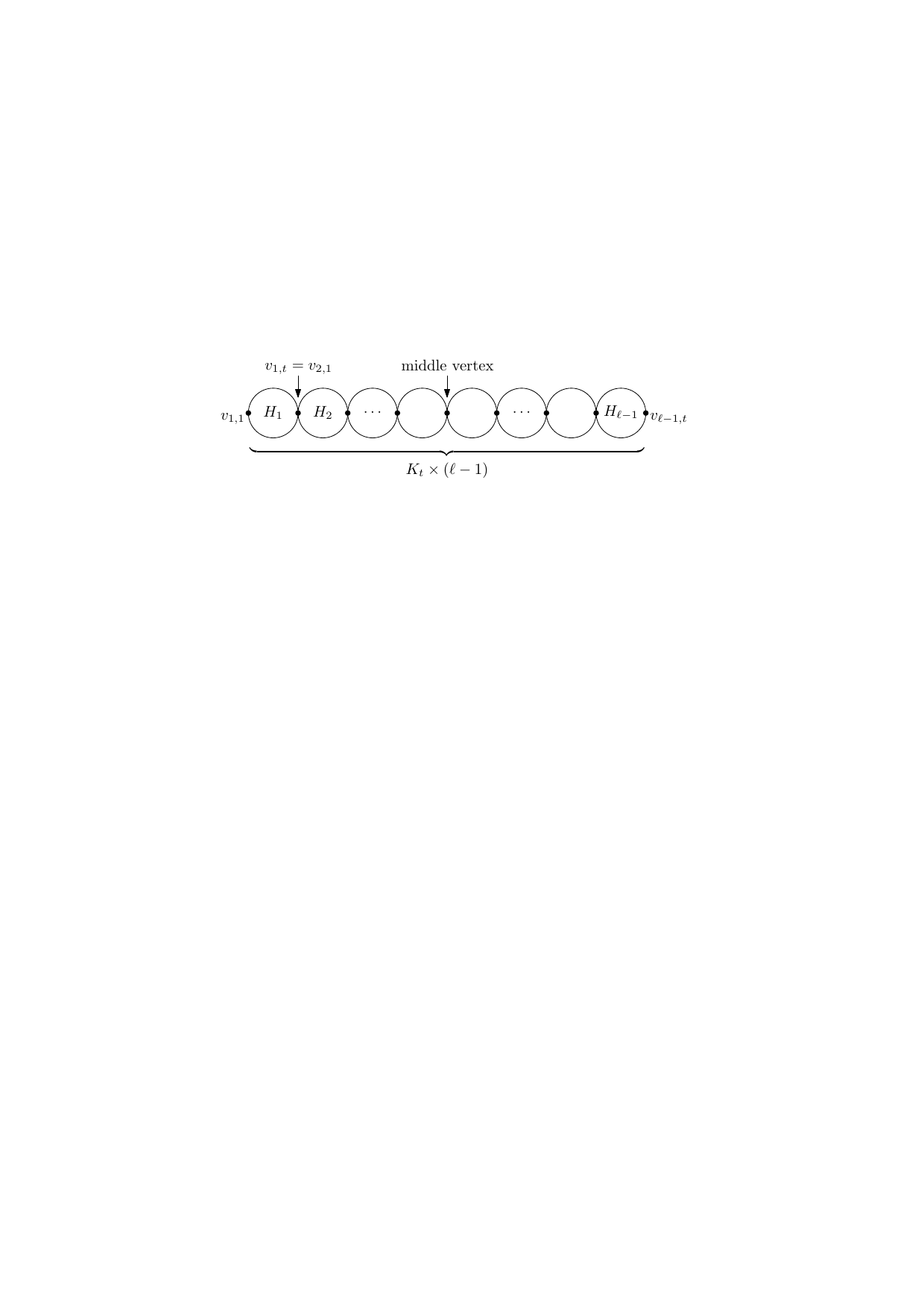}
    \caption{The clique path $\CliquePath(t, \ell)$ with $\ell$ anchor points.}
    \label{fig:outer-k-planarity-w-hardness-clique-path}
  \end{figure}

  Now we construct an instance $(G, k)$ of \OuterkPlanarity from an instance $(T, b)$ of \Bandwidth.
  Without loss of generality, we assume that $b \geq 3$.
  We let $t = 4(b^2+1) + 2$, $\ell = 4b^3 + 1$, and $k = ((t-2)/2)^2 = 4(b^2+1)^2 = 4b^4 + 8b^2 + 4$.
  Let us note that a clique of $t$ vertices admits an outer $k$-planar drawing.
  Moreover, each vertex in the clique is incident to an edge that has exactly $k$ crossings in any outer $k$-planar drawing.
  Starting with an empty graph $G$, we add a clique path $\CliquePath(t, \ell)$ for each vertex $v \in V(T)$ and denote it by $\CliquePath_v$.
  Then, we add a vertex $w$ with edges connecting to all anchor points in $G$.
  Lastly, for each edge $\{u, v\} \in E(T)$, we add an edge between the middle vertices of $\CliquePath_u$ and $\CliquePath_v$.
  
  Suppose that there is a linear order $(v_1, v_2, \dots, v_{n})$ of $V(T)$ with bandwidth at most~$b$.
  Then, we construct a cyclic order of $V(G)$ by aligning $(w, \CliquePath_{v_1}, \CliquePath_{v_2}, \dots, \CliquePath_{v_{n}})$ in this order, where the inner order of the vertices of each clique path $\CliquePath_{v_i}$ is 
  \begin{align}\label{eq:clique-path}
    (v_{1, 1}, v_{1, 2}, \dots, v_{1, t}, v_{2, 1}, \dots, v_{2,t}, \dots, v_{\ell-1, 1}, \dots, v_{\ell-1, t}).
  \end{align}
  Observe that the edges of a clique in a clique path only cross the edges in the same clique, and their crossing numbers are at most $((t-2)/2)^2 = k$.
  Thus, in the rest of the proof, we can ignore the crossings involved in the edges of the cliques.
  As in the proof of \cref{lem:okp:app-hard:backward}, $G$ has two types of edges: The edges incident to $w$ and the edges between two middle vertices, which corresponds to edges of $T$.
  Following the same analysis as in \cref{lem:okp:app-hard:backward}, each edge of the first type crosses at most $2b - 2$ edges in $D$.
  Let $e$ be an edge of the second type that connects the middle vertices of $\CliquePath_{u}$ and $\CliquePath_{v}$ for some $u, v \in V(T)$.
  This edge $e$ crosses at most $4b-4$ edges of the second type in $D$.
  Moreover, there are at most $2(\ell-1)/2 + \ell(b-1) = \ell b - 1$ anchor points between the middle vertices in $D$, each of which has an incident edge of the first type that crosses $e$.
  Hence, $e$ crosses at most $\ell b + 4b - 5 = 4b^4 + 5b - 5 < k$ edges in total.
  Therefore, $D$ is an outer $k$-planar drawing of $G$.

  To show the other direction, we first observe that the vertices in each clique path $\CliquePath_{v}$ appear consecutively as (\ref{eq:clique-path}) in any outer $k$-planar drawing of $G$.
  We say that two circular drawings $D = (v_1, \dots, v_h)$ and $D' = (v'_1, \dots, v'_h)$ of a graph are \emph{isomorphic} if the mapping $v_i \mapsto v'_i$ is an automorphism of the graph.

  \begin{claim}\label{claim:outer-k-planarity-w-hardness-clique-path-consecutive}
    Let $D$ be an outer $k$-planar drawing of $G$.
    Then, for each $v \in V(T)$, the vertices in the clique path $\CliquePath_{v}$ appear consecutively as (\ref{eq:clique-path}) in $D$, which is unique up to isomorphism.
  \end{claim}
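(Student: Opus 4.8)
The plan is to derive everything from one elementary fact about cliques: when $t$ is even, in \emph{any} placement of the $t$ vertices of a clique $K_t$ on a circle, each of the $t/2$ ``diameter'' edges -- the edge joining the two vertices that are cyclically antipodal among $V(K_t)$ -- is crossed by exactly $(t/2-1)^2 = ((t-2)/2)^2 = k$ other edges of $K_t$, and these $t/2$ diameters cover every vertex of $K_t$ exactly once. This count depends only on the cyclic order of $V(K_t)$, not on where the remaining vertices of $G$ sit. Hence, in the fixed outer $k$-planar drawing $D$, for every clique $H$ contained in some $\CliquePath_v$, each diameter edge of $H$ already attains $k$ crossings inside $E(H)$, so by $k$-planarity of $D$ it is crossed by no edge of $G$ outside $E(H)$. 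This ``rigidity of diameter edges'' is the workhorse of the whole argument.

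First I would prove that each clique $H = H_i$ of a clique path appears consecutively in $D$. Its $t$ vertices cut the circle into $t$ arcs. A short pigeonhole computation shows that any two distinct arcs are ``separated'' by some diameter edge of $H$ (one of its endpoint-vertices lies on one side of the edge, the other on the other side). Therefore no edge of $G$ can join a vertex lying in one arc to a vertex lying in a different arc, as it would cross that diameter edge; equivalently, every connected component of $G - V(H)$ lies inside a single arc. It then remains to observe that $G - V(H)$ is \emph{connected}: deleting the $t$ vertices of $H$ removes at most two of the $\ell \ge 3$ anchor points of $\CliquePath_v$, so the apex vertex $w$ stays adjacent to an anchor point of every clique path and of every surviving piece of $\CliquePath_v$, which glues $G - V(H)$ together. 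Consequently all of $G - V(H)$ lies in a single arc, i.e.\ $V(H)$ is consecutive in $D$; denote this block $B_i$.

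Next I would show that the blocks assemble into the order~(\ref{eq:clique-path}). The same rigidity gives that every anchor point lies at an \emph{endpoint} of the block of the clique containing it: an anchor point $\alpha$ is adjacent to $w$, which is outside that block, so if $\alpha$ were interior the edge $w\alpha$ would cross a diameter edge of that clique. In particular the two ends of $B_i$ are exactly the junction/anchor vertices $v_{i,1}$ and $v_{i,t}$, so $B_i$ reads $(v_{i,1}, \pi_i, v_{i,t})$ for some ordering $\pi_i$ of the internal vertices $v_{i,2}, \dots, v_{i,t-1}$. Since $B_i$ and $B_{i+1}$ overlap in the single vertex $v_{i,t} = v_{i+1,1}$, which is a common endpoint of both consecutive blocks, $B_{i+1}$ must be attached to $B_i$ precisely at that vertex (attaching on the other side would force $B_{i+1}$ to contain a further vertex of $B_i$), while $B_i \cap B_j = \emptyset$ for $|i-j| \ge 2$; hence the blocks $B_1, \dots, B_{\ell-1}$ occur in this order around the circle and their union is one consecutive block reading $(v_{1,1}, \pi_1, v_{1,t}{=}v_{2,1}, \pi_2, \dots, \pi_{\ell-1}, v_{\ell-1,t})$. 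This is exactly~(\ref{eq:clique-path}) up to reversing the cyclic order and permuting, within each clique, the internal vertices; since the internal vertices of a clique are pairwise exchangeable (each is adjacent precisely to the other vertices of its clique and to nothing else), such a permutation is an automorphism of $G$, so the order is unique up to isomorphism.

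The hard part will be Step~1: stating the diameter-edge lemma so that the crossing number is \emph{exactly} $k$ and genuinely independent of the positions of the other vertices, and then cleanly establishing the two supporting facts -- that distinct arcs are separated by a diameter edge, and that $G - V(H)$ is connected (the connectivity bookkeeping is where the chosen parameters $t, \ell \gg b$ and the structure of the clique path are actually used). Once Step~1 is available, Step~2 is essentially the same lemma applied to the edges at $w$, together with elementary reasoning about how intervals with prescribed pairwise intersections can lie on a circle.
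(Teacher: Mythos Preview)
Your proposal is correct and rests on the same observation the paper uses: in any circular placement of a $t$-clique with $t$ even, each of the $t/2$ antipodal (``diameter'') edges is crossed by exactly $((t-2)/2)^2 = k$ other clique edges and hence can be crossed by nothing else in a $k$-planar drawing; the paper calls these \emph{critical edges}. The execution differs somewhat. For consecutiveness of each clique, the paper argues locally: it picks a foreign vertex $w'$ trapped between two nearest $H_i$-vertices $u_p,u_q$, routes a path from $w$ to $w'$ avoiding $H_i$, forces the critical edges at $u_p$ and $u_q$ to point away from $w'$, and then overcounts the vertices of $H_i$. You argue globally: the saturated diameters forbid any non-$H_i$ edge from joining two distinct arcs, and since $G-V(H_i)$ is connected (a fact the paper also uses, tacitly, in the line ``there is a path between $w$ and $w'$ that avoids vertices in $H_i$''), all of it lies in a single arc. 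Your global version is arguably cleaner and makes the connectivity hypothesis explicit. For Step~2 the paper's argument is rather terse, whereas your interval-overlap reasoning (anchor points forced to block endpoints by the edge to $w$, then adjacent blocks glued at their shared endpoint) spells everything out; both are valid, and your remark that permuting internal clique vertices is a $G$-automorphism is exactly what ``unique up to isomorphism'' needs.
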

  \begin{claimproof}
    Let $D^* = (w, u_1, \dots, u_h)$ be the subdrawing of $D$ induced by $w$ and the vertices in~$\CliquePath_v$.
    In the following, $D$ and $D^*$ are considered to be linear orders starting from $w$.
    For each $u_j$, there is an incident edge that crosses exactly $k$ edges of the same clique $H_i$ in $D^*$.
    We call such edges \emph{critical edges} in $H_i$.
    Observe that if an edge $e = \{u_p, u_q\}$ with $p < q$ is a critical edges in $H_i$, there are exactly $(t-2)/2$ vertices in $H_i$ between $u_p$ and $u_q$, not including $u_p$ or $u_q$. 
    This also implies that $e$ is a unique critical edge incident to $u_p$ and $u_q$.
    
    We now claim that the vertices in a clique $H_i$ of $\CliquePath_v$ appear consecutively in $D$.
    Suppose otherwise.
    Then there are two vertices $u_p$ and $u_q$ with $p < q$ that belong to the same clique~$H_i$ such that at least one vertex $w' \notin V(H_i)$ appears between them in $D$.
    We choose the largest $p$ and the smallest $q$ satisfying the above condition.
    Note that the vertex $w'$ may not belong to $\CliquePath_v$.
    As $w' \notin V(H_i)$, there is a path between $w$ and $w'$ that avoids vertices in $H_i$.
    Let $e$ and $e'$ be the (possibly identical) critical edges of $H_i$ incident to $u_p$ and $u_q$, respectively.
    If one of the critical edges $e$ and $e'$ jumps over $w'$, at least one edge of $P$ crosses this critical edge, which violates the $k$-planarity of $D$.
    Thus, the other end of $e$ appears before $w'$ in $D$ and the other end of $e'$ appears after $w'$ in $D$, that is, it holds that $e = \{u_p, u_{p'}\}$ for some $p' < p$ and $e' = \{u_q, u_{q'}\}$ for some $q < q'$.
    Since both $e$ and $e'$ are critical edges in $H_i$, there are exactly $(t-2)/2$ vertices between $u_p$ and $u_{p'}$ and exactly $(t-2)/2$ vertices between $u_{q}$ and $u_{q'}$, which are disjoint.
    This contradicts the fact that $H_i$ has $t$ vertices.

    We next claim that the vertices in $\CliquePath_v$ appear consecutively as (\ref{eq:clique-path}) in $D$.
    Suppose otherwise.
    Since all the vertices in $H_i$ are consecutive in $D$ for all $i$, they must be ordered as (\ref{eq:clique-path}) except for two extreme anchor points $v_{1, 1}$ and $v_{\ell-1, t}$.
    Suppose that $u_1 \neq v_{1,1}$.
    Let $e$ be the critical edge incident to $u_1$.
    Since $e$ cannot cross the edge $\{w, v_{1,1}\}$ in $D$ due to its criticality, the other end of $e$ appears before $v_{1,1}$.
    By considering the critical edge $e'$ incident to $v_{1,t}$, we can derive a contradiction similar to the one above.
  \end{claimproof}
  
  Now suppose that $G$ has an outer $k$-planar drawing $D = (w, w_1, \dots, w_N)$, where $N = |V(G)|$.
  By~\cref{claim:outer-k-planarity-w-hardness-clique-path-consecutive}, the vertices in each clique path appear consecutively in $D$ for each $1 \le i \le n$, that is, $D$ is of the form $(w, \CliquePath_{v_1}, \CliquePath_{v_2}, \dots, \CliquePath_{v_n})$.
  Let $\sigma$ be the linear order on $V(T)$ defined as $\sigma(v_i) = i$ for $v_i \in V(T)$.
  Consider an edge $\{v_i, v_j\} \in E(T)$ with $i < j$.
  As we discussed in the proof of \cref{thm:outer-k-planarity-np-hard}, the edge between the middle vertices of $\CliquePath_{v_i}$ and $\CliquePath_{v_j}$ must cross the edges that connect $w$ and anchor points between the middle vertices.
  The number of such anchor points is at most $k$ due to the $k$-planarity of $D$.
  Since there are at least $\ell(j - i - 1) + \ell - 1$ anchor points between them, we have $\ell(j - i) -1 \le k$.
  Therefore, it holds that $j - i \leq (k+1)/\ell = b + (8b^2-b+4)/(4b^3+1)$, which is strictly
  less than $b + 1$ if $b \geq 3$.
\end{proof}

} %

\end{document}